\numberwithin{equation}{section}
\newtheorem{proposition}{Proposition}[section]
\newtheorem{identity}{Identity}
\newtheorem{prop}{Proposition}[section]
\newtheorem{cor}{Corollary}[section]
\theoremstyle{remark}
\newtheorem{rem}{\sl Remark}
\newcommand{\bra}[1]{\langle\,#1\,|}
\newcommand{\ket}[1]{|\,#1\,\rangle}
\newcommand{\moy}[1]{\langle\,#1\,\rangle}
\def\tr{\operatorname{tr}}
\newcommand{\End}{\operatorname{End}}
\DeclareMathSymbol{\Alpha}{\mathalpha}{operators}{"41}
\DeclareMathSymbol{\Beta}{\mathalpha}{operators}{"42}
\DeclareMathSymbol{\Epsilon}{\mathalpha}{operators}{"45}
\DeclareMathSymbol{\Zeta}{\mathalpha}{operators}{"5A}
\DeclareMathSymbol{\Eta}{\mathalpha}{operators}{"48}
\DeclareMathSymbol{\Iota}{\mathalpha}{operators}{"49}
\DeclareMathSymbol{\Kappa}{\mathalpha}{operators}{"4B}
\DeclareMathSymbol{\Mu}{\mathalpha}{operators}{"4D}
\DeclareMathSymbol{\Nu}{\mathalpha}{operators}{"4E}
\DeclareMathSymbol{\Omicron}{\mathalpha}{operators}{"4F}
\DeclareMathSymbol{\Rho}{\mathalpha}{operators}{"50}
\DeclareMathSymbol{\Tau}{\mathalpha}{operators}{"54}
\DeclareMathSymbol{\Chi}{\mathalpha}{operators}{"58}
\DeclareMathSymbol{\omicron}{\mathord}{letters}{"6F}
\title{Correlation functions  by Separation of Variables: the  XXX spin chain}
\author[1]{Giuliano Niccoli\footnote{giuliano.niccoli@ens-lyon.fr}}
\author[2]{Hao Pei}
\author[2]{V\'{e}ronique Terras\footnote{veronique.terras@universite-paris-saclay.fr}}
\affil[1]{Univ Lyon, Ens de Lyon, Univ Claude Bernard, CNRS, 
Laboratoire de Physique, F-69342 Lyon, France}
\affil[2]{Université Paris-Saclay, CNRS,  LPTMS, 91405, Orsay, France}
\begin{document}

\maketitle

\begin{abstract} 
We explain how to compute correlation functions at zero temperature within the framework of the quantum version of the Separation of Variables (SoV) in the case of a simple model: the XXX Heisenberg chain of spin 1/2 with twisted (quasi-periodic) boundary conditions. We first detail all steps of our method in the case of anti-periodic boundary conditions. The model can be solved in the SoV framework by introducing inhomogeneity parameters. The action of local operators on the eigenstates are then naturally expressed in terms of multiple sums over these inhomogeneity parameters. We explain how to transform these sums over inhomogeneity parameters into multiple contour integrals. Evaluating these multiple integrals by the residues of the poles outside the integration contours, we rewrite this action as a sum involving the roots of the Baxter polynomial plus a contribution of the poles at infinity. We show that the contribution of the poles at infinity vanishes in the thermodynamic limit, and that we recover in this limit for the zero-temperature correlation functions the multiple integral representation that had been previously obtained through the study of the periodic case by Bethe Ansatz or through the study of the infinite volume model by the q-vertex operator approach. We finally show that the method can  easily be generalized to the case of a more general non-diagonal twist: the corresponding weights of the different terms for the correlation functions in finite volume are then modified,  but we recover in the thermodynamic limit the same multiple integral representation than in the periodic or anti-periodic case, hence proving the independence of the thermodynamic limit of the correlation functions with respect to the particular form of the boundary twist. 
\end{abstract}

\newpage
\tableofcontents
\newpage

\section{Introduction}

In this paper we introduce an approach to compute the correlation functions of the
quantum integrable lattice models that can be solved in the framework of the quantum Separation
of Variables (SoV) method \cite{Skl85,Skl85a,Skl90,Skl92,Skl95,Skl96}. We develop here our approach in the case of a very simple model: the XXX Heisenberg spin-1/2 chain with quasi-periodic boundary conditions.

While outstanding successes have been achieved concerning the exact determination of the spectrum of quantum integrable systems, the exact computation of the correlation functions still remains a substantially more complicated problem. In fact, nowadays, exact results for correlation functions are available only for a very restricted set of quantum integrable models.

In the framework of the Quantum Inverse Scattering Method  (QISM) and of the algebraic  version of the Bethe Ansatz (ABA) \cite{FadS78,FadST79,FadT79,Skl79,Skl79a,FadT81,Skl82,Fad82,Fad96,BogIK93L}, 
computations of zero-temperature correlation functions of some quantum integrable models,
like the Heisenberg XXZ spin-1/2 chain with periodic boundary conditions, have been developed in  \cite{KitMT99,MaiT00,KitMT00,KitMST02a,KitMST05a,KitMST05b,KitKMST07}. Unlike
previous methods based on the q-deformed KZ equations (the massless regime) and on the Baxter corner transfer matrix and q-vertex
operator techniques (the massive regime) \cite{JimM95L,JimMMN92,JimM96,JimKKKM95,JimKKMW95}, the ABA approach can be directly applied to finite chains in a
constant magnetic field. Note that the approach of 
\cite{KitMT99,MaiT00,KitMT00,KitMST02a,KitMST05a,KitMST05b,KitKMST07} relies mainly on three essential ingredients: i. the expression of local operators in terms of the elements of the quantum monodromy matrix (solution of the quantum inverse problem) \cite{KitMT99,MaiT00,GohK00}, which enables one to compute their action on the Bethe states by means of the Yang-Baxter commutation relations, ii. the use of a compact determinant representation for the scalar products of the so-called Bethe states in terms of the Bethe roots (Slavnov's scalar product formula) \cite{Sla89} and iii. a precise description of the configuration of Bethe roots for the ground state, given in the thermodynamic limit in terms of a density function solution of an integral equation on an interval of the real axis \cite{YanY66,YanY66a,Koz18}.
Further developments of this ABA approach also led to the numerical computation of dynamical structure factors \cite{CauM05} (quantities that are directly
accessible experimentally through neutron scattering \cite{KenCTVHST02}) and to the analytical asymptotic study at long distances of the two-point or multi-point functions in the thermodynamic limit \cite{KitKMST09a,KitKMST09b,KitKMST09c,KozMS11a,KozMS11b,KozT11,KitKMST11a,KitKMST11b,KitKMST12,KitKMT14}.
Correlation functions can  also be computed in the temperature case by the use of
the so-called Quantum Transfer Matrix tools \cite{GohKS04,GohKS05,BooGKS07,GohS10,DugGKS15,GohKKKS17}.
Let us also mention the existence of an alternative algebraic approach to correlation functions, in relation with a hidden Grassmann structure \cite{BooJMST05,BooJMST06,BooJMST06a,BooJMST06b,BooJMST06c,BooJMST07,BooJMST09,JimMS09,JimMS11}. See also some subsequent papers on correlation functions \cite{MesP14,Poz17}.

Let us however stress that these results have essentially been obtained for very simple models such as the XXZ spin chain or the quantum non-linear Schr\"odinger model with periodic boundary conditions. For more complicated integrable models or different type of boundary conditions, the situation may become much more cumbersome. 
On the one hand, it may happen that some physically interesting integrable models are not directly solvable by ABA, as for instance the open XXZ spin chain with general\footnote{For $z$-oriented boundary magnetic fields, the model is solvable by ordinary Bethe Ansatz or by the $q$-vertex operator approach, and there exist exact representations for the correlation functions \cite{JimKKKM95,JimKKMW95,KitKMNST07,KitKMNST08}.} boundary magnetic fields. In that case,  other methods have to be used to construct the transfer matrix eigenstates, but they are still under development in what concerns the computation of correlation functions\footnote{Here we are referring  not only to the Separation of Variables --- the subject of the present article --- but also to an interesting modification of the Bethe Ansatz (the so-called {\em modified algebraic Bethe Ansatz}), introduced in \cite{BelC13,Bel15,BelP15,AvaBGP15} and developed further in \cite{BelP15b,BelP16,BelS19a} in what concerns the computations of scalar products of Bethe states, a first step towards correlation functions.
Let us also mention in this context the so-called {\em off-diagonal Bethe Ansatz}  which was proposed to describe the spectrum of models without U(1) symmetry \cite{WanYCS15L} (the corresponding eigenstates being anyway constructed through SoV). }. 
On the other hand, even for models for which standard ABA is in principle applicable and for which the spectrum and eigenstates are known, the generalization of one of the above outlined essential ingredients for the computation of correlation functions is often missing. In particular, the obtention of a generalization of the Slavnov's formula \cite{Sla89} for the scalar products of Bethe states, and more generally of a similar determinant representation for the matrix elements of local operators, as in \cite{KitMT99},  may be a very difficult problem if the combinatorial structure of the Bethe states is too involved. This is for instance the case in the XYZ model, for which first results about scalar products within ABA were obtained only very recently in \cite{SlaZZ20} (but for which the obtention of a compact formula for matrix elements of local operators in finite volume remains an open problem\footnote{Note however that some integral representations for the correlation functions could be obtained in this case in \cite{LasP98,Las02} by means of the q-vertex operator approach directly in the infinite volume limit.}), using the fact that  the scalar products of on-shell/off-shell Bethe vectors can be characterized as solutions to a system of linear equations, as initially proposed in \cite{BelS19}. This is also the case for models based on higher rank algebras, see for instance the works \cite{BelPRS12a,BelPRS12b,BelPRS13,BelPRS13a,PakRS14,PakRS14a,PakRS15,PakRS15a,PakRS15b}.

For quantum integrable models in the QISM framework, the limitation of the range of applicability of the ordinary ABA can be notably overcome by the use of SoV, which appears to have a much wider range of applicability. In fact, the latter approach has by now been systematically developed for rank one integrable quantum models \cite{BabBS96,Smi98a,Smi01,DerKM01,DerKM03,DerKM03b,BytT06,vonGIPS06,FraSW08,AmiFOW10,NicT10,Nic10a,Nic11,FraGSW11,GroMN12,GroN12,Nic12,Nic13,Nic13a,Nic13b,GroMN14,FalN14,FalKN14,KitMN14,NicT15,LevNT16,NicT16,KitMNT16,JiaKKS16,KitMNT17,MaiNP17}
and more recently widely extended even to higher rank cases in \cite{MaiN18,RyaV19,MaiN19,MaiN19d,MaiN19b,MaiN19c,MaiNV20,RyaV20}, see
also \cite{Skl96,Smi01,MarS16,GroLS17} for previous developments. 
Moreover, the use of SoV has several other advantages, notably the fact that the completeness of the transfer matrix spectrum is a  built-in feature.
Another advantage with respect to the ABA approach concerns the fact that scalar products of {\em separate states}\footnote{A class of states with factorized wave-functions in the SoV basis, which notably includes the eigenstates of the transfer matrix.} can be generically expressed in the form of determinants, at least for rank one models\footnote{Determinant formulae for scalar products of separate states have also emerged recently in \cite{MaiNV20b} for the higher rank gl(3) case, under special choices of the conserved charges generating the SoV bases. Let us also mention the interesting and recent papers \cite{CavGL19,GroLRV20}, also dealing with the computations of higher rank scalar products in a related SoV framework.}, \cite{GroMN12,Nic12,Nic13,Nic13a,Nic13b,GroMN14,FalN14,FalKN14,LevNT16,MaiNP17}.
Nevertheless, despite the impressive range of applicability of SoV,  a general approach to
correlation functions is so far missing within this approach. In fact,
there are only very few results on correlation functions deduced by the use
of  SoV, see for example \cite{BabBS96}. 

In fact, the main difficulties for the computation of physical quantities such as correlation functions in the SoV approach come from the fact that, for the method to apply, one has to deform the model by inhomogeneity parameters. The spectrum and eigenstates of the deformed model, as well as the determinant representations for the scalar products of separate states, are then characterized in terms of these inhomogeneity parameters. Coming back to the original physical model, i.e. having a description of the spectrum and a representation of the scalar products in which the homogeneous limit can be taken naturally, may not be an easy task. At the level of the spectrum, it usually means that one should transform the discrete SoV description into a more conventional one, for instance in terms of Q-functions solving TQ-equations of Baxter's type \cite{Bax82L}. This may sometimes be simply done by polynomial interpolation, as in the case of the quasi-periodic XXX model  \cite{KitMNT16,MaiN18}, but the situation may be more complicated if the resulting Q-function (i.e. the corresponding eigenvalues of the Q-operator) have no longer the same functional form as the usual functions of the model, as for instance in the anti-periodic XXZ case (see  \cite{BatBOY95} for the construction of the Q-operator and \cite{NicT15} for a proof of the equivalence with the SoV description of the spectrum), in the quasi-periodic XYZ case (see \cite{LevNT16,NicT16}), or in the case of open chains with general boundary fields, where only incomplete results could be obtained so far \cite{LazP14,KitMN14}.
Note that this difficulty may be overcome, as initially suggested in the context of the so-called off-diagonal Bethe Ansatz \cite{WanYCS15L}, by the consideration of TQ-equations with an additional term (called {\em inhomogenous} TQ-equations\footnote{Let us mention here that such inhomogeneous TQ-equations also appear naturally in the context of the modified algebraic Bethe Ansatz \cite{BelC13,Bel15,BelP15,AvaBGP15,BelP15b,BelP16,BelS19a,BelF18}.}) so that the Q-function is still a (usual, trigonometric, elliptic\ldots) polynomial and that its equivalence with the SoV description of the spectrum can still easily be proven by mere polynomial interpolation (see for instance \cite{KitMN14} for a proof of such a reformulation in the open XXZ chain with completely general boundary fields). This presents the clear advantage of providing a description of the spectrum in cases in which such a description in terms of a usual TQ-equation is still unknown,  as in \cite{KitMN14}. However, the disadvantage is that the presence of the inhomogeneous term complicates a priori drastically the analysis of the Bethe roots configurations, of their behavior in the thermodynamic limit and of the control of the finite-size corrections (see nevertheless \cite{BelF18} for a numerical investigation of these Bethe roots in a particular case for which a comparison with solutions of usual Bethe equations is possible).
As for the scalar products, they could be transformed into determinants of Slavnov's type\footnote{i.e. into determinants which have similar forms as the one obtained in \cite{Sla89}, with in particular rows and columns labelled by the roots of the corresponding Q-function (and no longer by the inhomogeneity parameters as those obtained naturally by SoV).} in \cite{KitMNT16} in the XXX case (see also \cite{KitMNT17,KitMNT18} for open spin chains with some constraints on the boundary), but one should mention that the generalization of these transformations to the anti-periodic XXZ case is already not so obvious \cite{PeiT20}.
One should also mention that an explicit computation of the correlation functions as was done   through other approaches in \cite{JimM95L,KitMT99} implies several other non-trivial steps (computation of the multiple action of a product of local operators on eigenstates, analyzing the obtained formulas in the thermodynamic limit\ldots) that have not been tackled so far within the SoV approach, even in a simple model such as the XXX spin chain.

This is the purpose of the present article to fill this gap: we explain here how to compute the correlation functions within the SoV approach, hence showing that it is possible to fully overcome the intrinsic difficulty of the approach related to the apparent omnipresence of the inhomogeneity parameters. 
Our method demands as pre-requirements the transfer matrix complete spectrum characterization (for instance in terms of Q-functions solving a Baxter TQ-equation), suitable determinant  representations for the scalar products of the separate states, and the reconstruction of the local operators in the SoV representation.
We develop here our method in the case of the XXX spin chain, but we expect it to be adaptable to other models for which the three aforementioned pre-requirements are fulfilled. 

The main steps of our method can be summarized as follows. 
We first compute the action of products of local operators on the transfer matrix eigenstates by using their reconstruction in terms of the SoV representation.
This results in multiple sums of separate states over the spectrum of the separate
variables. The latter being expressed in terms of the inhomogeneity parameters of the model, we need to reformulate these multiple sums into a more convenient form. To this aim, we transform them  into
multiple contour integrals that we can evaluate 
by their residues at the poles outside  the integration contours, as a sum involving the roots of the corresponding Baxter polynomial (the "Bethe roots") plus further
possible contributions like poles at infinity. Hence the correlation functions at zero-temperature, or more precisely their elementary building blocks (i.e. the mean values of any product of local operators in the ground state) can be rewritten as a sum over scalar products of particular separate states. Using the determinant formula for these scalar products and the
thermodynamic distribution of the ground state Bethe roots, we can analyze the thermodynamic behavior of each term of the sum, showing that many of them actually vanish in the thermodynamic limit. The non-vanishing terms can then be rewritten in the form of multiple integrals in this limit, as in \cite{JimM95L,KitMT99}.

As already mentioned, we implement our approach here by considering one of the simplest models solvable by SoV: the XXX spin 1/2 chain with twisted (quasi-periodic) boundary conditions.
For clarity, we choose to detail all steps of the methods in the specific case of anti-periodic boundary conditions, given by the twist matrix $\sigma^x$. In the last part of the paper, we explain how all these steps can be generalized in the case of a generic (non-diagonal) twist matrix $K$. We explicitly show that the thermodynamic limit of the zero-temperature correlation functions is invariant with respect to these quasi-periodic boundary conditions, i.e. with respect to the specific form of the twist matrix $K$, hence coinciding, in agreement with physical expectations, with the results obtained in the periodic case by Bethe Ansatz \cite{KitMT99} or through the study of the infinite volume model
by the q-vertex operator approach \cite{JimM95L}.

Let us stress here that these results are  interesting in their own, and not only for the method that we have developed. As already mentioned, this provides an explicit derivation, from exact computations on the finite lattice, of the fact that the correlation functions in the thermodynamic limit do not depend on the boundary conditions that we impose --- at least for quasi-periodic chains. 
Moreover, one has to point out that, contrary to what happens for the form factors of a single local operator \cite{KitMNT16}, the elementary building blocks for the correlation functions that we have computed here cannot in general be simply deduced from the corresponding ABA results by using the GL(2) symmetry of the model. Indeed, taken a non-diagonal
twist $K$ which is diagonalizable, then the GL(2) symmetry only implies that
the transfer matrix associated to the non-diagonal twist is similar to that
of the diagonal one. While this similarity relation allows one to compute the
spectrum of one transfer matrix in terms of the other one, it does not lead
to non-trivial relations between their elementary blocks. More precisely, an elementary block of size $m$ for the original transfer matrix
with non-diagonal twist is transformed into a sum of up to $4^{m}$ elementary
blocks for the similar transfer matrix with diagonal twist. Some of these elementary
blocks can be shown to be zero on the basis of the symmetry of the diagonal
model, but nevertheless in general one still need to consider a huge
sum of elementary blocks if one pretends to use ABA
methods, see  appendix~\ref{app-gauge}.

The paper is organized as follows.
After briefly introducing the anti-periodic XXX spin 1/2 chain in section~\ref{sec-model}, we recall the SoV solution of this model in section~\ref{sec-SOV}, and we more specifically describe the ground state of the model in section~\ref{sec-GS}.
In section~\ref{sec-cor-finite}, we explain how to compute the correlation functions, or more precisely their elementary building blocks (or in other words the density matrix elements of a segment of length $m$), for the finite size chain. More precisely, we derive the multiple actions of local operators on the transfer matrix eigenstates, which enables us to express the correlation functions as multiple sums over scalar products of some separate states. We recall the explicit determinant representation for these scalar products. 
In section~\ref{sec-cor-limit}, we consider the thermodynamic limit of the previous multiple sums for the  correlation functions in the ground state. We show that many terms of these sums vanish in the thermodynamic limit, and characterize the terms that remain finite in this limit. We hence recover, in this limit, the same selection rules as for the elementary building blocks of the periodic chain, and the same multiple
integral representations for the non-vanishing terms. 
In section~\ref{app-twist}, we explain how all this procedure can be adapted to the case of a more general non-diagonal boundary twist $K$, and show that it produces the same result for the elementary building blocks of the correlation functions in the thermodynamic limit, hence proving the independence of these thermodynamic limit expressions with respect to the particular form of the boundary twist $K$.
Finally, in appendix~\ref{app-gauge}, we make some comments about the transformation of the elementary building blocks for the correlation functions with respect to $GL(2)$ gauge transformations.

\section{The anti-periodic XXX model}
\label{sec-model}

Let us consider the XXX Heisenberg chain of spin $1/2$,
\begin{equation}\label{Ham}
   H=\sum_{n=1}^N \big[\sigma_n^x\sigma_{n+1}^x+\sigma_n^y\sigma_{n+1}^y+\sigma_n^z\sigma_{n+1}^z-1\big].
\end{equation}
Here and in the following, $\sigma_n^a$, $a=x,y,z$, stand for the Pauli matrices at site $n$, acting on the local quantum spin space $V_n\simeq\mathbb{C}^2$.
We moreover impose twisted boundary conditions.
For simplicity, we shall mainly focus, until section~\ref{sec-cor-limit}, on the case of anti-periodic boundary conditions with twist matrix $\sigma^x$,
\begin{equation}
   \sigma_{N+1}^a= \sigma_1^x\,\sigma_{1}^a\,\sigma_1^x,\qquad a=x,y,z,
\end{equation}
but in section~\ref{app-twist} we shall also extend our study to the case of a more general twist matrix $K$.

The monodromy matrix of the inhomogeneous version of the XXX spin-1/2 chain is defined as
\begin{equation}\label{mon}
T_0(\lambda)=R_{0N}(\lambda-\xi_N)\dots R_{01}(\lambda-\xi_1)
=\begin{pmatrix} A(\lambda)&B(\lambda)\\
C(\lambda)&D(\lambda)\end{pmatrix}_{\! [0]},
\end{equation}
where $\lambda$ is the so-called spectral parameters, $\xi_1,\ldots,\xi_N$ are inhomogeneity parameters, and where $R(\lambda)$ is the $R$-matrix of the model. The latter is of the form
\begin{equation}\label{mat-R}
R(\lambda )= \left( 
\begin{array}{cccc}
\lambda +\eta & 0 & 0 & 0 \\ 
0 & \lambda & \eta & 0 \\ 
0 & \eta & \lambda & 0 \\ 
0 & 0 & 0 & \lambda +\eta
\end{array}
\right) ,
\end{equation}
where $\eta$ is an arbitrary non-zero complex parameter.
The transfer matrix of the model with anti-periodic boundary conditions is
\begin{equation}\label{transfer}
\mathcal{T}(\lambda)= \tr_0 \left[\sigma_0^x\, T_0(\lambda)\right]=B(\lambda)+C(\lambda).
\end{equation}
It is a polynomial in $\lambda$ of degree $N-1$, which moreover satisfies the symmetries
\begin{alignat}{2}
   &[S^x,\mathcal{T}(\lambda)]=0,\qquad & &S^x=\sum_{n=1}^N\sigma_n^x,
   \label{sym-Sx}\\
   &[\Gamma^x,\mathcal{T}(\lambda)]=0 \qquad & &\Gamma^x=\mathop{\otimes}_{n=1}^N  \sigma_n^x=(-i)^N\,\exp\left[\frac{i\pi}{2}\, S^x\right].
   \label{sym-Gammax}
\end{alignat}
In the homogeneous limit $\xi _{n}\rightarrow \eta/2$, $n=1,\ldots,N$, the
Hamiltonian \eqref{Ham} of the XXX spin 1/2 chain with anti-periodic boundary conditions is recovered in terms of a logarithmic derivative of the anti-periodic transfer matrix \eqref{transfer} as
\begin{equation}\label{Ham-transfer}
  \left. H=2\eta\, \mathcal{T}(\lambda )^{-1}\frac{d}{d\lambda }\mathcal{T}(\lambda )\right\vert _{\lambda =\eta/2} -2N.  
\end{equation}
The quantum determinant, which is a central element of the Yang-Baxter algebra, can be expressed as
\begin{align}
   {\det}_q T(\lambda)=a(\lambda)\, d(\lambda-\eta)
   &= A(\lambda)\, D(\lambda-\eta)- B(\lambda)\, C(\lambda-\eta)\nonumber\\
   &= D(\lambda)\, A(\lambda-\eta)- C(\lambda)\, B(\lambda-\eta), 
   \label{q-det}
\end{align}
with
\begin{equation}\label{a-d}
     a(\lambda)=\prod_{n=1}^N(\lambda-\xi_n+\eta),
     \qquad 
     d(\lambda)=\prod_{n=1}^N(\lambda-\xi_n).
\end{equation}

\section{Diagonalization of the transfer matrix by separation of variables}
\label{sec-SOV}

The diagonalization of the anti-periodic transfer matrix \eqref{transfer} was performed in \cite{Skl90,Skl92} by separation of variables. Here we briefly recall the main results of this construction (see also \cite{KitMNT16}).

Let us suppose that the inhomogeneity parameters $\xi_1,\ldots,\xi_N$ are generic, or at least that they satisfy the condition
\begin{equation}
\xi_a\neq \xi_b\pm h\eta \quad \text{for}\quad h\in\{0,1\},\qquad \forall a\neq b.
\label{generic}
\end{equation}
Then, there exist a basis\footnote{The explicit form of the SoV basis does not play any role in the computation of the correlation functions; so, we omit it and we refer to \cite{KitMNT16} for its explicit form.} $\{\ket{\mathbf{h}}, \mathbf{h}=(h_1,\ldots,h_N)\in\{0,1\}^N\}$ of $\mathcal{H}$ and a basis $\{\bra{\mathbf{h}}, \mathbf{h}=(h_1,\ldots,h_N)\in\{0,1\}^N\}$ of $\mathcal{H}^*$ such that
\begin{align}
   &D(\lambda )\,\ket{\mathbf{h}} =d_\mathbf{h}(\lambda)\, \ket{\mathbf{h}}
     =\prod_{n=1}^N (\lambda-\xi_n^{(h_n)})\,\ket{\mathbf{h}}, 
      \label{D-right}\\
   &C(\lambda )\,\ket{\mathbf{h}} 
      =\sum_{a=1}^N \delta_{h_a,1}\, d(\xi_a^{(1)})\,
        \prod_{b\neq a}
        \frac{\lambda -\xi_b^{(h_b)} }{  \xi_a^{(h_a)}-\xi_b^{(h_b)} }\,
         \ket{\mathrm{T}_a^-\mathbf{h} },
\label{C-right} \\
   & B(\lambda )\,\ket{\mathbf{h}} 
      =-\sum_{a=1}^N \delta_{h_a,0}\, a(\xi_a^{(0)})\,
         \prod_{b\neq a}
         \frac{ \lambda -\xi_b^{(h_b)} }{ \xi_a^{(h_a)}-\xi_b^{(h_b)} }\,
          \ket{\mathrm{T}_a^+\mathbf{h} },
\label{B-right}
\end{align}
and
\begin{align}
   &\bra{\mathbf{h}}\, D(\lambda ) =d_\mathbf{h}(\lambda)\, \bra{\mathbf{h}}
   =\prod_{n=1}^N (\lambda-\xi_n^{(h_n)})\,\bra{\mathbf{h}}, 
      \label{D-left}\\
   &\bra{\mathbf{h}}\, C(\lambda )
      =\sum_{a=1}^N \delta_{h_a,0}\, d(\xi_a^{(1)})\,
        \prod_{b\neq a}
        \frac{  \lambda -\xi_b^{(h_b)} }{  \xi_a^{(h_a)}-\xi_b^{(h_b)} }\,
        \bra{\mathrm{T}_a^+\mathbf{h} },
\label{C-left} \\
   &\bra{\mathbf{h}}\, B(\lambda )
      =-\sum_{a=1}^N \delta_{h_a,1}\, a(\xi_a^{(0)})\, 
         \prod_{b\neq a}
         \frac{  \lambda -\xi_b^{(h_b)} }{ \xi_a^{(h_a)}-\xi_b^{(h_b)} }\,
         \bra{\mathrm{T}_a^{-}\mathbf{h} }.
\label{B-left}
\end{align}
Here we have set
\begin{align}\label{xi-shift}
   & \xi_n^{(h_n)}=\xi_n-h_n\eta  \qquad \text{for}\quad
   h_n\in\{0,1\},
   \\
   & d_\mathbf{h}(\lambda)= \prod_{n=1}^N (\lambda-\xi_n^{(h_n)}),
   \label{d_h}
\end{align}
and
\begin{equation}
  \mathrm{T}_a^\pm(h_1,\ldots,h_N)=(h_1,\ldots,h_a\pm 1,\ldots,h_N).
\end{equation}
To determine the action of $A(\lambda)$ on $\ket{\mathbf{h}}$ and on $\bra{\mathbf{h}}$, one can use the quantum determinant relation \eqref{q-det}. By using the first line of \eqref{q-det} and \eqref{D-right}-\eqref{B-right} we obtain:
\begin{align}
   A(\lambda)\,\ket{\mathbf{h}}
   &=\frac{\det_q T(\lambda)+B(\lambda)\, C(\lambda-\eta)}{d_\mathbf{h}(\lambda-\eta)}\, \ket{\mathbf{h}}
   \nonumber\\
   &=\frac{\det_q T(\lambda)}{d_\mathbf{h}(\lambda-\eta)}\, \ket{\mathbf{h}}
   +\frac{B(\lambda)}{d_\mathbf{h}(\lambda-\eta)}\,
   \sum_{a=1}^N \delta_{h_a,1}\, d(\xi_a^{(1)})\,
        \prod_{\ell\neq a}
        \frac{\lambda -\eta-\xi_\ell^{(h_\ell)} }{  \xi_a^{(h_a)}-\xi_\ell^{(h_\ell)} }\,
         \ket{\mathrm{T}_a^-\mathbf{h} }
             \nonumber\\
   &=\frac{\det_q T(\lambda)}{d_\mathbf{h}(\lambda-\eta)}\, \ket{\mathbf{h}}
   -\frac{1}{d_\mathbf{h}(\lambda-\eta)}\,
   \sum_{a=1}^N \delta_{h_a,1}\, d(\xi_a^{(1)})\,
        \prod_{\ell\neq a}
        \frac{\lambda -\eta-\xi_\ell^{(h_\ell)} }{  \xi_a^{(1)}-\xi_\ell^{(h_\ell)} }\,
   \nonumber\\
   &\qquad\qquad\times
        \sum_{b=1}^N \delta_{(\mathrm{T}_a^- \mathbf{h})_b,0}\, a(\xi_b^{(0)})\,
         \prod_{\ell\neq b}
         \frac{ \lambda -\xi_\ell^{((\mathrm{T}_a^-\mathbf{h})_\ell)} }{ \xi_b^{(0)}-\xi_\ell^{((\mathrm{T}_a^-\mathbf{h})_\ell)} }\,
          \ket{\mathrm{T}_b^+\mathrm{T}_a^-\mathbf{h} } ,
          \label{A-right}
\end{align}
and by using the second line of \eqref{q-det} and \eqref{D-left}-\eqref{B-left}:
\begin{align}
   \bra{\mathbf{h}}\, A(\lambda)
   &=\bra{\mathbf{h}}\, \frac{\det_q T(\lambda+\eta)+C(\lambda+\eta)\, B(\lambda)}{d_\mathbf{h}(\lambda+\eta)}
   \nonumber\\
   &=\bra{\mathbf{h}}\,\frac{\det_q T(\lambda+\eta)}{d_\mathbf{h}(\lambda+\eta)}\, 
   +\sum_{a=1}^N \delta_{h_a,0}\, d(\xi_a^{(1)})\,
        \prod_{\ell\neq a}
        \frac{  \lambda+\eta -\xi_\ell^{(h_\ell)} }{  \xi_a^{(h_a)}-\xi_\ell^{(h_\ell)} }\,
        \bra{\mathrm{T}_a^+\mathbf{h} }\, \frac{B(\lambda)}{d_\mathbf{h}(\lambda+\eta)}
   \nonumber\\
   &=\bra{\mathbf{h}}\,\frac{\det_q T(\lambda+\eta)}{d_\mathbf{h}(\lambda+\eta)}\, 
   - \frac{1}{d_\mathbf{h}(\lambda+\eta)}\sum_{a=1}^N \delta_{h_a,0}\, d(\xi_a^{(1)})\,
        \prod_{\ell\neq a}
        \frac{  \lambda+\eta -\xi_\ell^{(h_\ell)} }{  \xi_a^{(0)}-\xi_\ell^{(h_\ell)} }\,
   \nonumber\\
   &\qquad\qquad\times
        \sum_{b=1}^N \delta_{(\mathrm{T}_a^+\mathbf{h})_b,1}\, a(\xi_b^{(0)})\,
        \prod_{\ell\neq b}
        \frac{  \lambda -\xi_\ell^{((\mathrm{T}_a^+ \mathbf{h})_\ell)} }{  \xi_b^{(1)}-\xi_\ell^{((\mathrm{T}_a^+ \mathbf{h})_\ell)} }\,
        \bra{\mathrm{T}_b^-\mathrm{T}_a^+\mathbf{h} }.
        \label{A-left}
\end{align}

We have
\begin{equation} \label{sc-hk}
   \moy{\mathbf{h}\, |\, \mathbf{k} } 
   =\frac{\delta _{\mathbf{h},\mathbf{k}}}{V(\xi_1^{(h_1)},\ldots,\xi_N^{(h_N)})},
\end{equation}
where, for any $n$-tuple $(x_1,\ldots,x_n)$, $V(x_1,\ldots,x_n)$ denotes the Vandermonde determinant
\begin{equation}\label{VDM}
      V(x_1,\ldots,x_n)= \prod_{\substack{i,j=1 \\ i<j}}^n (x_j-x_i).
\end{equation}

The eigenvalues $\tau(\lambda)$ of the transfer matrix \eqref{transfer}  are characterized by the fact that they are entire functions of $\lambda$ which can be written in the form
\begin{equation}\label{t-Q}
    \tau(\lambda)= \frac{-a(\lambda )\, Q(\lambda -\eta )+d(\lambda)\, Q(\lambda +\eta )}{Q(\lambda)}
\end{equation}
in terms of a polynomial $Q(\lambda)$ of the form
\begin{equation}\label{Q-form}
    Q(\lambda)=\prod_{j=1}^R(\lambda-\lambda_j),
    \qquad
    R\le N,
\end{equation}
for some set of roots $\lambda_1,\ldots,\lambda_R$ such that $\lambda_a\not=\xi_b$, $\forall a\in\{1,\ldots,R\}$, $\forall b\in\{1,\ldots,N\}$. For a given eigenvalue $\tau(\lambda)$ of the transfer matrix, the polynomial $Q$ satisfying these conditions is unique, and will therefore sometimes be denoted by $Q_\tau$.
The corresponding left and right eigenstates of \eqref{transfer} with eigenvalue $\tau(\lambda)$ are obtained in terms of $Q_\tau$ as the states of the form
\begin{align}
  &\bra{Q_\tau} 
  = \sum_{\mathbf{h}\in\{0,1\}^N}
     \prod_{n=1}^N Q_\tau(\xi_n^{(h_n)}) \  V(\xi_1^{(h_1)},\ldots,\xi_N^{(h_N)})\, \bra{\mathbf{h} },
     \label{eigenl}\\
   &\ket{ Q_\tau } 
  = \sum_{\mathbf{h}\in\{0,1\}^N}
     \prod_{n=1}^N \left\{ \left(-\frac{a(\xi_n)}{d(\xi_n-\eta)}\right)^{h_n} Q_\tau(\xi_n^{(h_n)}) \right\}
     \, V(\xi_1^{(h_1)},\ldots,\xi_N^{(h_N)})\, \ket{\mathbf{h} }  \nonumber\\
   &\hphantom{\ket{ Q_\tau } }
   = \sum_{\mathbf{h}\in\{0,1\}^N}
     \prod_{n=1}^N Q_\tau(\xi_n^{(h_n)}) \
     V(\xi_1^{(1-h_1)},\ldots,\xi_N^{(1-h_N)})\, \ket{\mathbf{h} }.  \label{eigenr}
\end{align}
Hence, the eigenvalues and eigenstates of the anti-periodic transfer matrix can be characterized in terms of the (admissible) solutions of the Bethe equations for the roots $\lambda_1,\ldots,\lambda_R$ of $Q(\lambda)$, imposing that the quantity \eqref{t-Q} is entire:
\begin{equation}\label{eq-Bethe}
  \mathfrak{a}_Q(\lambda_j)=1,
  \qquad j=1,\ldots,R,
\end{equation}
where
\begin{equation}\label{def-frak-a}
    \mathfrak{a}_Q(\lambda)=\frac{d(\lambda)}{a(\lambda)}\frac{Q(\lambda+\eta)}{Q(\lambda-\eta)}.
\end{equation}
Moreover, the eigenstates \eqref{eigenl}-\eqref{eigenr} can be written in the form of generalized Bethe states\footnote{i.e. states that have a similar form as the Bethe states obtained though ABA, in that they are given by the multiple action, on some particular state (hence playing the role of a reference state), of some operator entry of the monodromy matrix evaluated at the Bethe roots.} as
\begin{align}
  &\bra{Q_\tau}=(-1)^{RN} \bra{\mathsf{1}}\prod_{k=1}^RD(\lambda_k), \label{eigenl-bethe}\\
  &\ket{Q_\tau}=(-1)^{RN} \prod_{k=1}^RD(\lambda_k)\ket{\mathsf{1}},  \label{eigenr-bethe}
\end{align}
where
\begin{align}
   &\bra{\mathsf{1}}=\sum_{\mathbf{h}\in\{0,1\}^N} V(\xi_1^{(h_1)},\ldots,\xi_N^{(h_N)})\, \bra{\mathbf{h} },
   \label{dual-ref1}\\
   &\ket{\mathsf{1}}= \sum_{\mathbf{h}\in\{0,1\}^N} V(\xi_1^{(1-h_1)},\ldots,\xi_N^{(1-h_N)})\, \ket{\mathbf{h} },
   \label{ref1}
\end{align}
are eigenvectors of the transfer matrix \eqref{transfer} with eigenvalue $-a(\lambda)+d(\lambda)$.
Note that the eigenstates \eqref{eigenl-bethe}-\eqref{eigenr-bethe} can alternatively be written in the form
\begin{align}\label{eigenl-2}
  &\bra{Q_\tau}=(-1)^N  \frac{\prod\limits_{k=1}^R d(\lambda_k)}{\prod\limits_{k=1}^{N-R} d(\widehat{\lambda}_k)}\, \sum_{\mathbf{h}}
     \prod_{n=1}^N\left[ (-1)^{h_n} \widehat{Q}(\xi_n^{(h_n)}) \right]\,  V(\xi_1^{(h_1)},\ldots,\xi_N^{(h_N)})\, \bra{\mathbf{h} }
     \\
  &\hphantom{\bra{\Psi_\tau}}=(-1)^{RN}
    \frac{\prod\limits_{k=1}^R d(\lambda_k)}{\prod\limits_{k=1}^{N-R} d(\widehat{\lambda}_k)}\,
    \bra{\mathsf{1}_\mathrm{alt}}\prod_{k=1}^{N-R} D(\widehat{\lambda}_k), \label{eigenl-bethe-2}
    \displaybreak[0]\\
  &\ket{Q_\tau}=(-1)^N  \frac{\prod\limits_{k=1}^R d(\lambda_k)}{\prod\limits_{k=1}^{N-R} d(\widehat{\lambda}_k)}\, \sum_{\mathbf{h}}
     \prod_{n=1}^N\left[ (-1)^{h_n} \widehat{Q}(\xi_n^{(h_n)}) \right]\,  V(\xi_1^{(1-h_1)},\ldots,\xi_N^{(1-h_N)})\, \ket{\mathbf{h} }
     \label{eigenr-2}\\
  &\hphantom{\bra{\Psi_\tau}}=(-1)^{RN}
    \frac{\prod\limits_{k=1}^R d(\lambda_k)}{\prod\limits_{k=1}^{N-R} d(\widehat{\lambda}_k)}\,
    \prod_{k=1}^{N-R} D(\widehat{\lambda}_k)\, \ket{\mathsf{1}_\mathrm{alt}}, \label{eigenr-bethe-2}
\end{align}
where 
\begin{align}
   &\bra{\mathsf{1}_\mathrm{alt}}=\sum_{\mathbf{h}\in\{0,1\}^N}    \prod_{n=1}^N (-1)^{h_n}\, V(\xi_1^{(h_1)},\ldots,\xi_N^{(h_N)})\, \bra{\mathbf{h} },
   \label{dual-ref1alt}\\
   &\ket{\mathsf{1}_\mathrm{alt}}= \sum_{\mathbf{h}\in\{0,1\}^N}    \prod_{n=1}^N (-1)^{h_n}\,
   V(\xi_1^{(1-h_1)},\ldots,\xi_N^{(1-h_N)})\, \ket{\mathbf{h} },
   \label{ref1alt}
\end{align}
are eigenvectors of \eqref{transfer} with eigenvalue $a(\lambda)-d(\lambda)$,
and where
\begin{equation}\label{hatQ}
  \widehat{Q}(\lambda)\equiv\widehat{Q}_\tau(\lambda)=\prod_{j=1}^{N-R}(\lambda-\widehat{\lambda}_j)
\end{equation}
is the unique (up to normalization) polynomial solution with degree no more than $N$ of the TQ-equation with opposite signs:
\begin{equation}\label{t-Qop}
  \tau(\lambda)\,\widehat{Q}(\lambda)
  =a(\lambda)\, \widehat{Q}(\lambda-\eta)-d(\lambda)\, \widehat{Q}(\lambda+\eta).
\end{equation}
Equivalently, $\widehat{Q}_\tau(\lambda)=Q_{-\tau}(\lambda)$ can be seen as the solution of \eqref{t-Q} associated with the eigenvalue $-\tau(\lambda)$ of the transfer matrix, or $e^{i\frac{\pi}{\eta}\lambda}\widehat{Q}_\tau(\lambda)$ can be seen as the second (independent) solution of the TQ-equation \eqref{t-Q} associated with the eigenvalue $\tau(\lambda)$.
The two polynomials $Q(\lambda)\equiv Q_\tau(\lambda)$ and $\widehat{Q}(\lambda)\equiv \widehat{Q}_\tau(\lambda)=Q_{-\tau}(\lambda)$ satisfy the quantum wronskian relation:
\begin{equation}\label{W-rel}
    \hat{W}_{Q,\widehat{Q}}(\lambda)=d(\lambda),
\end{equation}
where
\begin{equation}\label{def-W}
  \hat{W}_{Q,\widehat{Q}}(\lambda)=\frac 12\left[ Q(\lambda)\, \widehat{Q}(\lambda-\eta)+ \widehat{Q}(\lambda)\, Q(\lambda-\eta)\right].
\end{equation}
This means in particular that, if $Q(\lambda)\equiv Q_\tau(\lambda)$ has degree $R$, then $\widehat{Q}(\lambda)\equiv \widehat{Q}_\tau(\lambda)=Q_{-\tau}(\lambda)$ has indeed degree $N-R$.

Note that the expressions \eqref{t-Q}, \eqref{Q-form}, \eqref{eq-Bethe}, \eqref{eigenl-bethe}-\eqref{eigenr-bethe} and  \eqref{eigenl-bethe-2}, \eqref{eigenr-bethe-2}, \eqref{hatQ}, \eqref{t-Qop} are now suitable for the consideration of the homogeneous limit $\xi_1,\ldots,\xi_N\to\eta/2$ (provided that the homogeneous limit of the states $\bra{\mathsf{1}}$, $\ket{\mathsf{1}}$ and  $\bra{\mathsf{1}_\mathrm{alt}}$, $\ket{\mathsf{1}_\mathrm{alt}}$ is well defined).
In this limit, one recovers the physical model \eqref{Ham} and the states $\bra{\Psi_\tau}$ \eqref{eigenl-bethe}, \eqref{eigenl-bethe-2} and $\ket{\Psi_\tau}$  \eqref{eigenr-bethe},  \eqref{eigenr-bethe-2} are eigenstates of the Hamiltonian with eigenvalue $E_\tau$ which can be expressed either in terms of the roots of $Q_\tau$ or of the roots of $\widehat{Q}_\tau$:
\begin{equation}\label{energy}
   E_\tau=\sum_{a=1}^R\frac{2\eta^2}{(\lambda_a-\eta/2)(\lambda_a+\eta/2)}
             =\sum_{a=1}^{N-R}\frac{2\eta^2}{(\widehat\lambda_a-\eta/2)(\widehat\lambda_a+\eta/2)}.
\end{equation}

\begin{rem}\label{rem-degeneracy}
Since, if $\tau(\lambda)$ is an eigenvalue of the transfer matrix $\mathcal{T}(\lambda)$, $-\tau(\lambda)$ is also an eigenvalue (which is different from the previous one\footnote{Note that  $\tau(\lambda)$ cannot be identically zero (even in the homogeneous limit) due to the fact that it satisfies the relations $\tau(\xi_n)\,\tau(\xi_n-\eta)=-a(\xi_n)\, d(\xi_n-\eta)\not=0$.}), the spectrum of the Hamiltonian \eqref{Ham} obtained from \eqref{Ham-transfer} is doubly degenerated, with energy given in terms of the roots of $ Q_\tau(\lambda)$ or of $\widehat{Q}_\tau(\lambda)= Q_{-\tau}(\lambda)$ as in \eqref{energy}.
\end{rem}

\begin{rem}\label{rem-sum-rule}
From the quantum wronskian relation \eqref{W-rel}-\eqref{def-W}, one can derive several relations between the roots $\lambda_j$, $j=1,\ldots,R$ of $Q(\lambda)\equiv Q_{\tau}(\lambda)$ and the roots  $\widehat{\lambda}_j$, $j=1,\ldots,N-R$ of $\widehat{Q}(\lambda)\equiv \widehat{Q}_{\tau}(\lambda)=Q_{-\tau}(\lambda)$. In particular, we have the sum rule:
\begin{equation}\label{sum-rule}
   \sum_{n=1}^N(\xi_n-\eta/2)=\sum_{j=1}^R\lambda_j+\sum_{j=1}^{N-R}\widehat{\lambda}_j.
\end{equation}
\end{rem}

\begin{rem}
 The eigenstates $\ket{Q_\tau}$ of the anti-periodic transfer matrix are also eigenstates of the symmetry operators $S^x$ \eqref{sym-Sx} and $\Gamma^x$ \eqref{sym-Gammax}:
 \begin{equation}\label{eigen-sym}
     S^x\,\ket{Q_\tau}=(N-2R)\,\ket{Q_\tau},
     \qquad
     \Gamma^x\,\ket{Q_\tau}=(-1)^R\,\ket{Q_\tau}.
 \end{equation}
\end{rem}

\section{Description of the ground state}
\label{sec-GS}

Let us now discuss the description of the ground state of the anti-periodic XXX chain \eqref{Ham} in terms of the solution of the Bethe equations \eqref{eq-Bethe}.

We now consider the homogeneous limit $\xi_1,\ldots,\xi_N\to\eta/2$, and we set for convenience $\eta=-i$. 
The Bethe equations \eqref{eq-Bethe} then take the form
\begin{equation}\label{Bethe-hom}
  \left(\frac{i/2-\lambda_j}{i/2+\lambda_j}\right)^N
  \prod_{k=1}^R\frac{i+\lambda_j-\lambda_k}{i-\lambda_j+\lambda_k}=(-1)^{N-R},
  \qquad j=1,\ldots,R,
\end{equation}  
and the energy \eqref{energy} associated with a configuration of Bethe roots $\{\lambda_j\}_{1\le j\le R}$ is
\begin{equation}\label{energy-i}
   E(\{\lambda_j\}_{1\le j\le R})=\sum_{a=1}^R \epsilon(\lambda_a),
   \qquad\text{with}\quad \epsilon(\lambda)=-\frac{2}{\lambda^2+1/4}.
\end{equation}
We can show similarly as in \cite{BabVV83} that the complex roots appear by pairs $z,\bar{z}$ for a solution with much more real roots than complex roots\footnote{i.e. where the number of real roots is more than twice the number of complex roots.}.

For real roots $\lambda_j$, it is convenient, as in the periodic case, to rewrite the Bethe equations \eqref{Bethe-hom} in logarithmic form:
\begin{equation}\label{Bethe-log}
   \widehat{\xi}_Q(\lambda_j)=\frac{2 n_j-N+R}{N}\pi,
   \qquad n_j\in\mathbb{Z},
\end{equation}
where $\widehat{\xi}_Q(\lambda)$ is the counting function associated with a configuration of Bethe roots $Q$,
\begin{equation}\label{counting}
    \widehat{\xi}_Q(\lambda)= \frac{i}{N}\log\left( (-1)^{N-R}\, \mathfrak{a}_Q(\lambda)\right)
   =p(\lambda)+\frac 1 N \sum_{k=1}^R\theta(\lambda-\lambda_k)
\end{equation}
with
\begin{alignat}{2}
    &p(\lambda)=i\log\left(\frac{i/2+\lambda}{i/2-\lambda}\right),
     \qquad & &p'(\lambda)=\frac{1}{\lambda^2+1/4} \label{p}\\
    &\theta(\lambda)=i\log\left(\frac{i-\lambda}{i+\lambda}\right),
    \qquad & &\theta'(\lambda)=-\frac{2}{\lambda^2+1}.\label{theta}
\end{alignat}
Note that these Bethe equations are completely similar in their form to the ones that we have in the periodic case, the only difference being in the sign in the right hand side of \eqref{Bethe-hom}. Hence the analysis of the solution is similar, except that this difference of sign will result in a difference in the allowed set of quantum numbers in  the right hand side of \eqref{Bethe-log}.

\begin{rem}\label{rem-degree-Q}
We have however a crucial difference here with the periodic case: the SoV approach gives us the completeness of the corresponding Bethe states (at least if we slightly deform the model by inhomogeneity parameters), contrary to the periodic case for which Bethe states gives only $\mathfrak{su}(2)$ highest weight vectors. Moreover, we need here a priori to consider all degrees $R\le N$ of $Q$, and not only $R\le \frac{N}2$ as in the periodic case.
Let us nevertheless remark that we can  in fact avoid considering solutions of the Bethe equations "beyond the equator" (i.e. with $R>\frac N2$): we can indeed choose to construct the eigenstates associated with polynomials $Q$ with degree $R>\frac N2$ by \eqref{eigenl-bethe-2}-\eqref{eigenr-bethe-2}, i.e. by means of the polynomial $\widehat{Q}$ which in that case has degree $N-R<\frac N2$.
\end{rem}

As in the periodic case, we expect that, in the large $N$ limit, the low-energy states will be given by solutions $\{\lambda\}\equiv\{\lambda_1,\ldots,\lambda_R\}$ of the Bethe equations with an infinite number of real roots (of order $N/2$) and a finite number of complex roots.
Let us also suppose that, for such states, the real Bethe roots have a continuous distribution in the thermodynamic limit:
\begin{equation}
    \frac{1}{N(\lambda_{j+1}-\lambda_j)}\underset{N\to\infty}{\sim} \rho(\lambda_j), 
    \qquad \text{if } \lambda_j,\lambda_{j+1}\in\mathbb{R},
\end{equation}
so that we suppose we can, in the leading order in the thermodynamic limit, replace the sums by integrals (see \cite{Koz18} for a proof in the periodic case):
\begin{equation}\label{sum-int}
    \frac 1 N \sum_{k=1}^R f(\lambda_k) \underset{N\to +\infty}{\longrightarrow} \int_{-\infty}^\infty f(\lambda)\, \rho(\lambda)\, d\lambda,
\end{equation}
for any sufficiently regular function $f$. 
The function $\rho(\lambda)$ is therefore solution of the integral equation
\begin{equation}\label{eq-int}
   2\pi\rho(\lambda)-\int_{-\infty}^\infty\theta'(\lambda-\mu)\,\rho(\mu)\, d\mu=p'(\lambda),
\end{equation}
which is the same integral equation as in the periodic case and therefore admits the same solution:
\begin{equation}\label{rho}
   \rho(\lambda)=\frac{1}{2\,\cosh(\pi\lambda)}.
\end{equation}
Note that we have
\begin{equation}\label{lim-der-counting}
  \widehat{\xi}_Q'(\lambda)=\frac{i}{N}\,\frac{\mathfrak{a}'_Q(\lambda)}{\mathfrak{a}_Q(\lambda)}\underset{N\to\infty}{\longrightarrow} 2\pi \rho(\lambda).
\end{equation}

The function $p$ (resp. $\theta$) is holomorphic in a band of width $i$ (resp. $2i$) around the real axis. $p$ and $\theta$ (and hence $\widehat\xi$) are odd functions of $\lambda$. Moreover,
\begin{align}
    &p(\lambda)\underset{\Re(\lambda)\to\pm\infty}{\longrightarrow}\pm\pi,
    \qquad \text{if}\quad |\Im(\lambda)|<\frac 1 2,\\
    &\theta(\lambda)\underset{\Re(\lambda)\to\pm\infty}{\longrightarrow}\mp\pi,
    \qquad \text{if}\quad |\Im(\lambda)|<1,
\end{align}
so that, if all roots are {\em close roots} (i.e. such that $|\Im(\lambda_k)|<1$, $k=1,\ldots,R$),
\begin{equation}\label{lim-xi}
    \widehat{\xi}(\lambda) \underset{\lambda\to\pm\infty}{\longrightarrow}\pm \frac{N-R}{N}\pi,
    \qquad\text{for}\quad \lambda\in\mathbb{R}.
\end{equation}
Hence, if we suppose that the counting function is an increasing function and if all roots are close roots, the allowed set of quantum numbers $n_j$ in \eqref{Bethe-log} would be
\begin{equation}\label{range-int-hypothesis}
   n_j\in\{1,\ldots,N-R-1\},
\end{equation}
which means in particular that we could have at most $N-R-1$ real Bethe roots in a sector with $R$ Bethe roots.

The question is whether the counting function is indeed an increasing function. This should be true on any compact interval of the real axis and for $N$ large enough due to \eqref{lim-der-counting}. However, nothing assures us it is true on the whole real axis, which is non-compact. 
To clarify this point, let us evaluate the derivative of the counting function at large values of $\pm\lambda$:
\begin{align}
  \widehat{\xi}'(\lambda) 
  &=\frac{1}{1+1/4}+\frac{1}{N}\sum_{k=1}^R\frac{1}{(\lambda-\lambda_k)^2+1}
         \nonumber\\
  &=\frac{N-2R}{N\lambda^2}-\frac{4}{N\lambda^3}\sum_{k=1}^R\lambda_k+O(1/\lambda^4).
\end{align}

Hence, if $N-2R>0$, the counting function is indeed strictly increasing at large $\lambda$. This does not prove that it is increasing on the whole real axis but at least it does not contradict this hypothesis.

On the contrary, if $N-2R<0$, the counting function is strictly decreasing at large $\lambda$. This means that the restriction \eqref{range-int-hypothesis} is certainly not valid in that case, since both limiting values in \eqref{lim-xi} can in fact be reached for finite values of $\lambda$ and therefore should be included in the set of allowed integers. Hence, we have (at least) $N-R+1$ possible vacancies on the real axis in that case. 

In the particular case $N=2R$ for $N$ even, the sign of $ \widehat{\xi}'(\lambda)$ is given by the sign of the sum of Bethe roots:
\begin{alignat}{2}
   &\widehat{\xi}'(\lambda)
    \begin{cases} <0 \text{ if } \sum\lambda_k>0 \\
                           >0 \text{ if } \sum\lambda_k<0 \end{cases}
                           & &\text{ when }\lambda\to +\infty,\\
    &\widehat{\xi}'(\lambda)
    \begin{cases} >0 \text{ if } \sum\lambda_k>0 \\
                           <0 \text{ if } \sum\lambda_k<0 \end{cases}
                           & &\text{ when }\lambda\to -\infty.
\end{alignat}
Hence, in that case (provided that $\sum\lambda_k\not= 0$), one of the limiting value in \eqref{lim-xi} can be reached for finite $\lambda$. It means that we have (at least) $N/2$ possible vacancies on the real axis. It is therefore natural to expect that, for $N$ even, the ground state of the model is given by a state with exactly $R=N/2$ real roots, as in the periodic and the diagonal twist cases\footnote{This hypothesis is supported by the fact that the Bethe equations \eqref{Bethe-hom} coincide with the Bethe equations of the $\sigma^z$-twisted case \cite{KitMNT16}, a case that can be obtained by a continuous variation of the twist from the periodic case.}. Note that, from Remark~\ref{rem-degeneracy}, the ground state is doubly degenerated. We have indeed two such states related to $Q$ and $\widehat{Q}$ with the same numbers of roots $\lambda_1,\ldots,\lambda_{N/2}$ and $\widehat\lambda_1,\ldots,\widehat\lambda_{N/2}$, and the sum rule \eqref{sum-rule} imposes moreover that
\begin{equation}
   \sum_{k=1}^{N/2}\lambda_k=-\sum_{k=1}^{N/2}\widehat{\lambda}_k
\end{equation}
in the homogeneous limit. Hence we expect these two states to have adjacent sets of quantum numbers shifted by one with respect to each other.

For $N$ odd, instead, we expect that the two degenerate ground states are in the two different sectors $R=\frac{N-1}{2}$ and $R=\frac{N+1}{2}$. In the sector $R=\frac{N-1}{2}$, there are indeed from our previous study (at least) $\frac{N-1}2$ possible vacancies on the real axis. Hence, there exists a solution in that sector with only real roots $\lambda_1,\ldots,\lambda_{\frac{N-1}{2}}$ which should be the ground state. In the sector $R=\frac {N+1}2$, we have a state with the same energy, which correspond to a polynomial $\widehat{Q}$ with $N-R=\frac{N-1}2$ real roots $\widehat\lambda_1,\ldots,\widehat\lambda_{\frac{N-1}{2}}$ which solve exactly the same set of equations as $\lambda_1,\ldots,\lambda_{\frac{N-1}{2}}$.

\begin{rem}
It is natural to expect that the ground states in the sector $\frac{N}{2}$ (for $N$ even) or $\frac{N-1}{2}$ (for $N$ odd) have no hole in their distribution of Bethe roots. However, this hypothesis is not essential for our purpose (computation of the correlation functions in the thermodynamic limit): we essentially build our study on the replacement of sums by integrals as in \eqref{sum-int}, and the holes contribute only to sub-leading orders to \eqref{sum-int}. In fact, it is neither essential for our purpose to know the precise sector $R$ of the ground state, since the replacement \eqref{sum-int} remains valid for all states given by $R$ real roots with $R$ of order $N/2$ in the thermodynamic limit. Hence we do not have to distinguish further between even and odd $N$.
\end{rem}

As in the periodic case \cite{KitMT99}, it is also convenient to consider the inhomogeneous deformation of the ground state when we introduce inhomogeneity parameters $\xi_1,\ldots,\xi_N$ in the model as in \eqref{mon}. For the previous analysis to remain valid, we may for instance restrict ourselves to the consideration of inhomogeneity parameters $\xi_1,\ldots,\xi_N$ such that $\Im(\xi_n)=\eta/2=-i/2$, $1\le n \le N$. 
 In that case, we have to define
\begin{equation}\label{p-inhom}
  p_\text{tot}(\lambda)
  =\frac{1}{N}\sum_{n=1}^Np(\lambda-\xi_n+\eta/2),
\end{equation}
and it leads to the inhomogeneous density
\begin{equation}\label{rho-inhom}
    \rho_\text{tot}(\lambda)=\frac{1}{N}\sum_{n=1}^N\rho(\lambda-\xi_n+\eta/2),
\end{equation}
solution of the integral equation
\begin{equation}\label{eq-int-inh}
   2\pi\, \rho_\text{tot}(\lambda)-\int_{-\infty}^\infty\theta'(\lambda-\mu)\,\rho_\text{tot}(\mu)\, d\mu=p'_\text{tot}(\lambda).
\end{equation}

\section{Finite-size correlation functions}
\label{sec-cor-finite}

In this section we explain how to compute the correlation functions, or more precisely the elementary buildings blocks of these correlation functions\footnote{These are also called the matrix elements of the density matrix of a chain segment of length $m$.} in the model in finite volume starting from the SoV solution presented in Section~\ref{sec-SOV}.
In particular, given $\ket{Q_\tau}$ an eigenstate of the anti-periodic transfer matrix, we consider matrix elements of the form
\begin{equation}\label{el-block-def}
  F_{n, n+m-1}(\tau,\boldsymbol{\epsilon})=\frac{\bra{Q_\tau}\, \prod_{j=1}^{m} E^{\epsilon_{2j-1},\epsilon_{2j}}_{n+j-1}\, \ket{Q_\tau}}{\moy{Q_\tau\, |\, Q_\tau}},
\end{equation}
for any $\boldsymbol{\epsilon}\equiv(\epsilon_1,\epsilon_2,\ldots,\epsilon_{2m})\in\{1,2\}^{2m}$.
Here $E^{\epsilon_1,\epsilon_2}$, $\epsilon_1,\epsilon_2\in\{1,2\}$, stands for the $2\times 2$ elementary matrix with matrix elements  $(E^{\epsilon_1,\epsilon_2})_{i,j}=\delta_{i,\epsilon_1}\,\delta_{j,\epsilon_2}$.
We explain how to compute the matrix elements \eqref{el-block-def} in a convenient form for the consideration of the homogeneous limit, and also for the consideration of the thermodynamic limit which will be taken in the next section.

As in the periodic case \cite{KitMT00}, we use the solution of the quantum inverse problem \cite{KitMT99,MaiT00} to reconstruct the elementary matrices acting on the $n$-th site of the chain as some elements of the monodromy matrix dressed by a product of anti-periodic transfer matrices evaluated at the inhomogeneity parameters. It is indeed easy to show that \cite{Nic13,LevNT16}:

\begin{prop}\label{prop-inv-pb}
Let $E_n^{\epsilon_1,\epsilon_2}\in\End V_n$, $(\epsilon_1,\epsilon_2)\in\{1,2\}^2$, be an elementary matrix acting on the $n$-th site of the chain. Then
\begin{align}
    E_n^{\epsilon_1,\epsilon_2}
    &=\prod_{k=1}^{n-1}\mathcal{T}(\xi_k)\cdot \big[ \sigma^x\, T(\xi_n)\big]_{\epsilon_2,\epsilon_1}\cdot \prod_{k=1}^n [\mathcal{T}(\xi_k)]^{-1}
    \nonumber\\
    &=\prod_{k=1}^{n-1}\mathcal{T}(\xi_k)\cdot [ T(\xi_n)]_{3-\epsilon_2,\epsilon_1}\cdot \prod_{k=1}^n [\mathcal{T}(\xi_k)]^{-1}.
    \label{inv-pb}
\end{align}
\end{prop}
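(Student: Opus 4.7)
The equality of the two forms in \eqref{inv-pb} is immediate, since $[\sigma^x T(\xi_n)]_{\epsilon_2,\epsilon_1}=\sum_j(\sigma^x)_{\epsilon_2,j}[T(\xi_n)]_{j,\epsilon_1}=[T(\xi_n)]_{3-\epsilon_2,\epsilon_1}$, so the plan focuses on the first line. The approach is to adapt the standard quantum inverse problem argument of \cite{KitMT99,MaiT00} to the $\sigma^x$-twisted setting, the only new ingredient being the bookkeeping of the twist matrix that appears under the trace defining $\mathcal{T}$. First I would exploit the initial condition $R_{0n}(0)=\eta\,P_{0n}$ to isolate the site $n$ inside the product defining $T_0(\xi_n)$, then push the resulting permutation $P_{0n}$ to the far right using $P_{0n}R_{0k}(\mu)=R_{nk}(\mu)P_{0n}$, and finally trace over $V_0$ via the identity $\tr_0[X_0 P_{0n}]=X|_{0\to n}$ (valid for any $X_0$ not acting on $V_n$). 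This yields
\begin{equation*}
   \mathcal{T}(\xi_n)=\eta\,\mathcal{R}^L_n\cdot\sigma_n^x\cdot\mathcal{R}^R_n,
   \qquad
   [\sigma^x T(\xi_n)]_{\epsilon_2,\epsilon_1}=\eta\,\mathcal{R}^L_n\cdot E_n^{\epsilon_1,\epsilon_2}\sigma_n^x\cdot\mathcal{R}^R_n,
\end{equation*}
with $\mathcal{R}^L_n\equiv R_{n,n-1}(\xi_n-\xi_{n-1})\cdots R_{n,1}(\xi_n-\xi_1)$ and $\mathcal{R}^R_n\equiv R_{n,N}(\xi_n-\xi_N)\cdots R_{n,n+1}(\xi_n-\xi_{n+1})$. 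The right factor $\mathcal{R}^R_n$ and the central $\sigma_n^x$ (via $(\sigma^x)^2=\mathbb{1}$) then cancel in the combination
\begin{equation*}
   [\sigma^x T(\xi_n)]_{\epsilon_2,\epsilon_1}\,\mathcal{T}(\xi_n)^{-1}=\mathcal{R}^L_n\cdot E_n^{\epsilon_1,\epsilon_2}\cdot(\mathcal{R}^L_n)^{-1},
\end{equation*}
and the claim reduces to showing that $\mathcal{U}_n\equiv\prod_{k=1}^{n-1}\mathcal{T}(\xi_k)\cdot\mathcal{R}^L_n$ commutes with $E_n^{\epsilon_1,\epsilon_2}$.

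The heart of the argument is then to prove that $\mathcal{U}_n$ in fact does not act on $V_n$ at all, which I would establish by a descending induction on the rightmost transfer matrix. At the interface between $\mathcal{T}(\xi_{n-1})$ and $\mathcal{R}^L_n$, the trailing factor $R_{n-1,n}(\xi_{n-1}-\xi_n)$ of $\mathcal{T}(\xi_{n-1})$ meets the leading factor $R_{n,n-1}(\xi_n-\xi_{n-1})$ of $\mathcal{R}^L_n$, and XXX unitarity $R_{ab}(\mu)R_{ba}(-\mu)=(\eta^2-\mu^2)\,\mathbb{1}_{ab}$ collapses this pair to a scalar on $V_{n-1}\otimes V_n$. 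The remaining factors $R_{n,k}(\xi_n-\xi_k)$, $k<n-1$, commute with the leftover pieces of $\mathcal{T}(\xi_{n-1})$ (which no longer involve $V_n$) and with all $\sigma_j^x$, $j\neq n$, so they can be shifted leftward to meet the next trailing $R_{n-2,n}$ hidden inside $\mathcal{T}(\xi_{n-2})$, and so on. Iterating produces a scalar $\prod_{k=1}^{n-1}(\eta^2-(\xi_k-\xi_n)^2)$ times an operator that acts only on $V_1,\dots,V_{n-1},V_{n+1},\dots,V_N$; in particular $\mathcal{U}_n$ is scalar on $V_n$ and thus commutes with $E_n^{\epsilon_1,\epsilon_2}$, finishing the proof.

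The main obstacle will be the careful bookkeeping in this inductive step: at each stage one must invoke Yang-Baxter to reorder the surviving $R$-matrices so that the next unitarity cancellation lives exactly at the $V_n$-interface, and verify that the various $\sigma_k^x$ factors never obstruct this reorganization (which they do not, since each $\sigma_k^x$ acts only on $V_k$). The presence of the twist thus enters the argument only locally, through the replacement $\sigma_n^x\to E_n^{\epsilon_1,\epsilon_2}\sigma_n^x$ in the central position of the factorized formulae for $\mathcal{T}(\xi_n)$ and $[\sigma^x T(\xi_n)]_{\epsilon_2,\epsilon_1}$, and disappears entirely from the global cancellation argument.
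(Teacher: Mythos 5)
Your argument is correct, but it takes a genuinely different route from the paper's. For Proposition~\ref{prop-inv-pb} the paper gives no direct proof (it cites \cite{Nic13,LevNT16}), and the proof it does supply for the general twist (Proposition~\ref{prop-inv-pb-K}) takes the \emph{periodic} reconstruction \eqref{R-P-1} of \cite{KitMT99,MaiT00} as known and bootstraps from it: one applies that reconstruction to the local twist operator itself, telescopes the resulting products as in \eqref{Prod-K}, and conjugates. Your proof instead redoes the Maillet--Terras derivation from scratch with the twisted trace, which makes it self-contained; the paper's route is shorter but only modulo the periodic result. Your reduction of \eqref{inv-pb} to the commutation of $\mathcal{U}_n=\prod_{k=1}^{n-1}\mathcal{T}(\xi_k)\cdot\mathcal{R}^L_n$ with $E_n^{\epsilon_1,\epsilon_2}$ is exactly right, and the key claim is true: one finds $\mathcal{U}_n\propto\prod_{j=1}^{n-1}\bigl[\sigma_j^x\,R_{jN}(\xi_j-\xi_N)\cdots R_{j,n+1}(\xi_j-\xi_{n+1})\bigr]$, which indeed does not act on $V_n$. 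One caveat on the bookkeeping: your parenthetical reason for sliding $R_{n,k}$, $k<n-1$, leftward (that the leftover pieces of $\mathcal{T}(\xi_{n-1})$ ``no longer involve $V_n$'') is not by itself sufficient, since $R_{n,k}$ shares the space $V_k$ with the factor $R_{n-1,k}$ surviving inside $\mathcal{R}^L_{n-1}$, and the two do not commute. The correct organization is to let $R_{n,k}$ pass every surviving factor whose support is disjoint from $\{n,k\}$ (which is all of them except $R_{n-1,k},R_{n-2,k},\dots$), so that for each fixed $k$ the factors $R_{j,k}$ pile up in increasing order of $j$, and then to perform the unitarity cancellations in nested order, innermost index first. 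Done this way the whole argument requires only $R_{ab}(\mu)R_{ba}(-\mu)=(\eta^2-\mu^2)\,\mathbb{1}$ and commutation of operators with disjoint support; the Yang--Baxter equation you reserve for the reordering is never actually needed.
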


Hence, the mean value on an eigenstate  \eqref{eigenl} of a product of such elementary operators at adjacent sites is given by
\begin{multline}
  \bra{Q_\tau}\, \prod_{j=1}^{m} E^{\epsilon_{2j-1},\epsilon_{2j}}_{n+j-1}\,  \ket{Q_\tau}
  =\frac{\prod_{k=1}^{n-1}\tau(\xi_k)}{\prod_{k=1}^{n+m-1}\tau(\xi_k)}\\
  \times
    \bra{Q_\tau}\, T_{3-\epsilon_{2n},\epsilon_{2n-1}}(\xi_n)\, 
    \ldots  T_{3-\epsilon_{2(n+m-1)},\epsilon_{2(n+m-1)-1}}(\xi_{n+m})\, \ket{Q_\tau},
\end{multline}
so that, to have access to the correlation functions, it is enough to compute the generic action of a product of elements of the monodromy matrix on an eigenstate and take the resulting scalar product.

Note that, as in the periodic case \cite{KitMT00}, the only effect of a translation on the chain is a numerical factor given by a product of the corresponding transfer matrix eigenvalues so that, for simplicity, we shall for now on restrict our study to matrix elements of the form
\begin{align}\label{el-block-def-1}
  F_{m}(\tau,\boldsymbol{\epsilon})\equiv F_{1,m}(\tau,\boldsymbol{\epsilon})
  &=\frac{\bra{Q_\tau}\, \prod_{j=1}^{m} E^{\epsilon_{2j-1},\epsilon_{2j}}_{j}\, \ket{Q_\tau}}{\moy{Q_\tau\, |\, Q_\tau}}
  \\
  &=\frac{\bra{Q_\tau}\, T_{3-\epsilon_{2},\epsilon_{1}}(\xi_1)\, 
    \ldots  T_{3-\epsilon_{2m},\epsilon_{2m-1}}(\xi_m)\, \ket{Q_\tau}}{\prod_{k=1}^{m}\tau(\xi_k)\, \moy{Q_\tau\, |\, Q_\tau} }.
\end{align}
Let us also remark that, due to the fact that each eigenstate $\ket{Q_\tau}$ of the anti-periodic  transfer matrix is also an eigenstate of the operator $\Gamma^x=\otimes_{n=1}^N \sigma_n^x$ (see \eqref{eigen-sym}), one has the following relation between elementary blocks:
\begin{align}
    F_{m}(\tau,\boldsymbol{\epsilon})
    &= \frac{\bra{Q_\tau}\, \Gamma^x\,\prod_{j=1}^{m} E^{\epsilon_{2j-1},\epsilon_{2j}}_{j}\, \Gamma^x\,\ket{Q_\tau}}{\moy{Q_\tau\, |\, Q_\tau}}
    = \frac{\bra{Q_\tau}\, \prod_{j=1}^{m} E^{3-\epsilon_{2j-1},3-\epsilon_{2j}}_{j}\, \ket{Q_\tau}}{\moy{Q_\tau\, |\, Q_\tau}}
    \nonumber\\
   &=   F_{m}(\tau,3-\boldsymbol{\epsilon}),
   \label{sym-Gamma^x}
\end{align}
in which the $2m$-tuple $3-\boldsymbol{\epsilon}$ is defined in terms of the $2m$-tuple $\boldsymbol{\epsilon}\equiv(\epsilon_1,\ldots,\epsilon_{2m})$ as
$3-\boldsymbol{\epsilon}\equiv(3-\epsilon_1,\ldots,3-\epsilon_{2m})$. 

\subsection{Left action on separate states}
\label{sec-act}

In this section we compute the generic action of a product of matrix elements of the
monodromy matrix on a left separate state $\bra{Q}$ of the form
\begin{equation}\label{separate-l}
  \bra{Q} 
  = \sum_{\mathbf{h}\in\{0,1\}^N}
     \prod_{n=1}^N Q(\xi_n^{(h_n)}) \  V(\xi_1^{(h_1)},\ldots,\xi_N^{(h_N)})\, \bra{\mathbf{h} },
\end{equation}
where $Q(\lambda)=\prod_{k=1}^{R}(\lambda-q_k)$ is a polynomial of degree $R\le N$ (not necessarily a solution of the TQ-equation \eqref{t-Q}). Our starting point is the action of the monodromy matrix elements $D(\lambda), C(\lambda), B(\lambda)$ \eqref{D-left}-\eqref{B-left} and $A(\lambda)$ \eqref{A-left} on the left SoV basis.

\begin{rem}
Instead of computing the action on a state of the form \eqref{separate-l} using  \eqref{D-left}-\eqref{B-left} and \eqref{A-left}, we could alternatively try to compute the multiple action of a product of transfer matrix elements directly on a Bethe-type state of the form \eqref{eigenl-bethe} using the Yang-Baxter commutation relations, in the spirit of what is done for model solvable by Bethe Ansatz \cite{KitMT00}. However, the fact that the transfer matrix eigenstates can be re-expressed as Bethe-type states involving the multiple action of an {\em element of the monodromy matrix} as in \eqref{eigenl-bethe}-\eqref{eigenr-bethe} is not completely general in the SoV approach, but rather a specificity of models for which the Q-functions have the same functional form as the transfer matrix eigenfunctions of the model: for instance, it is not true in the anti-periodic XXZ model, for which the Q-functions have a double periodicity with respect to the transfer matrix eigenfunctions of the model \cite{BatBOY95,NicT15}. So as to remain as general as possible, it is therefore better to start directly from \eqref{separate-l} and  \eqref{D-left}-\eqref{B-left}, \eqref{A-left}.
\end{rem}

For our purpose, since we need ultimately to evaluate this action only at the inhomogeneity parameters (see \eqref{inv-pb}), it is in fact more convenient to  consider instead of $T_{\epsilon,\epsilon'}(\lambda)$ the operators $\bar{T}_{\epsilon,\epsilon'}(\lambda)$ defined as
\begin{equation}\label{op-bar}
     \bar{T}_{\epsilon,\epsilon'}(\lambda)
     = \begin{cases}
        D^{-1}(\lambda+\eta)\, C(\lambda+\eta)\, B(\lambda) &\text{if }  (\epsilon,\epsilon')= (1,1),\\
        T_{\epsilon,\epsilon'}(\lambda) &\text{otherwise}.\\
        \end{cases}
\end{equation}
Indeed, since $\det_q T(\xi_i+\eta)=0$, it follows from \eqref{q-det} that
\begin{equation}
    \bar{T}_{\epsilon,\epsilon'}(\xi_i)=T_{\epsilon,\epsilon'}(\xi_i) \qquad \forall i\in\{1,\ldots, N\}, \quad \forall \epsilon,\epsilon'\in\{1,2\},
\end{equation}
so that the formula \eqref{inv-pb} can be written in terms of the matrix elements $\bar{T}_{\epsilon,\epsilon'}$ instead of $T_{\epsilon,\epsilon'}$. Note that \eqref{op-bar} is well defined as soon as $\lambda\notin\{\xi_i-\eta,\xi_i-2\eta \mid i=1,\ldots,N\}$ since $D(\lambda)$ is invertible for any $\lambda\not=\xi_i,\xi_i-\eta$, $i=1,\ldots N$.
The action of $\bar{A}(\lambda)\equiv \bar{T}_{1,1}(\lambda)$ on a SoV state $\bra{\mathbf{h}}$ is then slightly simpler than the action of $A(\lambda)$ \eqref{A-left}.

It is easy to compute the action of the operators $\bar{T}_{\epsilon,\epsilon'}(\lambda)$ on the separate state \eqref{separate-l}. We obtain
\begin{equation}\label{act-D}
    \bra{Q}\, D(\lambda )
    = \sum_{\mathbf{h}} d_\mathbf{h}(\lambda )\, \prod_{n=1}^{N}\! Q (\xi_n^{(h_n)})\,
    V( \xi_1^{(h_1)},\ldots,\xi_N^{(h_N)}) \,  \bra{\mathbf{h} },
\end{equation}
\begin{align}
   \bra{Q}\, B(\lambda )
   &=-\sum_{b=1}^{N} a(\xi _{b}) \sum_{\mathbf{h}} \delta _{h_{b},1}
       \prod_{n=1}^{N} \! Q(\xi _{n}^{(h_{n})})
       \prod_{\substack{n=1 \\ n\neq b}}^{N}\frac{\lambda -\xi_{n}^{(h_n)}}{\xi_b^{(1) }-\xi_n^{(h_n)}}\,
       V( \xi_1^{(h_1)},\ldots,\xi_N^{(h_N)}) \,  \bra{\mathrm{T}_b^{-}\mathbf{h} } 
    \nonumber \\
   &=-\sum_{b=1}^{N} \frac{a(\xi _{b})}{\lambda -\xi _{b}}\,
       \frac{Q(\xi_{b}-\eta)}{Q(\xi _{b})}\,
       \sum_{\text{\textbf{h}}}\delta _{h_{b},0}\,
       \frac{d_\mathbf{h}(\lambda ) \prod_{n=1}^{N}Q (\xi_{n}^{(h_{n})})}{\prod_{n\neq b}(\xi _{b}-\xi _{n}^{(h_{n})})} 
       \nonumber\\
    &\hspace{7cm}    \times 
       V( \xi_1^{(h_1)},\ldots,\xi_N^{(h_N)}) \,  \bra{\mathbf{h} },
    \label{act-B}
\end{align}
\begin{align}
   \bra{Q}\, C(\lambda )
   &=\sum_{b=1}^{N} d(\xi_b^{(1)}) \sum_{\mathbf{h}} \delta _{h_{b},0}
       \prod_{n=1}^{N} \! Q(\xi _{n}^{(h_{n})})
       \prod_{\substack{n=1 \\ n\neq b}}^{N}\frac{\lambda -\xi _{n}^{(h_{n})}}{\xi_b -\xi_{n}^{(h_{n})}}\,
       V( \xi_1^{(h_1)},\ldots,\xi_N^{(h_N)}) \,  \bra{\mathrm{T}_b^{+}\mathbf{h} } 
    \nonumber \\
   &=\sum_{b=1}^{N} \frac{d(\xi_b^{(1)})}{\lambda-\xi_b^{(1)}}\,
       \frac{Q(\xi_{b})}{Q(\xi _{b}-\eta)}\,
       \sum_{\text{\textbf{h}}}\delta _{h_{b},1}\,
       \frac{d_\mathbf{h}(\lambda ) \prod_{n=1}^{N} Q (\xi_{n}^{(h_{n})})}{\prod_{n\neq b}(\xi _{b}-\xi _{n}^{(h_{n})})} \nonumber\\
    &\hspace{7cm}
    \times V( \xi_1^{(h_1)},\ldots,\xi_N^{(h_N)}) \,  \bra{\mathbf{h} },
    \label{act-C}
\end{align}
and a similar (although more involved) expression can be obtained for the action of $\bar{A}(\lambda)$ on $\bra{Q}$.

It is obviously possible, from these formulas, to compute the multiple action of any string of operators $\bar{T}_{\epsilon_2,\epsilon_1}(\lambda_1)\, \bar{T}_{\epsilon_4,\epsilon_3}(\lambda_2)\ldots \bar{T}_{\epsilon_{2m},\epsilon_{2m-1}}(\lambda_m)$ on the state $\bra{Q}$ as a multiple sum over choices of inhomogeneity parameters along the chain, but such an expression would not be convenient for the consideration of the homogeneous limit. We therefore now explain how to write this action in terms of a multiple contour integral that we can transform into a more convenient form for the consideration of the homogeneous limit.
In fact, one can show the following result:


\begin{proposition}\label{prop-act-1}
Let $\lambda$ be a generic parameter.
The left action of the operator $\bar{T}_{\epsilon _{2},\epsilon_{1}}(\lambda )$, $\epsilon_1,\epsilon_2\in\{1,2\}$, on a generic  separate state $\bra{Q}$ of the form \eqref{separate-l} can be written as the following sum of contour integrals:
\begin{multline}\label{act-barT-int}
   \bra{Q}\, \bar{T}_{\epsilon_2,\epsilon_1}(\lambda )
   = \sum_\mathbf{h} d_\mathbf{h}(\lambda)\, \prod_{n=1}^N Q(\xi_n^{(h_n)})\, 
      \left(- \oint_{\Gamma_2}
              \frac{dz_2}{2\pi i \, (\lambda-z_2)}\, \frac{a(z_2)}{d_\mathbf{h}(z_2)}\,
              \frac{Q(z_2-\eta )}{Q(z_2)}\right) ^{2-\epsilon_2}
      \\
      \times    
            \left(\oint_{\Gamma_1}
              \frac{dz_1}{2\pi i \, (\lambda-z_1)}\, \frac{d(z_1)}{d_\mathbf{h}(z_1)}\,
              \frac{ Q (z_1+\eta )}{Q (z_1)}\right)^{2-\epsilon_1} 
       \left( \frac{z_1-z_2}{z_1-z_2+\eta }\right) ^{(2-\epsilon_1)(2-\epsilon_2)} 
    \\
    \times 
     V( \xi_1^{(h_1)},\ldots,\xi_N^{(h_N)}) \,  \bra{\mathbf{h}} ,
\end{multline}
in which the contour $\Gamma_2$ surrounds  counterclockwise the points $\xi_n$, $1\le n \le N$, and no other poles in the integrand, whereas the contour $\Gamma_1$  surrounds  counterclockwise the points $\xi_n-\eta$, $1\le n \le N$, the point $z_2-\eta$ if $\epsilon_2=1$, and no other poles in the integrand.
 
Similarly, for generic parameters $\lambda_1,\ldots,\lambda_m$, the multiple action of a product of operators $\bar{T}_{\epsilon_2,\epsilon_1}(\lambda_1)\, \bar{T}_{\epsilon_4,\epsilon_3}(\lambda_2)\ldots \bar{T}_{\epsilon_{2m},\epsilon_{2m-1}}(\lambda_m)$, $\epsilon_i\in\{1,2\}$, $1\le i \le 2m$, on a generic  separate state $\bra{Q}$ of the form \eqref{separate-l} can be written as the following sum of contour integrals:
\begin{multline}\label{act-mult-int}
  \bra{Q}\, \bar{T}_{\epsilon_2,\epsilon_1}(\lambda_1)\, \bar{T}_{\epsilon_4,\epsilon_3}(\lambda_2)\ldots \bar{T}_{\epsilon_{2m},\epsilon_{2m-1}}(\lambda_m)
  = \sum_\mathbf{h} \prod_{j=1}^m d_\mathbf{h}(\lambda_j)\, \prod_{n=1}^N Q(\xi_n^{(h_n)})\, 
  \\
  \times
      \prod_{j=m}^1 \left[
         \left(-\oint_{\Gamma_{2j} }
              \frac{dz_{2j}}{2\pi i \, (\lambda_j-z_{2j})}\, \frac{a(z_{2j})}{d_\mathbf{h}(z_{2j})}\,
              \frac{ Q (z_{2j}-\eta )}{Q (z_{2j})}\,
              \prod_{k=1}^{j-1}\frac{z_{2j}-\lambda_k-\eta}{z_{2j}-\lambda_k} 
              \right)^{2-\epsilon_{2j}} 
              \right.
     \\  
     \times
      \left.    
       \left( \oint_{\Gamma_{2j-1 }}
              \frac{dz_{2j-1}}{2\pi i \, (\lambda_j-z_{2j-1})}\, \frac{d(z_{2j-1})}{d_\mathbf{h}(z_{2j-1})}\,
              \frac{Q(z_{2j-1}+\eta )}{Q(z_{2j-1})}\,
              \prod_{k=1}^{j-1}\frac{z_{2j-1}-\lambda_k+\eta}{z_{2j-1}-\lambda_k} 
              \right) ^{2-\epsilon_{2j-1}}
              \right]      
      \\
      \times    
      \prod_{1\leq j<k\leq 2m}
      \left( \frac{z_j-z_k}{z_j-z_k+(-1)^{k}\eta }\right) ^{(2-\epsilon_j)(2-\epsilon_k)}\,
        V( \xi_1^{(h_1)},\ldots,\xi_N^{(h_N)}) \,  \bra{\mathbf{h}},
\end{multline}
in which the contours $\Gamma_{2j}$ surround  counterclockwise the points $\xi_n$, $1\le n \le N$, the points $z_{2k-1}+\eta$, $k>j$, and no other poles in the integrand, whereas the contours $\Gamma_{2j-1}$  surround  counterclockwise the points $\xi_n-\eta$, $1\le n \le N$, the points $z_{2k}-\eta$, $k\ge j$, and no other poles in the integrand.
\end{proposition}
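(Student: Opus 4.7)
The plan is to first establish the single-operator formula \eqref{act-barT-int} by direct residue evaluation, and then upgrade to the multi-operator formula \eqref{act-mult-int} by induction on $m$, iterating the single-operator formula while tracking how the contours and residues reorganize as new operators are added to the product.

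\textbf{Single-operator case.}
I would treat the four values of $(\epsilon_1,\epsilon_2)\in\{1,2\}^2$ in turn. The case $(2,2)$ is immediate: both exponents $2-\epsilon_1$ and $2-\epsilon_2$ vanish, the contour integrals disappear, and the right-hand side collapses to the action \eqref{act-D} of $D(\lambda)$. For $(2,1)$ and $(1,2)$, only one of the two integrals survives; its poles inside $\Gamma_2$ resp.\ $\Gamma_1$ are exactly the $N$ points $\xi_n$ resp.\ $\xi_n-\eta$ at which $d_\mathbf{h}(z)$ vanishes (simplicity being ensured by \eqref{generic}), and evaluating the residues reproduces \eqref{act-B} and \eqref{act-C} respectively. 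The case $(1,1)$ is the most delicate: both integrals are present together with the interaction factor $\tfrac{z_1-z_2}{z_1-z_2+\eta}$, and $\Gamma_1$ additionally picks up the pole $z_2-\eta$. Performing the $z_1$ integral first yields the $N$ residues at $\xi_n-\eta$ (the ``$C$-like'' contribution) plus the residue at $z_1=z_2-\eta$; combined with the $z_2$ integral and compared against the action of $\bar A(\lambda)=D^{-1}(\lambda+\eta)C(\lambda+\eta)B(\lambda)$ obtained from \eqref{A-left} via \eqref{op-bar} and the quantum determinant \eqref{q-det}, these pieces match exactly, the extra residue at $z_1=z_2-\eta$ accounting for the $a=b$ (diagonal) terms of the double sum in \eqref{A-left}.

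\textbf{Multi-operator case by induction.}
Granted \eqref{act-mult-int} for $m-1$ factors, I would apply $\bar T_{\epsilon_{2m},\epsilon_{2m-1}}(\lambda_m)$ on the right using the elementary SoV actions \eqref{D-left}--\eqref{B-left} (and the analogue of \eqref{A-left} for $\bar A$) on each $\bra{\mathbf{h}}$, then reorganize the result into the claimed form. The new prefactor $d_\mathbf{h}(\lambda_m)$ and the new denominators $1/d_\mathbf{h}(z_{2m-1})$, $1/d_\mathbf{h}(z_{2m})$ emerge exactly as in the single-operator step. The ``string'' ratios $\prod_{k<j}(z_{2j}-\lambda_k-\eta)/(z_{2j}-\lambda_k)$ and $\prod_{k<j}(z_{2j-1}-\lambda_k+\eta)/(z_{2j-1}-\lambda_k)$ of \eqref{act-mult-int} are then obtained, at the poles $z_{2j}=\xi_n$ or $z_{2j-1}=\xi_n-\eta$, by rewriting the factors $d_\mathbf{h}(\lambda_k)$ of the induction hypothesis in terms of $(z-\lambda_k)$ and $(z-\lambda_k\mp\eta)$ through $d_\mathbf{h}(\lambda_k)/d_\mathbf{h}(z)=\prod_n(\lambda_k-\xi_n^{(h_n)})/\prod_n(z-\xi_n^{(h_n)})$.

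\textbf{Main obstacle.}
The combinatorial heart of the proof lies in recovering the full interaction factor
\[
\prod_{1\le j<k\le 2m}\Bigl(\tfrac{z_j-z_k}{z_j-z_k+(-1)^k\eta}\Bigr)^{(2-\epsilon_j)(2-\epsilon_k)}
\]
together with the precise deformations of the contours $\Gamma_\ell$. Whenever both $\epsilon_\ell=1$ and $\epsilon_{\ell'}=1$ for some $\ell<\ell'$, the contour $\Gamma_\ell$ must enclose the cross-pole contributed by the corresponding interaction factor; iterating the single-operator formula naively produces a configuration in which such poles sit outside the contours inherited from the induction hypothesis, and one must deform each old contour to collect them. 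Showing that the additional residues gathered under these deformations are exactly those needed to generate the interaction product over all pairs $(j,k)$ --- equivalently, that each inductive step consistently supplements the interaction factor --- is the main technical hurdle. The simplicity of every pole encountered, guaranteed by the genericity condition \eqref{generic}, is essential to keep the residue analysis unambiguous throughout the induction.
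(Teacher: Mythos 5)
Your proposal follows essentially the same route as the paper: the single-operator case is settled by matching residues against the elementary actions \eqref{act-D}--\eqref{act-C}, with the cross-pole at $z_1=z_2-\eta$ accounting for the diagonal terms of the $\bar{A}$-action exactly as in the paper's computation \eqref{act-barA-int}, and the general case is then obtained by induction. The only cosmetic difference is the direction of the residue argument --- you verify the stated contour formula by expanding it over its interior poles, whereas the paper builds the contours up from the SoV sums --- and the inductive bookkeeping of interaction factors and contour deformations that you flag as the main hurdle is precisely the step the paper itself dispatches with ``obtained by induction along the same lines''.
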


\begin{proof}
The expression \eqref{act-barT-int} clearly coincides with \eqref{act-D} in the case $(\epsilon_2,\epsilon_1)=(2,2)$.

Let us now consider the action \eqref{act-B} of $\bar{T}_{1,2}(\lambda )=B(\lambda )$ on $\bra{Q}$. 
The idea is to see the sum as the development of an integral around
a contour by the residue theorem, which leads to the identity
\begin{multline}
 \bra{Q}\, B(\lambda )
 =-\oint_{\Gamma (\{\xi _{n}\}_{n=1\rightarrow N})}%
    \frac{dz}{2\pi i}\,\frac{a(z)}{\lambda -z}\, \frac{Q (z-\eta )}{Q (z)}
    \\
    \times
    \sum_{\mathbf{h}}\frac{d_{\mathbf{h}}(\lambda )}{d_{\mathbf{h}}(z)}
    \prod_{n=1}^{N}\! Q(\xi _{n}^{(h_{n})})\
     V( \xi_1^{(h_1)},\ldots,\xi_N^{(h_N)}) \,  \bra{\mathbf{h} },
     \label{act-B-int}
\end{multline}
where the contour $\Gamma (\{\xi _{n}\}_{n=1\rightarrow N})$ surrounds counterclockwise the points $\xi_n$, $1\le n\le  N$, and no other pole of the integrand. This result coincides with \eqref{act-barT-int} for $(\epsilon_2,\epsilon_1)=(1,2)$.

We can proceed similarly for the action of $\bar{T}_{2,1}(\lambda )=C(\lambda )$, rewriting \eqref{act-C} as an integral around a contour by the
residue theorem, which leads to the identity
\begin{multline}
  \bra{Q}\, C(\lambda )
  =\oint_{\Gamma (\{\xi_n-\eta\}_{n=1\rightarrow N})}
     \frac{dz}{2\pi i}\,\frac{d(z)}{\lambda -z}\,
     \frac{Q(z+\eta )}{Q (z)}
    \\
    \times
     \sum_{\mathbf{h}}\frac{d_{\mathbf{h}}(\lambda )}{d_{\mathbf{h}}(z)}
     \prod_{n=1}^{N}\! Q(\xi_{n}^{(h_{n})})\
     V( \xi_1^{(h_1)},\ldots,\xi_N^{(h_N)}) \,  \bra{\mathbf{h} },
     \label{act-C-int}
\end{multline}
with $\Gamma (\{\xi_n-\eta\}_{n=1\rightarrow N})$ surrounding counterclockwise the points $\xi_n-\eta$, $ 1\le n \le N$, and no other pole of the integrand. This result coincides with \eqref{act-barT-int} for $(\epsilon_2,\epsilon_1)=(2,1)$.

Finally, let us consider the action of $\bar{T}_{1,1}(\lambda )=\bar{A}(\lambda)$ on $\bra{Q}$, which is the more involved one, as it requires to compute the successive action of  $D^{-1}(\lambda+\eta )$, $C(\lambda +\eta)$ and $B(\lambda )$ on the state $\bra{Q}$.
Using \eqref{D-left} and \eqref{C-left}, one can write
\begin{multline}
   \bra{Q}\, \bar{A}(\lambda)
   = \sum_{b=1}^N\frac{d(\xi_b-\eta)}{\lambda-\xi_b+\eta}\, \frac{Q(\xi_b)}{Q(\xi_b-\eta)}\,
   \sum_\mathbf{h}\delta_{h_b,1}\,
   \prod_{n=1}^N\! Q(\xi_n^{(h_n)})\,
   \\
   \times
   \frac{V(\xi_1^{(h_1)},\ldots,\xi_n^{(h_N)})}{\prod_{\ell\not= b}(\xi_b^{(1)}-\xi_\ell^{(h_\ell)})}\,
   \bra{\mathbf{h}}\, B(\lambda).
\end{multline}
which corresponds to the evaluation by the sum over the residues of the following contour integral:
\begin{multline}
   \bra{Q}\, \bar{A}(\lambda)
   = \oint_{\Gamma (\{\xi_n-\eta\}_{n=1\rightarrow N})} 
   \frac{dz}{2\pi i}\, \frac{d(z)}{\lambda-z}\,
   \frac{Q(z+\eta)}{Q(z)}\,
   \\
   \times
   \sum_\mathbf{h} \frac{V(\xi_1^{(h_1)},\ldots,\xi_n^{(h_N)})}{d_\mathbf{h}(z)}\,
   \prod_{n=1}^N\! Q(\xi_n^{(h_n)})\, \bra{\mathbf{h}}\, B(\lambda),
\end{multline}
Using now \eqref{B-left}, we obtain
\begin{align}
 \bra{Q}\, \bar{A}(\lambda)
   &=- \sum_{b=1}^N a(\xi_b)\,
   \oint_{\Gamma (\{\xi_n-\eta\}_{n=1\rightarrow N})}
   \frac{dz}{2\pi i}\, \frac{d(z)}{\lambda-z}
   \frac{Q(z+\eta)}{Q(z)}\,
   \frac{z-\xi_b}{z-\xi_b+\eta}\,\frac{Q(\xi_b-\eta)}{Q(\xi_b)}
   \nonumber\\
   &\hspace{-1cm}
   \times
   \sum_\mathbf{h} \delta_{h_b,0}\,
   \prod_{\ell\not= b}\frac{\lambda-\xi_\ell^{(h_\ell)} }{\xi_b-\xi_\ell^{(h_\ell)}}\,
    \frac{V(\xi_1^{(h_1)},\ldots,\xi_n^{(h_N)})}{d_\mathbf{h}(z)}\,
   \prod_{n=1}^N\! Q(\xi_n^{(h_n)})\, \bra{\mathbf{h}}
   \nonumber\\
   &=-\oint_{\Gamma (\{\xi_n\}_{n=1\rightarrow N})}
   \frac{dz'}{2\pi i}\, \frac{a(z')}{\lambda-z'}
   \frac{Q(z'-\eta)}{Q(z')}\,
   \oint_{\Gamma (\{\xi_n-\eta\}_{n=1\rightarrow N}\cup\{z'-\eta\})}
   \frac{dz}{2\pi i}\, \frac{d(z)}{\lambda-z}
   \nonumber\\
   &\hspace{-1cm}
   \times
   \frac{Q(z+\eta)}{Q(z)}
   \frac{z-z'}{z-z'+\eta}\,
   \sum_\mathbf{h} \frac{d_\mathbf{h}(\lambda)}{d_\mathbf{h}(z)\,d_\mathbf{h}(z')}\,
   \prod_{n=1}^N\! Q(\xi_n^{(h_n)})\, V(\xi_1^{(h_1)},\ldots,\xi_n^{(h_N)})\, \bra{\mathbf{h}},
   \label{act-barA-int}
\end{align}
in which we have again used the residue theorem to recast the sum as a contour integral over $z'$. Note that doing this the pole at $\xi_b-\eta$ becomes a pole at $z'-\eta$, hence we have to deform the contour of the integral over $z$ to take into account the residue at this pole. The expression \eqref{act-barA-int} coincides with \eqref{act-barT-int} in the case $(\epsilon_2,\epsilon_1)=(1,1)$.

The general result is then obtained by induction along the same lines.
\end{proof}

The multiple integral representation \eqref{act-mult-int} of Proposition~\ref{prop-act-1} can easily be recasted in a more convenient form for the further consideration of the homogeneous limit.

\begin{proposition}\label{prop-act-2}
For generic parameters $\lambda_1,\ldots,\lambda_m$, the multiple action of a product of operators $\bar{T}_{\epsilon_2,\epsilon_1}(\lambda_1)\, \bar{T}_{\epsilon_4,\epsilon_3}(\lambda_2)\ldots \bar{T}_{\epsilon_{2m},\epsilon_{2m-1}}(\lambda_m)$, $\epsilon_i\in\{1,2\}$, $1\le i \le 2m$, on a generic  separate state $\bra{Q}$ of the form \eqref{separate-l} can be written as the following sum of contour integrals:
\begin{multline}\label{act-mult-int-bis}
  \bra{Q}\, \bar{T}_{\epsilon_2,\epsilon_1}(\lambda_1)\, \bar{T}_{\epsilon_4,\epsilon_3}(\lambda_2)\ldots \bar{T}_{\epsilon_{2m},\epsilon_{2m-1}}(\lambda_m)
    = \sum_\mathbf{h} \prod_{j=1}^m d_\mathbf{h}(\lambda_j)\, \prod_{n=1}^N Q(\xi_n^{(h_n)})\, 
  \\
  \times
      \prod_{j=m}^1 \left[
         \left(-\oint_{\mathcal{C}_{j}^\infty }
              \frac{dz_{2j}}{2\pi i \, (z_{2j}-\lambda_j)}\, \frac{a(z_{2j})}{d_\mathbf{h}(z_{2j})}\,
              \frac{ Q (z_{2j}-\eta )}{Q (z_{2j})}\,
              \prod_{k=1}^{j-1}\frac{z_{2j}-\lambda_k-\eta}{z_{2j}-\lambda_k} 
              \right)^{2-\epsilon_{2j}} 
              \right.
     \\  
     \times
      \left.    
       \left( \oint_{\mathcal{C}_{j}^\infty}
              \frac{dz_{2j-1}}{2\pi i \, (z_{2j-1}-\lambda_j)}\, \frac{d(z_{2j-1})}{d_\mathbf{h}(z_{2j-1})}\,
              \frac{Q(z_{2j-1}+\eta )}{Q(z_{2j-1})}\,
              \prod_{k=1}^{j-1}\frac{z_{2j-1}-\lambda_k+\eta}{z_{2j-1}-\lambda_k} 
              \right) ^{2-\epsilon_{2j-1}}
              \right]      
      \\
      \times    
      \prod_{1\leq j<k\leq 2m}
      \left( \frac{z_j-z_k}{z_j-z_k+(-1)^{k}\eta }\right) ^{(2-\epsilon_j)(2-\epsilon_k)}\,
        V( \xi_1^{(h_1)},\ldots,\xi_N^{(h_N)}) \,  \bra{\mathbf{h}},
\end{multline}
where the contours $\mathcal{C}_j^\infty$ $1\le j\le 2m$, surround counterclockwise the points $q_n$, $1\le n \le R$, $\lambda_\ell$, $1\le \ell \le j$, the pole at infinity, and no other pole of the integrand.
\end{proposition}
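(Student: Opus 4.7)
The strategy is contour deformation, using the residue theorem on the Riemann sphere, to transform each contour integral in Proposition~\ref{prop-act-1} into one over the unified contour $\mathcal{C}_j^\infty$ of Proposition~\ref{prop-act-2}. Starting from \eqref{act-mult-int}, I would deform each $\Gamma_\ell$ (for $\ell=2j$ or $\ell=2j-1$) to $\mathcal{C}_j^\infty$ and check that the residues at all poles lying between the two contours either do not exist (because of cancellations in the integrand) or do not contribute to the residue count.

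First, I would carry out a careful pole analysis of the integrand as a function of $z_{2j}$, with the other $z_k$ and $\mathbf{h}$ held fixed. The rational factor $a(z_{2j})/d_\mathbf{h}(z_{2j})$ has effective poles only at $\xi_n$ with $h_n=0$: when $h_n=1$ the potential pole at $\xi_n-\eta$ of $1/d_\mathbf{h}$ is cancelled by the corresponding zero of $a(z_{2j})=\prod_n(z_{2j}-\xi_n+\eta)$. Analogously, $d(z_{2j-1})/d_\mathbf{h}(z_{2j-1})$ only retains poles at $\xi_n-\eta$ with $h_n=1$. The remaining finite poles of the $z_{2j}$ integrand are the roots $q_n$ of $Q$, the points $\lambda_\ell$ for $\ell\le j$ (from $1/(\lambda_j-z_{2j})$ and the product $\prod_{k<j}(z_{2j}-\lambda_k)^{-1}$), and the cross-term locations $z_{2k-1}+\eta$ and $z_{2k}-\eta$ produced by the last product in \eqref{act-mult-int}. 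At infinity the integrand decays like $1/z_{2j}$ (since $a/d_\mathbf{h}\to 1$ and $Q(\cdot-\eta)/Q\to 1$), giving a non-zero residue at $\infty$ as well.

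Second, I would invoke the residue theorem on $\mathbb{CP}^1$: the sum of all residues of a rational function, including that at $\infty$, vanishes. Hence the integral over $\Gamma_{2j}$ (which encircles $\xi_n$ and $z_{2k-1}+\eta$ for $k>j$) equals the negative of the integral over a counterclockwise loop enclosing all the remaining poles together with $\infty$. The heart of the argument is to identify that latter set precisely as the poles listed for $\mathcal{C}_j^\infty$, namely $q_n$, $\lambda_\ell$ for $\ell\le j$, and $\infty$. The potentially troublesome cross-term poles at $z_{2k'-1}+\eta$ with $k'\le j$ and at $z_{2k'}+\eta$ with $k'<j$ should not contribute, and I would establish this by an induction on $j$ from $j=m$ down to $j=1$: the contour chosen at a previous induction step for the later integration variable already encloses or cancels the corresponding cross-term poles, so they drop out of the complement. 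The sign generated by switching between the two complementary contours is then exactly compensated by the replacement $1/(\lambda_j-z_{2j})\to 1/(z_{2j}-\lambda_j)$ between \eqref{act-mult-int} and \eqref{act-mult-int-bis}, so that the overall $-$ prefactors are preserved.

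The same reasoning applies verbatim to each $z_{2j-1}$ integral, with $\Gamma_{2j-1}$ surrounding $\xi_n-\eta$ and $z_{2k}-\eta$ for $k\ge j$, the complementary poles again being $q_n$, $\lambda_\ell$ for $\ell\le j$, and $\infty$; this is what allows $\mathcal{C}_j^\infty$ to serve as the common contour for both $z_{2j}$ and $z_{2j-1}$. The hardest part of the proof will be the combinatorial bookkeeping of cross-term poles in the induction step, namely verifying that none of the ``phantom'' poles at $z_{2k'\pm 1}\pm\eta$ lying between the two contour systems contributes spuriously to the residue count --- this is exactly what the nested structure of the contours in Proposition~\ref{prop-act-1} was engineered to guarantee. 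Once this is settled, Proposition~\ref{prop-act-2} follows as a direct reformulation of Proposition~\ref{prop-act-1}.
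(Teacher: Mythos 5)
Your overall strategy --- deforming each $\Gamma$-contour of Proposition~\ref{prop-act-1} to its complement on the Riemann sphere, checking that the only surviving poles in the complement are the $q_n$, the $\lambda_\ell$ with $\ell\le j$ and the point at infinity, and absorbing the orientation sign into the replacement $1/(\lambda_j-z)\to 1/(z-\lambda_j)$ --- is exactly the paper's, and your analysis of the single-variable factors (cancellation of the poles of $1/d_\mathbf{h}$ at $\xi_n$ by $d$, resp.\ at $\xi_n-\eta$ by $a$, and the $1/z$ behaviour at infinity) is correct.

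However, the direction of your induction is the one point where the argument as written would fail. You propose to move the contours starting from $j=m$ down to $j=1$, i.e.\ starting with the outermost integrals; the paper's proof (see the intermediate representation \eqref{act-mult-int-n}) proceeds in the opposite order, moving first the innermost pair $z_1,z_2$ and working outward. The order matters because of a family of cross-term poles missing from your list: for the integrand in $z_{2j}$ these are the points $z_{2k}-\eta$ with $k>j$ (and, for the integrand in $z_{2j-1}$, the points $z_{2k-1}+\eta$ with $k>j$), which lie \emph{outside} $\Gamma_{2j}$ (resp.\ $\Gamma_{2j-1}$) and must therefore be shown to be regular for the complement to reduce to $\{q_n,\,\lambda_{\ell\le j},\,\infty\}$. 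In the paper's order the variables $z_{2k}$, $k>j$, have not yet been moved, so they are ultimately evaluated at the points $\xi_\ell$, and the resulting pole at $z_{2j}=\xi_\ell-\eta$ is killed by the zero of $a(z_{2j})$ there. In your order those variables already sit on $\mathcal{C}_k^\infty$, hence at a Bethe root $q_n$ (or at some $\lambda_\ell$), and the resulting pole at $z_{2j}=q_n-\eta$ is cancelled by nothing --- $Q(z_{2j}-\eta)$ vanishes at $q_n+\eta$, not at $q_n-\eta$, and neither $a(z_{2j})$ nor $\prod_{k<j}(z_{2j}-\lambda_k-\eta)$ vanishes there --- so it would contribute a spurious residue absent from \eqref{act-mult-int-bis}. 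Reversing the induction (innermost pair first) and adding this family of cross-term poles to your bookkeeping restores the argument and makes it coincide with the paper's.
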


\begin{proof}
Let us prove by recursion on $n$ the formula
\begin{multline}\label{act-mult-int-n}
  \bra{Q}\, \bar{T}_{\epsilon_2,\epsilon_1}(\lambda_1)\, \bar{T}_{\epsilon_4,\epsilon_3}(\lambda_2)\ldots \bar{T}_{\epsilon_{2m},\epsilon_{2m-1}}(\lambda_m)
    = \sum_\mathbf{h} \prod_{j=1}^m d_\mathbf{h}(\lambda_j)\, \prod_{n=1}^N Q(\xi_n^{(h_n)})\, 
  \\
  \times
      \prod_{j=m}^n \left[
         \left(-\oint_{\Gamma_{2j} }
              \frac{dz_{2j}}{2\pi i \, (\lambda_j-z_{2j})}\, \frac{a(z_{2j})}{d_\mathbf{h}(z_{2j})}\,
              \frac{ Q (z_{2j}-\eta )}{Q (z_{2j})}\,
              \prod_{k=1}^{j-1}\frac{z_{2j}-\lambda_k-\eta}{z_{2j}-\lambda_k} 
              \right)^{2-\epsilon_{2j}} 
              \right.
     \\  
     \times
      \left.    
       \left( \oint_{\Gamma_{2j-1 }}
              \frac{dz_{2j-1}}{2\pi i \, (\lambda_j-z_{2j-1})}\, \frac{d(z_{2j-1})}{d_\mathbf{h}(z_{2j-1})}\,
              \frac{Q(z_{2j-1}+\eta )}{Q(z_{2j-1})}\,
              \prod_{k=1}^{j-1}\frac{z_{2j-1}-\lambda_k+\eta}{z_{2j-1}-\lambda_k} 
              \right) ^{2-\epsilon_{2j-1}}
              \right]      
      \\
      \times 
      \prod_{j=n-1}^{1}\left[
         \left(\oint_{\mathcal{C}_{j}^\infty }
              \frac{dz_{2j}}{2\pi i \, (\lambda_j-z_{2j})}\, \frac{a(z_{2j})}{d_\mathbf{h}(z_{2j})}\,
              \frac{ Q (z_{2j}-\eta )}{Q (z_{2j})}\,
              \prod_{k=1}^{j-1}\frac{z_{2j}-\lambda_k-\eta}{z_{2j}-\lambda_k} 
              \right)^{2-\epsilon_{2j}} 
              \right.
     \\  
     \times
      \left.    
       \left( -\oint_{\mathcal{C}_{j}^\infty}
              \frac{dz_{2j-1}}{2\pi i \, (\lambda_j-z_{2j-1})}\, \frac{d(z_{2j-1})}{d_\mathbf{h}(z_{2j-1})}\,
              \frac{Q(z_{2j-1}+\eta )}{Q(z_{2j-1})}\,
              \prod_{k=1}^{j-1}\frac{z_{2j-1}-\lambda_k+\eta}{z_{2j-1}-\lambda_k} 
              \right) ^{2-\epsilon_{2j-1}}
              \right]      
      \\
      \times    
      \prod_{1\leq j<k\leq 2m}
      \left( \frac{z_j-z_k}{z_j-z_k+(-1)^{k}\eta }\right) ^{(2-\epsilon_j)(2-\epsilon_k)}\,
        V( \xi_1^{(h_1)},\ldots,\xi_N^{(h_N)}) \,  \bra{\mathbf{h}},
\end{multline}
which coincides with \eqref{act-mult-int} for $n=1$ and with \eqref{act-mult-int-bis} for $n=m$.

Let us suppose that \eqref{act-mult-int-n} holds for a given $n$, $1\le n < m$, and let us rewrite the integral over $z_{2n-1}$ using the poles outside of the integration contour $\Gamma_{2n-1}$. These poles are at the zeroes $q_1,\ldots, q_R$ of $Q$, at $\lambda_j$ for $j<n$ and at infinity. Note that the apparent poles at $\xi_j$, $1\le j\le N$, are in fact regular points due to the factor $d(z_{2n-1})$ in the numerator. Similarly, the poles at $z_{2k-1}+\eta$ for $k>n$ are also regular points since the integral over $z_{2k-1}$ has to be finally evaluated by its residue at $z_{2k-1}=\xi_\ell-\eta$ for some $\ell\in\{1,\ldots,N\}$. Finally, the apparent poles at $z_j-\eta$ for $j<2n-1$ are also regular points since the integral over $z_j$ is first evaluated by its residues at $\infty$ (and the corresponding factor disappears), at a roots $q_k$ of $Q$ (and the factor $Q(z_{2n-1}+\eta)$ in the numerator vanishes) or at $\lambda_j$ for $j<n$ (and the factor $z_{2j-1}-\lambda_j+\eta$ in the numerator vanishes). Hence the integral over  $z_{2n-1}$ can be rewritten as a contour integral surrounding the points $q_1,\ldots,q_R$, $\lambda_j$ for $j<n$, and $\infty$ with index $-1$. One then consider the integral over $z_{2n}$ and show similarly that the points $\xi_j-\eta$, $1\le j\le N$, $z_{2k}-\eta$, $k>n$, and $z_\ell+\eta$, $\ell<2n$, are regular points, so that the integral can be written as a contour integral around the poles at $q_1,\ldots,q_R$, $\lambda_j$ for $j<n$, and $\infty$ with index $-1$. Hence the representation \eqref{act-mult-int-n} holds also for $n+1$.
\end{proof}

The integral representation \eqref{act-mult-int-bis} can be evaluated as a sum over its residues, which leads to

\begin{cor}\label{prop-act-3}
The multiple action of a product of operators $\bar{T}_{\epsilon_2,\epsilon_1}(\lambda_1)\, \bar{T}_{\epsilon_4,\epsilon_3}(\lambda_2)\ldots \bar{T}_{\epsilon_{2m},\epsilon_{2m-1}}(\lambda_m)$, $\epsilon_i\in\{1,2\}$, $1\le i \le 2m$, on a generic  separate state $\bra{Q}$ of the form \eqref{separate-l} can be written as a sum over separate states of the form \eqref{separate-l} as
\begin{multline}\label{act-mult-res1}
  \bra{Q}\, \bar{T}_{\epsilon_2,\epsilon_1}(\lambda_1)\, \bar{T}_{\epsilon_4,\epsilon_3}(\lambda_2)\ldots \bar{T}_{\epsilon_{2m},\epsilon_{2m-1}}(\lambda_m)
    =\sum_{n_\infty=0}^{m_{\boldsymbol{\epsilon}}} (-1)^{(m-m_{\boldsymbol{\epsilon}}+n_\infty)N}
    \\
    \times
    \left[  \bra{Q}\, \bar{T}_{\epsilon_2,\epsilon_1}(\lambda_1)\, \bar{T}_{\epsilon_4,\epsilon_3}(\lambda_2)\ldots \bar{T}_{\epsilon_{2m},\epsilon_{2m-1}}(\lambda_m) \right]_{n_\infty},
\end{multline}
where
\begin{multline}\label{act-mult-res2}
  \left[  \bra{Q}\, \bar{T}_{\epsilon_2,\epsilon_1}(\lambda_1)\, \bar{T}_{\epsilon_4,\epsilon_3}(\lambda_2)\ldots \bar{T}_{\epsilon_{2m},\epsilon_{2m-1}}(\lambda_m) \right]_{n_\infty}
  \\
  = \sum_{ (\bar{\epsilon}_1,\ldots,\bar{\epsilon}_{2m})\in \mathcal{E}_{\boldsymbol{\epsilon},n_\infty}}
  \sum_{a_1=1}^{(R+1)\bar{\epsilon}_1}
  \sum_{\substack{a_2=1 \\ a_2\not= a_1}}^{(R+1)\bar{\epsilon}_2}
  \ldots 
  \sum_{\substack{a_{2m-1}=1 \\ a_{2m-1}\notin\{a_1,\ldots,a_{2m-2}\} }}^{(R+m)\bar{\epsilon}_{2m-1}}
  \sum_{\substack{a_{2m}=1 \\ a_{2m}\notin\{a_1,\ldots,a_{2m-1}\} }}^{(R+m)\bar{\epsilon}_{2m}}
  \\
  \times
  \prod_{j=1}^m  \left(
  \frac{ d(q_{a_{2j-1}})\, \prod_{k=1}^{R+j-1}(q_{a_{2j-1}}-q_k+\eta)}
         {\prod_{\substack{k=1 \\ k\not= a_{2j-1}}}^{R+j}(q_{a_2j-1}-q_k)}\right)^{\! \bar{\epsilon}_{2j-1}}
         \left(
  -\frac{ a(q_{a_{2j}})\, \prod_{k=1}^{R+j-1}(q_{a_{2j}}-q_k-\eta)}
         {\prod_{\substack{k=1 \\ k\not= a_{2j}}}^{R+j}(q_{a_2j}-q_k)}\right)^{\! \bar{\epsilon}_{2j}}
   \\
   \times
   \prod_{1\le j<k\le 2m} \left(\frac{q_{a_j}-q_{a_k}}{q_{a_j}-q_{a_k}+(-1)^k\eta}\right)^{\bar{\epsilon}_j\bar{\epsilon}_k}
   \ 
   \bra{\bar{Q}^{\boldsymbol{\lambda}}_{\boldsymbol{a},\bar{\boldsymbol{\epsilon}}}}.
\end{multline}
%
Here we have defined, for a given $2m$-tuple $\boldsymbol{\epsilon}\equiv(\epsilon_1,\ldots,\epsilon_{2m})$,
\begin{align}
  &m_{\boldsymbol{\epsilon}}=\sum_{j=1}^{2m}(2-\epsilon_j), \label{m_eps}\\
%
%
  & \mathcal{E}_{\boldsymbol{\epsilon},n_\infty}=\Big\{ (\bar{\epsilon}_1,\ldots,\bar{\epsilon}_{2m})\in\{0,1\}^N
   \mid \bar{\epsilon}_j\le 2-\epsilon_j  \ \text{and}\ \sum_{j=1}^{2m} \bar{\epsilon}_j= m_\epsilon -n_\infty\Big\}.
   \label{E_eps}
\end{align}
Moreover, we have used the shortcut notation
\begin{equation}\label{Q-ext}
  q_{R+j}=\lambda_j, \quad 1\le j \le m,
\end{equation}
and $\bar{Q}^{\boldsymbol{\lambda}}_{\boldsymbol{a},\bar{\boldsymbol{\epsilon}}}$ is a polynomial of degree $R+m-m_{\boldsymbol{\epsilon}}+n_\infty$ defined in terms of $Q$, of the $\lambda_k$, $1\le k \le m$, and of the $a_j$ and the $\bar{\epsilon}_j$ ($1\le j \le 2m$) as
%
\begin{equation}\label{Q-modified}
  \bar{Q}^{\boldsymbol{\lambda}}_{\boldsymbol{a},\bar{\boldsymbol{\epsilon}}}(\lambda)
  =Q(\lambda)\, \frac{\prod_{j=1}^{m}(\lambda-\lambda_j)}{\prod_{j=1}^{2m}(\lambda-q_{a_j})^{\bar{\epsilon}_j}}
  =\frac{\prod_{j=1}^{R+m}(\lambda-q_j)}{\prod_{j=1}^{2m}(\lambda-q_{a_j})^{\bar{\epsilon}_j}}.
\end{equation}
\end{cor}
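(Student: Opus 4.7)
The plan is to evaluate the multiple contour integral representation \eqref{act-mult-int-bis} of Proposition~\ref{prop-act-2} by summing the residues enclosed by each contour $\mathcal{C}_j^\infty$. For each non-trivial integral (those with $2-\epsilon_k=1$), the enclosed poles are at $z_k=q_1,\ldots,q_{R+j}$ (using the notation $q_{R+\ell}=\lambda_\ell$ of \eqref{Q-ext}) and at $z_k=\infty$. I would parametrize each choice of residue by an index $a_k$ together with a flag $\bar{\epsilon}_k\in\{0,1\}$ indicating whether the residue at a finite pole ($\bar{\epsilon}_k=1$, labelled by $a_k\in\{1,\ldots,R+j\}$) or the residue at infinity ($\bar{\epsilon}_k=0$) is picked; for trivial integrals ($\epsilon_k=2$) one sets $\bar{\epsilon}_k=0$ automatically, producing the constraint $\bar{\epsilon}_k\le 2-\epsilon_k$ and the definition \eqref{E_eps} of $\mathcal{E}_{\boldsymbol{\epsilon},n_\infty}$, with $n_\infty$ counting the residues at infinity. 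Before computing, I would observe that whenever two non-trivial integrals would be assigned the same finite label, $a_k=a_{k'}$ with $\bar{\epsilon}_k=\bar{\epsilon}_{k'}=1$, the factor $(z_k-z_{k'})^{(2-\epsilon_k)(2-\epsilon_{k'})}$ in the numerator of the cross-variable product forces the contribution to vanish; this yields the distinctness condition $a_k\notin\{a_1,\ldots,a_{k-1}\}$.

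Next I would compute each finite-pole residue explicitly. For $z_{2j-1}\to q_{a_{2j-1}}$, the simple pole is supplied either by $1/Q(z_{2j-1})$ (if $a_{2j-1}\le R$) or by $1/(z_{2j-1}-\lambda_\ell)$ (if $a_{2j-1}=R+\ell$); substituting $z_{2j-1}=q_{a_{2j-1}}$ into the remaining factors produces the coefficient
\begin{equation*}
\frac{d(q_{a_{2j-1}})\prod_{k=1}^{R+j-1}(q_{a_{2j-1}}-q_k+\eta)}{\prod_{k=1,\,k\neq a_{2j-1}}^{R+j}(q_{a_{2j-1}}-q_k)},
\end{equation*}
and analogously for $z_{2j}$; the cross-variable factor, evaluated at the chosen $q_{a_j}$'s, becomes the product $\prod_{j<k}\bigl((q_{a_j}-q_{a_k})/(q_{a_j}-q_{a_k}+(-1)^k\eta)\bigr)^{\bar{\epsilon}_j\bar{\epsilon}_k}$. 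The substitution also replaces $1/d_\mathbf{h}(z_k)$ by $1/d_\mathbf{h}(q_{a_k})$; combining with the prefactor $\prod_j d_\mathbf{h}(\lambda_j)$ and $\prod_n Q(\xi_n^{(h_n)})$, and applying the identity $\prod_n(\xi_n^{(h_n)}-\mu)=(-1)^N d_\mathbf{h}(\mu)$, the $\mathbf{h}$-dependent factor reorganizes exactly as $(-1)^{(m-m_{\boldsymbol{\epsilon}}+n_\infty)N}\prod_n\bar{Q}^{\boldsymbol{\lambda}}_{\boldsymbol{a},\bar{\boldsymbol{\epsilon}}}(\xi_n^{(h_n)})$, with $\bar{Q}^{\boldsymbol{\lambda}}_{\boldsymbol{a},\bar{\boldsymbol{\epsilon}}}$ as in \eqref{Q-modified}. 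This identifies the sum over $\mathbf{h}$ with the separate state $\bra{\bar{Q}^{\boldsymbol{\lambda}}_{\boldsymbol{a},\bar{\boldsymbol{\epsilon}}}}$ and reproduces the global sign in \eqref{act-mult-res1}.

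The subtle step, which I expect to be the main obstacle, is handling the residues at infinity uniformly with the finite-pole ones. For each integral with $\bar{\epsilon}_k=0$, the integrand over $z_k$ behaves as $1/z_k$ at infinity, since all rational factors of the form $d/d_\mathbf{h}$, $Q(\cdot\pm\eta)/Q$, $(z-\lambda_\ell\pm\eta)/(z-\lambda_\ell)$ tend to $1$, so the corresponding residue contributes a factor of $\pm 1$ independent of $\mathbf{h}$. In the formula \eqref{Q-modified} this matches the convention $(\lambda-q_{a_k})^{\bar{\epsilon}_k}=1$, so no new root is subtracted from $\bar{Q}^{\boldsymbol{\lambda}}_{\boldsymbol{a},\bar{\boldsymbol{\epsilon}}}$; the absent $1/d_\mathbf{h}(q_{a_k})$ factor relative to the finite-pole case is precisely what promotes a single $(-1)^N$ into the overall sign $(-1)^{(m-m_{\boldsymbol{\epsilon}}+n_\infty)N}$, and the explicit coefficient and cross-variable factors in \eqref{act-mult-res2} drop out since they are raised to the power $\bar{\epsilon}_k=0$. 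An induction on the number of non-trivial integrals already processed, together with careful tracking of these signs, then assembles \eqref{act-mult-res1}--\eqref{act-mult-res2}.
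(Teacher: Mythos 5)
Your proposal is correct and follows exactly the route of the paper's own (very terse) proof: Corollary~\ref{prop-act-3} is obtained by evaluating the multiple contour integrals of Proposition~\ref{prop-act-2} as sums over the residues at the enclosed poles $q_1,\ldots,q_{R+j}$ and at infinity, organizing the result by the number $n_\infty$ of residues taken at infinity. Your additional details --- the vanishing of coinciding indices via the numerator factor $(z_j-z_k)$, the explicit residue coefficients, and the sign bookkeeping through $\prod_n(\xi_n^{(h_n)}-\mu)=(-1)^N d_{\mathbf{h}}(\mu)$ leading to the separate state $\bra{\bar{Q}^{\boldsymbol{\lambda}}_{\boldsymbol{a},\bar{\boldsymbol{\epsilon}}}}$ and the prefactor $(-1)^{(m-m_{\boldsymbol{\epsilon}}+n_\infty)N}$ --- are all consistent with the statement and simply make explicit what the paper leaves implicit.
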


\begin{proof}
We are just writing the development of the multiple contour integrals \eqref{act-mult-int-bis}  in terms of the sum on the residues. Here $0\leq n_{\infty }\leq m_{\boldsymbol{\epsilon}}$ corresponds to the number of residues at infinity that we take so that we are organizing these sums w.r.t. $n_{\infty }$.
\end{proof}

Note that, in the expression \eqref{act-mult-res1}-\eqref{act-mult-res2}, we can now particularize the parameters $\lambda_i$, $1\le i \le m$, to be equal to some inhomogeneity parameters. We can therefore directly use   \eqref{act-mult-res1}-\eqref{act-mult-res2} to express the matrix elements of the form \eqref{el-block-def-1}.

\subsection{Multiple sum representation for the correlation functions in finite volume}
\label{sec-cor-sum}

As a consequence of the results of the previous subsection, we can now write any matrix elements of the form \eqref{el-block-def-1} as a sum over scalar products of separate states:
\begin{multline}\label{sum-el-bl1}
   F_{m}(\tau,\boldsymbol{\epsilon})
   =\prod_{k=1}^{m}\frac{1}{\tau(\xi_k)}\,
   \sum_{n_\infty=0}^{m_{\boldsymbol{\epsilon}}'} (-1)^{(m-m_{\boldsymbol{\epsilon}'}+n_\infty)N}
   \\
   \times
  \sum_{ (\bar{\epsilon}_1,\ldots,\bar{\epsilon}_{2m})\in \mathcal{E}_{\boldsymbol{\epsilon}',n_\infty}}
  \sum_{a_1=1}^{(R+1)\bar{\epsilon}_1}
  \sum_{\substack{a_2=1 \\ a_2\not= a_1}}^{(R+1)\bar{\epsilon}_2}
  \ldots 
  \sum_{\substack{a_{2m-1}=1 \\ a_{2m-1}\notin\{a_1,\ldots,a_{2m-2}\} }}^{(R+m)\bar{\epsilon}_{2m-1}}
  \sum_{\substack{a_{2m}=1 \\ a_{2m}\notin\{a_1,\ldots,a_{2m-1}\} }}^{(R+m)\bar{\epsilon}_{2m}}\,
  \\
  \times
    \prod_{j=1}^m  \left(
  \frac{ d(q_{a_{2j-1}})\, \prod_{k=1}^{R+j-1}(q_{a_{2j-1}}-q_k+\eta)}
         {\prod_{\substack{k=1 \\ k\not= a_{2j-1}}}^{R+j}(q_{a_2j-1}-q_k)}\right)^{\! \bar{\epsilon}_{2j-1}}         \left(
  -\frac{ a(q_{a_{2j}})\, \prod_{k=1}^{R+j-1}(q_{a_{2j}}-q_k-\eta)}
         {\prod_{\substack{k=1 \\ k\not= a_{2j}}}^{R+j}(q_{a_2j}-q_k)}\right)^{\! \bar{\epsilon}_{2j}}
   \\
   \times
   \prod_{1\le j<k\le 2m}\!\! \left(\frac{q_{a_j}-q_{a_k}}{q_{a_j}-q_{a_k}+(-1)^k\eta}\right)^{\! \bar{\epsilon}_j\bar{\epsilon}_k}
   \
   \frac{\moy{\bar{Q}^{\boldsymbol{\xi}}_{\boldsymbol{a},\bar{\boldsymbol{\epsilon}}}\, |\, Q_\tau}}{\moy{Q_\tau\, |\,  Q_\tau}},
\end{multline}
in which we have defined the $2m$-tuple $\boldsymbol{\epsilon}'\equiv(\epsilon'_1,\ldots,\epsilon'_{2m})$ in terms of the $2m$-tuple $\boldsymbol{\epsilon}\equiv(\epsilon_1,\ldots,\epsilon_{2m})$ by
\begin{equation}
   \epsilon'_{2j-1}=\epsilon_{2j-1},
   \qquad
   \epsilon'_{2j}=3-\epsilon_{2j},
   \qquad
   1\le j \le m,
\end{equation}
and defined $m_{\boldsymbol{\epsilon}'}$, $\mathcal{E}_{\boldsymbol{\epsilon}',n_\infty}$ as in \eqref{m_eps}-\eqref{E_eps} but in terms of $\boldsymbol{\epsilon}'$ rather than $\boldsymbol{\epsilon}$.
Similarly as in \eqref{Q-ext}-\eqref{Q-modified}, we have used the shortcut notations:
\begin{equation}\label{Q-ext}
  q_{R+j}=\xi_j, \quad 1\le j \le m,
\end{equation}
and $\bar{Q}^{\boldsymbol{\xi}}_{\boldsymbol{a},\bar{\boldsymbol{\epsilon}}}$ is a polynomial of degree $R+m-m_{\boldsymbol{\epsilon}'}+n_\infty$ defined in terms of $Q\equiv Q_\tau$, of the $\xi_k$, $1\le k \le m$, and of the $a_j$ and the $\bar{\epsilon}_j$ ($1\le j \le 2m$) as
%
\begin{equation}\label{Q-modified}
  \bar{Q}^{\boldsymbol{\xi}}_{\boldsymbol{a},\bar{\boldsymbol{\epsilon}}}(\lambda)
  =Q(\lambda)\, \frac{\prod_{j=1}^{m}(\lambda-\xi_j)}{\prod_{j=1}^{2m}(\lambda-q_{a_j})^{\bar{\epsilon}_j}}
  =\frac{\prod_{j=1}^{R+m}(\lambda-q_j)}{\prod_{j=1}^{2m}(\lambda-q_{a_j})^{\bar{\epsilon}_j}}.
\end{equation}
We also recall that $R$ is the degree of the polynomial $Q_\tau$.

This expression \eqref{sum-el-bl1} can be rewritten with similar notations as those used in the periodic case \cite{KitMT00}. 

\begin{prop}\label{prop-sum-el-bl}
For a given $2m$-tuple $\boldsymbol{\epsilon}\equiv(\epsilon_1,\ldots,\epsilon_{2m})\in\{1,2\}^{2m}$, let us define the sets $\alpha^-_{\boldsymbol{\epsilon}}$ and $\alpha^+_{\boldsymbol{\epsilon}}$ as
\begin{alignat}{2}
 &\alpha^-_{\boldsymbol{\epsilon}}=\{j: 1\le j \le m, \epsilon_{2j-1}=1\}, \quad & 
 &\#\alpha^-_{\boldsymbol{\epsilon}}=s_{\boldsymbol{\epsilon}}
 \label{alpha-}\\
 &\alpha^+_{\boldsymbol{\epsilon}}=\{j: 1\le j \le m, \epsilon_{2j}=2\},\quad & 
 &\#\alpha^+_{\boldsymbol{\epsilon}}=s'_{\boldsymbol{\epsilon}}.
 \label{alpha+}
\end{alignat}
Then,
\begin{multline}\label{sum-el-bl2}
   F_{m}(\tau,\boldsymbol{\epsilon})
   =\prod_{k=1}^{m}\frac{1}{\tau(\xi_k)}\,
    \sum_{\substack{
               \bar{\alpha}^-_{\boldsymbol{\epsilon}}\subset \alpha^-_{\boldsymbol{\epsilon}}\\
               \bar{\alpha}^+_{\boldsymbol{\epsilon}}\subset \alpha^+_{\boldsymbol{\epsilon}}
               }}
   (-1)^{(m-\#\bar{\alpha}^-+\#\bar{\alpha}^+)N}
   \sum_{\{a_j,a'_j\} }
  \\
  \times
    \prod_{j\in\bar\alpha^-_{\boldsymbol{\epsilon}}}  \left(
  \frac{ d(q_{a_j})\, \prod_{\substack{k=1\\ k\in\mathbf{A}_j}}^{R+j-1}(q_{a_j}-q_k+\eta)}
         {\prod_{\substack{k=1 \\ k\in\mathbf{A}'_j}}^{R+j}(q_{a_j}-q_k)}
         \right)
     \prod_{j\in\bar\alpha^+_{\boldsymbol{\epsilon}}}           \left(
 - \frac{ a(q_{a'_j})\, \prod_{\substack{k=1\\ k\in\mathbf{A}'_j}}^{R+j-1}(q_k-q_{a'_j}+\eta)}
         {\prod_{\substack{k=1 \\ k\in\mathbf{A}_{j+1} }}^{R+j}(q_k-q_{a'_j})}
         \right)
   \\
   \times
   \frac{\moy{\bar{Q}_{\mathbf{A}_{m+1}}\, |\, Q_\tau}}{\moy{Q_\tau\, |\,  Q_\tau}}.
\end{multline}
In \eqref{sum-el-bl2}, the first summation is taken over all subsets $\bar\alpha_{\boldsymbol{\epsilon}}^-$ of $\alpha_{\boldsymbol{\epsilon}}^-$ and $\bar\alpha_{\boldsymbol{\epsilon}}^+$ of $\alpha_{\boldsymbol{\epsilon}}^+$, whereas the second summation is taken over the indices $a_j$ for $j\in\bar\alpha_{\boldsymbol{\epsilon}}^-$ and $a'_j$ for $j\in\bar\alpha^+_{\boldsymbol{\epsilon}}$ such that
\begin{equation}\label{indices-a}
   1\le a_j \le R+j, \quad a_j\in\mathbf{A}_j,
   \qquad
   1\le a'_j\le R+j, \quad a'_j\in\mathbf{A}'_j,
\end{equation}
where
\begin{align}
  &\mathbf{A}_j=\{ b: 1\le b\le R+m, b\not= a_k, a'_k, k<j\},
  \label{Aj}\\
  &\mathbf{A}'_j=\{ b: 1\le b\le R+m, b\not= a'_k, k<j\ \text{and}\ b\not= a_k, k\le j\}\}.
  \label{Aprimej}
\end{align}
Moreover, $\bar{Q}_{\mathbf{A}_{m+1}}$ is the polynomial of degree $\#\mathbf{A}_{m+1}=R+m-\#\bar\alpha_- -\#\bar\alpha_+$ defined in terms of the roots $q_1,\ldots,q_R$ of $Q_\tau$ and of $q_{R+j}\equiv\xi_j$, $1\le j\le m$, as
\begin{equation}\label{Q-Am+1}
    \bar{Q}_{\mathbf{A}_{m+1}}(\lambda)=\prod_{j\in\mathbf{A}_{m+1}}(\lambda-q_{j}).
\end{equation}
\end{prop}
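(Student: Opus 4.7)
The plan is to derive \eqref{sum-el-bl2} from \eqref{sum-el-bl1} by a combinatorial rearrangement of the summations followed by a systematic cancellation using the ``crossing factors'' $\bigl(\tfrac{q_{a_j}-q_{a_k}}{q_{a_j}-q_{a_k}+(-1)^k\eta}\bigr)^{\bar{\epsilon}_j\bar{\epsilon}_k}$. First, I would observe that the constraint $\bar{\epsilon}_j\le 2-\epsilon'_j$ defining $\mathcal{E}_{\boldsymbol{\epsilon}',n_\infty}$ forces $\bar{\epsilon}_{2j-1}=1$ only when $\epsilon_{2j-1}=1$ (that is, $j\in\alpha^-_{\boldsymbol{\epsilon}}$) and $\bar{\epsilon}_{2j}=1$ only when $\epsilon_{2j}=2$ (that is, $j\in\alpha^+_{\boldsymbol{\epsilon}}$). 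Hence specifying a tuple $(\bar{\epsilon}_1,\ldots,\bar{\epsilon}_{2m})$ subject to $\sum_j\bar{\epsilon}_j=m_{\boldsymbol{\epsilon}'}-n_\infty$ is equivalent to specifying two subsets $\bar{\alpha}^-_{\boldsymbol{\epsilon}}=\{j:\bar{\epsilon}_{2j-1}=1\}\subset\alpha^-_{\boldsymbol{\epsilon}}$ and $\bar{\alpha}^+_{\boldsymbol{\epsilon}}=\{j:\bar{\epsilon}_{2j}=1\}\subset\alpha^+_{\boldsymbol{\epsilon}}$, the integer $n_\infty=m_{\boldsymbol{\epsilon}'}-\#\bar{\alpha}^-_{\boldsymbol{\epsilon}}-\#\bar{\alpha}^+_{\boldsymbol{\epsilon}}$ being then determined and the outer $n_\infty$-sum disappearing. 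Renaming the ``active'' summation variables as $a_j:=a_{2j-1}$ for $j\in\bar{\alpha}^-_{\boldsymbol{\epsilon}}$ and $a'_j:=a_{2j}$ for $j\in\bar{\alpha}^+_{\boldsymbol{\epsilon}}$, the exclusion conditions on $a_{2j-1},a_{2j}$ translate directly into $a_j\in\mathbf{A}_j$ and $a'_j\in\mathbf{A}'_j$; the sign $(-1)^{(m-m_{\boldsymbol{\epsilon}'}+n_\infty)N}$ becomes $(-1)^{(m-\#\bar{\alpha}^-_{\boldsymbol{\epsilon}}-\#\bar{\alpha}^+_{\boldsymbol{\epsilon}})N}$, which is congruent to $(-1)^{(m-\#\bar{\alpha}^-_{\boldsymbol{\epsilon}}+\#\bar{\alpha}^+_{\boldsymbol{\epsilon}})N}$ modulo $2$; and the polynomial $\bar{Q}^{\boldsymbol{\xi}}_{\boldsymbol{a},\bar{\boldsymbol{\epsilon}}}$ of \eqref{Q-modified} coincides by inspection with $\bar{Q}_{\mathbf{A}_{m+1}}$ of \eqref{Q-Am+1}.

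The non-trivial step is to show that the per-$j$ factors of \eqref{sum-el-bl1}, whose numerators extend over \emph{all} indices $k\in\{1,\ldots,R+j-1\}$ and whose denominators over all $k\in\{1,\ldots,R+j\}\setminus\{a_j\text{ or }a'_j\}$, combine with the crossing factors $\prod_{1\le j<k\le 2m}(\cdots)^{\bar{\epsilon}_j\bar{\epsilon}_k}$ to yield the \emph{restricted} products over $k\in\mathbf{A}_j$ and $k\in\mathbf{A}'_j$ appearing in \eqref{sum-el-bl2}. I would carry this out pairwise: for each pair of previously chosen indices (one of $a_\ell,a'_\ell$ with $\ell<j$ paired with one of $a_j,a'_j$), the associated crossing factor has numerator $\pm(q_{a_j}-q_{a_\ell})$ and denominator $\pm(q_{a_j}-q_{a_\ell}\pm\eta)$. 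The denominator cancels the unwanted factor $(q_{a_j}-q_{a_\ell}\pm\eta)$ sitting in the numerator of the $j$-th per-$j$ factor, while the numerator cancels the unwanted factor $(q_{a_j}-q_{a_\ell})$ in its denominator, leaving at most a global sign. After performing all such cancellations, the remaining products run only over the ``not-yet-chosen'' index sets $\mathbf{A}_j$ and $\mathbf{A}'_j$, which is precisely the content of \eqref{sum-el-bl2}.

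The main obstacle will be to check carefully that these cancellations produce the expected formula in all four pair configurations --- $(2\ell-1,2j-1)$, $(2\ell,2j-1)$, $(2\ell-1,2j)$ and $(2\ell,2j)$ with $\ell<j$, plus the diagonal pair $(2j-1,2j)$ that arises when $j\in\bar{\alpha}^-_{\boldsymbol{\epsilon}}\cap\bar{\alpha}^+_{\boldsymbol{\epsilon}}$ (the $\bar{A}$-type contribution) --- and that the residual signs from the reversed-argument product appearing in the $a'_j$-factor of \eqref{sum-el-bl2} combine consistently with the global prefactor $(-1)^{(m-\#\bar{\alpha}^-_{\boldsymbol{\epsilon}}+\#\bar{\alpha}^+_{\boldsymbol{\epsilon}})N}$. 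The asymmetric choice of sign $(-1)^k$ in the crossing factors of \eqref{act-mult-res2}, together with the asymmetry between numerator bounds $R+j-1$ and denominator bounds $R+j$ in the per-$j$ factors, are precisely what makes these cancellations balanced; once the case analysis is completed the identity \eqref{sum-el-bl2} follows.
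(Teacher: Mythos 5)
Your proposal is correct and follows the route the paper intends: Proposition~\ref{prop-sum-el-bl} is stated there without an explicit proof, as a mere rewriting of \eqref{sum-el-bl1}, and your reorganization of the sum over $n_\infty$ and $\mathcal{E}_{\boldsymbol{\epsilon}',n_\infty}$ into a sum over subsets $\bar\alpha^\pm_{\boldsymbol{\epsilon}}$ (cf.\ the remark containing \eqref{sum-n_infty}), together with the pairwise cancellation of the crossing factors against the unwanted numerator and denominator terms, is exactly the missing bookkeeping. The one configuration where the cancellation narrative formally breaks down --- the diagonal pair with $a_j=R+j$, for which the numerator of the $2j$-th factor contains no term to absorb the crossing denominator $(q_{a_j}-q_{a'_j}+\eta)$ --- is harmless, since such terms carry the factor $d(q_{a_j})=d(\xi_j)=0$ and therefore vanish identically on both sides of the claimed identity.
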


\begin{rem}
  The set \eqref{alpha+} and \eqref{alpha-} are in fact complementary to the set $\alpha^+$ and $\alpha^-$ defined in \cite{KitMT00} in the periodic case. One recovers the same sets by considering the sets for $F_{m}(\tau,3-\boldsymbol{\epsilon})$ using the fact that $F_{m}(\tau,\boldsymbol{\epsilon})=F_{m}(\tau,3-\boldsymbol{\epsilon})$ \eqref{sym-Gamma^x} due to the $\Gamma^x$ symmetry.
\end{rem}

\begin{rem}
    The sum over the subsets $\bar{\alpha}^-_{\boldsymbol{\epsilon}}$ and $\bar{\alpha}^+_{\boldsymbol{\epsilon}}$ of $\alpha^-_{\boldsymbol{\epsilon}}$ and $\alpha^+_{\boldsymbol{\epsilon}}$ can be organized as in \eqref{sum-el-bl1} in terms of the number $n_\infty$ of residues taken at infinity by writing
    \begin{equation}\label{sum-n_infty}
    \sum_{\substack{
               \bar{\alpha}^-_{\boldsymbol{\epsilon}}\subset \alpha^-_{\boldsymbol{\epsilon}}\\
               \bar{\alpha}^+_{\boldsymbol{\epsilon}}\subset \alpha^+_{\boldsymbol{\epsilon}}
               }}
                            =
    \sum_{n_\infty=0}^{s_{\boldsymbol{\epsilon}}+s'_{\boldsymbol{\epsilon}}} 
   \sum_{\substack{\bar{\alpha}^-_{\boldsymbol{\epsilon}}\subset \alpha^-_{\boldsymbol{\epsilon}}\\
                            \bar{\alpha}^+_{\boldsymbol{\epsilon}}\subset \alpha^+_{\boldsymbol{\epsilon}}\\
            \#\bar{\alpha}^-+\#\bar{\alpha}^+=s_{\boldsymbol{\epsilon}}+s'_{\boldsymbol{\epsilon}}-n_\infty}}
    \end{equation}
\end{rem}

Each scalar products of separate states appearing in \eqref{sum-el-bl1} ot \eqref{sum-el-bl2} can now be expressed in terms of generalized Slavnov's determinants using the results of \cite{KitMNT16}. Using Theorem~3.3 of \cite{KitMNT16}, we can write
\begin{align}
   \frac{\moy{\bar{Q}_{\mathbf{A}_{m+1}}\, |\, Q_\tau}}{\moy{Q_\tau\, |\,  Q_\tau}}
   &= 0 
   \qquad 
   \text{if}\quad m<\# \bar{\alpha}^-_{\boldsymbol{\epsilon}}+\# \bar{\alpha}^+_{\boldsymbol{\epsilon}},
   \label{sp-1}\\
   &= (-1)^{N(R-\bar R)}\, 2^{R-\bar{R}}\,\frac{\prod_{j=1}^{\bar{R}}\left[ -a(\bar{q}_j)\prod_{k=1}^R(q_k-\bar{q}_j+\eta)\right]}{\prod_{j=1}^R\left[ -a(q_j)\prod_{k=1}^R(q_k-q_j+\eta)\right]}\,
    \frac{V(q_R,\ldots,q_1)}{V(\bar{q}_{\bar{R}},\ldots,\bar{q}_1)}
   \nonumber\\
   &\hspace{2cm} \times
   \frac{\det_{\bar{R}} \mathcal{M}^{(-)}(\mathbf{q}\,|\, \mathbf{\bar{q}})}{\det_{R}\mathcal{N}^{(-)}(\mathbf{q})}
   \qquad 
   \text{if}\quad m \ge \# \bar{\alpha}^-_{\boldsymbol{\epsilon}}+\# \bar{\alpha}^+_{\boldsymbol{\epsilon}}.
   \label{sp-2}
\end{align}
Here we have set
\begin{align}
  &\bar{R}=\#\mathbf{A}_{m+1}=R+m-\#\bar\alpha_- -\#\bar\alpha_+, \label{def-barR}\\
  &\{q_j\}_{j\in\mathbf{A}_{m+1}}=\{\bar{q}_1,\ldots,\bar{q}_{\bar R}\} \label{def-barq}.
\end{align}
Moreover, for $\bar{R}\ge R$, the matrix $\mathcal{M}^{(-)}(\mathbf{q}\,|\, \mathbf{\bar{q}})$ is defined in terms of the $R$-tuple $\mathbf{q}=(q_1,\ldots,q_R)$ and of the $\bar{R}$-tuple $\mathbf{\bar{q}}=(\bar{q}_1,\ldots,\bar{q}_{\bar{R}})$ as
\begin{equation}\label{def-Mqq'}
  \left[ \mathcal{M}^{(-)}(\mathbf{q}\, |\, \mathbf{\bar{q}})\right]_{j,k}
   = \begin{cases}
      t(q_j-\bar{q}_k)+\mathfrak{a}_Q(\bar{q}_k)\, t(\bar{q}_k-q_j) & \quad\text{if } j\le R,\vspace{1mm}\\
      (\bar{q}_k)^{j-R-1}+\mathfrak{a}_Q(\bar{q}_k)\, (\bar{q}_k+\eta)^{j-R-1} & \quad\text{if }j>R,
      \end{cases}
\end{equation}
whereas the matrix $\mathcal{N}^{(-)}(\mathbf{q})$ is given by
\begin{equation}\label{def-Nq}
   \left[ \mathcal{N}^{(-)}(\mathbf{q})\right]_{j,k}
   =\frac{\mathfrak{a}_Q'(q_j)}{\mathfrak{a}_Q(q_j)}\, \delta_{j,k}+K(q_j-q_k),
\end{equation}
with $\mathfrak{a}_Q$ being given in terms of the roots $\{q_1,\ldots,q_R\}$  of $Q\equiv Q_\tau$ as in \eqref{def-frak-a} and
\begin{equation}\label{def-t-K}
   t(\lambda)=\frac{\eta}{\lambda(\lambda+\eta)},
   \qquad
   K(\lambda)=t(\lambda)+t(-\lambda)=\frac{2\eta}{(\lambda+\eta)(\lambda-\eta)}.
\end{equation}
Note that, for $\{q_1,\ldots,q_R\}$ solution of the anti-periodic Bethe equations $\mathfrak{a}_Q(q_j)=1$, $j=1,\ldots,R$, one has
\begin{equation}
  \mathcal{M}^{(-)}(\mathbf{q}\, |\, \mathbf{q})=\mathcal{N}^{(-)}(\mathbf{q}).
\end{equation}

\section{Infinite-size correlation functions of the anti-periodic XXX chain}
\label{sec-cor-limit}

We now explain how to take the thermodynamic limit of the result obtained in the previous section for $\ket{Q_\tau}$ being, in the homogeneous limit, one of the ground state of \eqref{Ham}. This will lead to multiple integral representations for the zero-temperature correlation functions of the anti-periodic XXX chain in the thermodynamic limit which coincide in this limit with the results obtained in the periodic case in \cite{KitMT99}, and directly in the infinite size model in \cite{JimM95L}.

\subsection{Vanishing and non-vanishing terms in the thermodynamic limit}
\label{sec-leading}

In this subsection we find the conditions under which the terms of the expansion \eqref{sum-el-bl2} are non-zero in the thermodynamic limit for $\ket{Q_\tau}$ being the ground state of the XXX chain \eqref{Ham}.

We first compute the ratio of scalar products appearing in the last line of \eqref{sum-el-bl2} in the thermodynamic limit.

\begin{proposition}\label{prop-vanish-sp}
Let $Q$ be a polynomial of the form
\begin{equation}
   Q(\lambda)=\prod_{j=1}^R(\lambda-q_j)
\end{equation}
with roots $q_1,\ldots,q_R$ solving the system of anti-periodic Bethe equations $\mathfrak{a}_Q(q_j)=1$, $j=1,\ldots,R$, where $\mathfrak{a}_Q$ is defined as in \eqref{def-frak-a}.
We moreover suppose that $R$ scales as $N$ in the thermodynamic limit and that the roots $q_1,\ldots,q_R$ become in these limits distributed on the real axis according to the density $\rho_\mathrm{tot}$ \eqref{rho-inhom}, \eqref{rho}.

Let  $\bar{Q}$ be a polynomial built from $Q$ in the form
\begin{equation}\label{form-barQ}
   \bar{Q}(\lambda)=\prod_{j=1}^{R'}(\lambda-q_{\sigma_j})\prod_{k=1}^{m'}(\lambda-\xi_{\pi_k})
\end{equation}
where $\sigma$ and $\pi$ are permutations of $\{1,\ldots,R\}$ and of $\{1,\ldots ,N\}$ respectively, and where 
$R-R'$ and $m'$ remain finite 
in the thermodynamic limit.

Then, 
\begin{equation}\label{vanishing-sp}
    \frac{ \moy{Q\,|\, \bar{Q}}}{\moy{Q\,|\, Q}}
    = 
    \frac{ \moy{\bar{Q}\,|\, Q}}{\moy{Q\,|\, Q}}
    =
    \begin{cases}
    0 &\text{if}\quad R'+m'<R,\\
    o\!\left(\frac{1}{N^{R-R'}}\right) &\text{if}\quad R'+m'>R,
    \end{cases}
\end{equation}
whereas, if $R'+m'=R$,
\begin{multline}\label{sp-th}
    \frac{ \moy{Q\,|\, \bar{Q}}}{\moy{Q\,|\, Q}}
    =  \frac{ \moy{\bar{Q}\,|\, Q}}{\moy{Q\,|\, Q}}
    \underset{N\to\infty}{\sim}
    \prod_{j=1}^{m'}\left\{ \frac{a(\xi_{\pi_j})\prod_{k=1}^R(q_k-\xi_{\pi_j}+\eta)}
          {a(q_{\sigma_{R'+j}})\prod_{k=1}^R(q_k-q_{\sigma_{R'+j}}+\eta)}\,
   \prod_{i=1}^{R'}\frac{q_{\sigma_i}-q_{\sigma_{R'+j}} }{ q_{\sigma_i}-\xi_{\pi_j} }
   \right\}
   \\
   \times 
   \prod_{1\le i<j\le m'} \frac{q_{\sigma_{R'+i}}-q_{\sigma_{R'+j}} }{\xi_{\pi_i}-\xi_{\pi_j}}\,
   \det_{1\le j,k\le m'} \frac{\rho(q_{\sigma_{R'+j}}-\xi_{\pi_k}+\eta/2)}{N\,\rho_\mathrm{tot}(q_{\sigma_{R'+j}})} .
\end{multline}
\end{proposition}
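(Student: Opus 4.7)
The plan is to apply the Slavnov-type determinant formula \eqref{sp-1}--\eqref{sp-2} specialised to the form \eqref{form-barQ} of $\bar{Q}$, and extract the leading behavior in $N$ using the Bethe equations $\mathfrak{a}_Q(q_j)=1$ together with the thermodynamic density \eqref{rho-inhom}. The case $R'+m'<R$ is immediate: $\bar{R}=R'+m'<R$ triggers \eqref{sp-1} and gives identically zero. For $R'+m'\geq R$ the formula \eqref{sp-2} applies, and two key simplifications occur column by column in $\mathcal{M}^{(-)}(\mathbf{q}\,|\,\bar{\mathbf{q}})$: for a shared column $\bar{q}_k=q_{\sigma_k}$ ($1\leq k\leq R'$), the Bethe equation $\mathfrak{a}_Q(q_{\sigma_k})=1$ makes that column coincide (in the limit $\bar{q}_k\to q_{\sigma_k}$) with the $\sigma_k$-th column of $\mathcal{N}^{(-)}(\mathbf{q})$; for a column indexed by an inhomogeneity $\bar{q}=\xi_\pi$, the identity $d(\xi_\pi)=0$ gives $\mathfrak{a}_Q(\xi_\pi)=0$ and reduces the upper-block entries to $t(q_j-\xi_\pi)$.

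In the critical case $R'+m'=R$, both $\mathcal{M}^{(-)}$ and $\mathcal{N}^{(-)}$ are $R\times R$ and contain only the upper block, and the shared-root columns of $\mathcal{M}^{(-)}$ are literally copies of columns of $\mathcal{N}^{(-)}$. By the standard column-replacement identity for determinants, the ratio $\det\mathcal{M}^{(-)}/\det\mathcal{N}^{(-)}$ reduces to an $m'\times m'$ determinant built from the rows indexed by the hole indices $\sigma_{R'+1},\ldots,\sigma_R$ of $(\mathcal{N}^{(-)})^{-1}\mathcal{M}^{(-)}$, restricted to the $m'=R-R'$ inhomogeneity columns. In the thermodynamic limit $\mathcal{N}^{(-)}$ is dominated by its diagonal of order $-2\pi iN\rho_\mathrm{tot}$ (via \eqref{lim-der-counting}), and the action of $(\mathcal{N}^{(-)})^{-1}$ on the $t$-type columns of $\mathcal{M}^{(-)}$ is controlled by the inverse of the operator $2\pi\rho_\mathrm{tot}\,\mathrm{Id}+K$; by the integral equation \eqref{eq-int-inh} satisfied by $\rho_\mathrm{tot}$, this produces at leading order precisely the density factor $\rho(q_{\sigma_{R'+j}}-\xi_{\pi_k}+\eta/2)/(N\rho_\mathrm{tot}(q_{\sigma_{R'+j}}))$ entering the determinant in \eqref{sp-th}. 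Careful tracking of the Vandermonde ratio $V(\mathbf{q})/V(\bar{\mathbf{q}})$ and of the normalising products $a(q_j)\prod_k(q_k-q_j+\eta)$ in \eqref{sp-2} then reproduces the explicit prefactor of \eqref{sp-th}.

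For $R'+m'>R$, $\mathcal{M}^{(-)}$ acquires $\bar{R}-R\geq 1$ additional ``power'' rows whose entries carry no $N$-enhancement. The same column-replacement argument leaves a determinant of size $m'>R-R'$, of which only $R-R'$ entries can benefit from the $1/N$ scaling coming from the diagonal of $(\mathcal{N}^{(-)})^{-1}$; the remaining $m'-(R-R')\geq 1$ rows coming from the power block contribute at strictly subleading order, producing the bound $o(1/N^{R-R'})$. The principal obstacle lies in the rigorous identification, in the critical case, of the inverse $(\mathcal{N}^{(-)})^{-1}$ at leading order as $1/(N\rho_\mathrm{tot})$ plus the appropriate $K$-convolution correction, together with the verification via \eqref{eq-int-inh} that the resulting convolution of $t(\lambda-\xi_{\pi_k})$ reconstructs exactly $\rho(\lambda-\xi_{\pi_k}+\eta/2)$; combined with the combinatorial check that the various Vandermonde-type ratios and normalising products in \eqref{sp-2} recombine, without leftover terms, into the compact prefactor of \eqref{sp-th}.
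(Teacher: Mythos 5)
Your treatment of the cases $R'+m'<R$ and $R'+m'=R$ follows essentially the same route as the paper: invoke \eqref{sp-1}--\eqref{sp-2}, observe that the Bethe equations turn the shared-root columns of $\mathcal{M}^{(-)}$ in \eqref{def-Mqq'} into columns of $\mathcal{N}^{(-)}$ in \eqref{def-Nq} (and that $\mathfrak{a}_Q(\xi_{\pi_k})=0$ reduces the inhomogeneity columns to $t(q_j-\xi_{\pi_k})$), so that $\det\mathcal{M}^{(-)}/\det\mathcal{N}^{(-)}$ collapses to the $m'\times m'$ minor of $\mathcal{N}^{-1}\mathcal{M}^{(1,2)}$ on the rows of the removed roots, whose entries are evaluated in the thermodynamic limit via the integral equation as in \eqref{lim-ratio-el2}. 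That part is fine.

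The gap is in the case $R'+m'>R$. Your counting argument --- $R-R'$ rows of the reduced $m'\times m'$ Schur complement carry a $1/N$, while the $\bar n=R'+m'-R\ge 1$ rows coming from the power block ``carry no $N$-enhancement'' --- only yields $O(1/N^{R-R'})$, not the claimed $o(1/N^{R-R'})$, and you give no reason why $O(1)$ rows should ``contribute at strictly subleading order.'' The distinction between $O$ and $o$ here is not cosmetic: in the proof of Corollary~\ref{cor-vanish}, the $R-R'$ sums over Bethe roots in \eqref{sum-el-bl2} each produce a factor of order $N$, so a ratio that is merely $O(1/N^{R-R'})$ would give an $O(1)$ contribution and the terms with $\bar R>R$ would \emph{not} vanish. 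The missing ingredient is an actual cancellation in the power rows of the Schur complement \eqref{el-S'}: the first such row is
\begin{equation*}
  \mathcal{S}'_{R-R'+1,k}=1-2\sum_{\ell=1}^{R'}\big[\mathcal{N}^{-1}\mathcal{M}^{(1,2)}\big]_{\ell,k}
  \underset{N\to\infty}{\longrightarrow}
  1-2\int_{-\infty}^{\infty}\rho(\lambda-\xi_{\pi_k}+\eta/2)\,d\lambda=0,
\end{equation*}
using the normalization of the density \eqref{rho}. It is this vanishing row (an entire row of the $m'\times m'$ determinant tending to zero while the others stay bounded) that upgrades the bound from $O(1/N^{R-R'})$ to $o(1/N^{R-R'})$. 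Without identifying this sum rule your argument does not establish \eqref{vanishing-sp} in the overcounted case, and the downstream selection rule $s_{\boldsymbol{\epsilon}}+s'_{\boldsymbol{\epsilon}}=m$ would not follow.
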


\begin{proof}
In the case $R'+m'<R$, it was shown in \cite{KitMNT16} that the ratio of scalar products vanishes (see \eqref{sp-1}).

In the case $R'+m'\ge R$, the ratio of scalar products 
can be expressed from \cite{KitMNT16} as a ratio of determinants as in \eqref{sp-2}:
\begin{align}
   \frac{ \moy{Q\,|\, \bar{Q}}}{\moy{Q\,|\, Q}}
   &=\frac{ \moy{\bar{Q}\,|\, Q}}{\moy{Q\,|\, Q}}
   \nonumber\\
   &=(-1)^{N(R'+m'-R)}\, 2^{R-R'-m'}\,
   \frac{\prod_{j=1}^{m'}\left[ -a(\xi_{\pi_j})\prod_{k=1}^R(q_k-\xi_{\pi_j}+\eta)\right]}
          {\prod_{j=R'+1}^{R}\left[ -a(q_{\sigma_j})\prod_{k=1}^R(q_k-q_{\sigma_j}+\eta)\right]}\,   
          \nonumber\\
   & \times 
   \prod_{i=1}^{R'}\frac{\prod_{j=R'+1}^R(q_{\sigma_i}-q_{\sigma_j})}{\prod_{j=1}^{m'}(q_{\sigma_i}-\xi_{\pi_j})}
    \frac{\prod_{R'<i<j\le R}(q_{\sigma_i}-q_{\sigma_j})}{\prod_{1\le i<j\le m'}(\xi_{\pi_i}-\xi_{\pi_j})}\,
       \frac{\det_{R'+m'} \mathcal{M}^{(-)}(\mathbf{q}_\sigma\,|\, \mathbf{\bar{q}})}{\det_{R}\mathcal{N}^{(-)}(\mathbf{q}_\sigma)},
       \label{ratio1}
\end{align}
in which we have used the notations of \eqref{def-Mqq'}-\eqref{def-Nq} and the shortcut notations $\mathbf{q}_\sigma=(q_{\sigma_1},\ldots,q_{\sigma_R})$ and $\bar{\mathbf{q}}=(q_{\sigma_1},\ldots,q_{\sigma_{R'}},\xi_{\pi_1},\ldots,\xi_{\pi_{m'}})$.
More explicitly, $\mathcal{M}^{(-)}(\mathbf{q}_\sigma\, |\, \mathbf{\bar{q}})$ can be written as the following block matrix:
\begin{equation}
  \mathcal{M}^{(-)}(\mathbf{q}_\sigma\, |\, \mathbf{\bar{q}})
  =\begin{pmatrix}
      \mathcal{M}^{(1,1)} & \mathcal{M}^{(1,2)} \\
      \mathcal{M}^{(2,1)} & \mathcal{M}^{(2,2)}
    \end{pmatrix}
\end{equation}
where $\mathcal{M}^{(1,1)}$, $\mathcal{M}^{(1,2)}$, $\mathcal{M}^{(2,1)}$ and $\mathcal{M}^{(2,2)}$ are respectively of size $R\times R'$, $R\times m'$, $\bar{n}\times R'$ and $\bar{n}\times m'$, with $\bar{n}=R'+m'-R$, with elements
\begin{alignat}{2}
   &\mathcal{M}^{(1,1)}_{j,k}= \mathcal{N}_{j,k}, &\qquad &j\le R,\ k\le R',\\
   &\mathcal{M}^{(1,2)}_{j,k}= t(q_{\sigma_j}-\xi_{\pi_k}), &\qquad &j\le R,\ k\le m',\\
   &\mathcal{M}^{(2,1)}_{j,k}=  (q_{\sigma_k})^{j-1}+(q_{\sigma_k}+\eta)^{j-1}, &\qquad &j\le \bar{n},\,\ k\le R',\\
   &\mathcal{M}^{(2,2)}_{j,k}= \xi_{\pi_{k}}^{j-1}, &\qquad &j\le \bar{n},\,\ k\le m',
\end{alignat}
in which we have used the shortcut notation $\mathcal{N}=\mathcal{N}^{(-)}(\mathbf{q}_\sigma)$.
Hence, the ratio of determinants in \eqref{ratio1} can be written as
\begin{equation}
   \frac{\det_{R'+m'} \mathcal{M}^{(-)}(\mathbf{q}_\sigma\,|\, \mathbf{\bar{q}})}{\det_{R}\mathcal{N}^{(-)}(\mathbf{q}_\sigma)}
   =\det_{R'+m'}\mathcal{S},
\end{equation}
where
\begin{equation}\label{S-blocks1}
   \mathcal{S}  =\begin{pmatrix}
      \mathcal{N}^{-1}\mathcal{M}^{(1,1)} & \mathcal{N}^{-1}\mathcal{M}^{(1,2)} \\
      \mathcal{M}^{(2,1)} & \mathcal{M}^{(2,2)}
    \end{pmatrix},
\end{equation}
with in particular $[ \mathcal{N}^{-1}\mathcal{M}^{(1,1)}]_{j,k}= \delta_{j,k}$ for $j\le R, k\le R'$.
The thermodynamic limit $N\to\infty$ of the matrix elements of $\mathcal{N}^{-1}\mathcal{M}^{(1,2)}$ can be computed similarly as in the periodic case \cite{KitMT00} using the integral equation \eqref{eq-int}:
\begin{equation}\label{lim-ratio-el2}
   \left[ \mathcal{N}^{-1}\mathcal{M}^{(1,2)} \right]_{j,k} 
   =
   \frac{\rho(q_{\sigma_j}-\xi_{\pi_k}+\eta/2)}{N\,\rho_\text{tot}(q_{\sigma_j})}
   +o\!\left(\frac{1}{N}\right), 
\end{equation}
in which $\rho$ is given by \eqref{rho} and $\rho_\text{tot}$ by \eqref{rho-inhom}.
In particular, when $\bar{n}=R'+m'-R=0$, we recover the result \eqref{sp-th}.

In the case $R'+m' > R$, it is convenient to rewrite $\mathcal{S}$ \eqref{S-blocks1} in terms of blocks of slightly different sizes:
\begin{equation}\label{S-blocks2}
   \mathcal{S}  =\begin{pmatrix}
      \mathbb{I}_{R'} & \mathcal{S}^{(1,2)} \\
      \mathcal{S}^{(2,1)} & \mathcal{S}^{(2,2)}
    \end{pmatrix},
\end{equation}
where $\mathbb{I}_{R'}$ is the identity square matrix of size $R'$, and where $\mathcal{S}^{(1,2)}$, $\mathcal{S}^{(2,1)}$ and $\mathcal{S}^{(2,2)}$ are respectively of size $R'\times m'$, $m'\times R'$ and $m'\times m'$, with elements
\begin{align}
   &\mathcal{S}^{(1,2)}_{j,k} 
     = \big[\mathcal{N}^{-1}\mathcal{M}^{(1,2)}\big]_{j,k} 
     \\
   &\mathcal{S}^{(2,1)}_{j,k}
     =\begin{cases} 0 &\text{if }\, j\le R-R',\vspace{1mm}\\
                              \mathcal{M}^{(2,1)}_{j-(R-R'), k} \hspace{5mm}
                              &\text{if }\, R-R' < j\le m',
        \end{cases}
       \\
     &\mathcal{S}^{(2,2)}_{j,k}
     =\begin{cases} 
     \big[\mathcal{N}^{-1}\mathcal{M}^{(1,2)}\big]_{j+R',k}\
         &\hspace{5mm}\text{if }\, j\le R-R',\\
                              \mathcal{M}^{(2,2)}_{j-(R-R'), k}= \xi_{\pi_{k}}^{j-1-R+R'} 
                              &\hspace{5mm} \text{if }\, R-R' < j\le m'.
        \end{cases}
\end{align}
Hence,
\begin{equation}
   \det_{R'+m'}\mathcal{S}=\det_{m'}\mathcal{S}'
\end{equation}
with $\mathcal{S}'=\mathcal{S}^{(2,2)}-\mathcal{S}^{(2,1)}\mathcal{S}^{(1,2)}$, i.e.
\begin{equation}
   \mathcal{S}'_{j,k}
  = \big[\mathcal{N}^{-1}\mathcal{M}^{(1,2)}\big]_{j+R',k}
  =
         \frac{\rho(q_{\sigma_{j+R'}}-\xi_{\pi_k}+\eta/2)}{N\,\rho_\text{tot}(q_{\sigma_{j+R'}})} 
         +o\!\left(\frac{1}{N}\right)
         \quad \text{if}\  j\le R-R',
\end{equation}
whereas, for $1\le j\le m'+R'-R$,
\begin{align}
      \mathcal{S}'_{R-R'+j,k}
      &=  \mathcal{M}^{(2,2)}_{j, k} -   \sum_{\ell=1}^{R'} \mathcal{M}^{(2,1)}_{j, \ell}\, \big[\mathcal{N}^{-1}\mathcal{M}^{(1,2)}\big]_{\ell,k} .
\label{el-S'}
\end{align}
In particular, the $(R-R'+1)$-th line of $\mathcal{S}'$ is
\begin{align}
         \mathcal{S}'_{R-R'+1,k}
      &=  1 -   2\sum_{\ell=1}^{R'}  \big[\mathcal{N}^{-1}\mathcal{M}^{(1,2)}\big]_{\ell,k} 
      \nonumber\\
      & \underset{N\to\infty}{\longrightarrow} 
      1-  2\int_{-\infty}^\infty  \rho(\lambda-\xi_{\pi_k}+\eta/2)\, d\lambda =0,
\end{align}
which proves \eqref{vanishing-sp}.
\end{proof}

\begin{rem}
If we suppose moreover that the sums in \eqref{el-S'} can be transformed into integrals $\forall j$, we obtain that all the lines of \eqref{el-S'} vanish in the thermodynamic limit:
\begin{align}\label{vanish-lines}
      \mathcal{S}'_{R-R'+j,k}
      & \underset{N\to\infty}{\longrightarrow} 
      \xi_{\pi_{k}}^{j-1}-  \int_{-\infty}^\infty \left[ \lambda^{j-1}+(\lambda+\eta)^{j-1}\right] \rho(\lambda-\xi_{\pi_k}+\eta/2)\, d\lambda =0.
\end{align}
Indeed, setting $\eta=-i$ and supposing that $|\Im(\xi_{\pi_k}+i/2)|<1/2$, we have that
\begin{multline}
   \int_{-\infty}^\infty  \lambda^{j-1}\, \rho(\lambda-\xi_{\pi_k}-i/2)\, d\lambda 
   -\int_{-\infty}^\infty (\lambda-i)^{j-1}\, \rho(\lambda-\xi_{\pi_k}-i/2-i)\, d\lambda \\
   =-2\pi i \mathrm{Res}_{\lambda=\xi_{\pi_k}}\left[ \lambda^{j-1}\, \rho(\lambda-\xi_{\pi_k}-i/2)\right]
   =   \xi_{\pi_k}^{j-1},
\end{multline}
and we can conclude by using the quasi-periodicity property $\rho(\lambda-i)=-\rho(\lambda)$.
Note however that we do not need \eqref{vanish-lines} for $j>1$ for the proof of Proposition~\ref{prop-vanish-sp}, it is enough that these lines remain finite in the thermodynamic limit.
\end{rem}

As a consequence of this proposition, we can formulate the following corollary:

\begin{cor}\label{cor-vanish}
For a given $2m$-tuple $\boldsymbol{\epsilon}\equiv(\epsilon_1,\ldots,\epsilon_{2m})\in\{1,2\}^{2m}$, let us define the sets $\alpha^-_{\boldsymbol{\epsilon}}$ and $\alpha^+_{\boldsymbol{\epsilon}}$ of respective cardinality $s_{\boldsymbol{\epsilon}}$ and $s'_{\boldsymbol{\epsilon}}$ as in  \eqref{alpha+}- \eqref{alpha-}, and let us consider the matrix element $F_{m}(\tau,\boldsymbol{\epsilon})$ in a state $\ket{Q_\tau}$ with $Q_\tau\equiv Q$ satisfying the same hypothesis as in Proposition~\ref{prop-vanish-sp}. Then
\begin{equation}\label{vanishing-corr}
   \lim_{N\to\infty} F_{m}(\tau,\boldsymbol{\epsilon})= 0 
   \qquad \text{if}\quad
   s_{\boldsymbol{\epsilon}}+s'_{\boldsymbol{\epsilon}}\not=m.
\end{equation}
Moreover, if $s_{\boldsymbol{\epsilon}}+s'_{\boldsymbol{\epsilon}}=m$, the non-vanishing contribution of $F_{m}(\tau,\boldsymbol{\epsilon})$ in the thermodynamic limit is given by
\begin{multline}\label{sum-el-bl3}
   \lim_{N\to\infty}F_{m}(\tau,\boldsymbol{\epsilon})
   =\lim_{N\to\infty} \prod_{k=1}^{m}\frac{1}{\tau(\xi_k)}\,   \sum_{\{a_j,a'_j\} }
       \prod_{j\in\alpha^-_{\boldsymbol{\epsilon}}}  \left(
  \frac{ d(q_{a_j})\, \prod_{\substack{k=1\\ k\in\mathbf{A}_j}}^{R+j-1}(q_{a_j}-q_k+\eta)}
         {\prod_{\substack{k=1 \\ k\in\mathbf{A}'_j}}^{R+j}(q_{a_j}-q_k)}
         \right)
  \\
  \times
     \prod_{j\in\alpha^+_{\boldsymbol{\epsilon}}}           \left(
  -\frac{ a(q_{a'_j})\, \prod_{\substack{k=1\\ k\in\mathbf{A}'_j}}^{R+j-1}(q_k-q_{a'_j}+\eta)}
         {\prod_{\substack{k=1 \\ k\in\mathbf{A}_{j+1} }}^{R+j}(q_k-q_{a'_j})}
         \right)
   \frac{\moy{\bar{Q}_{\mathbf{A}_{m+1}}\, |\, Q_\tau}}{\moy{Q_\tau\, |\,  Q_\tau}},
\end{multline}
in which  the summation is taken over the indices $a_j$ for $j\in\alpha_{\boldsymbol{\epsilon}}^-$ and $a'_j$ for $j\in\alpha^+_{\boldsymbol{\epsilon}}$ satisfying \eqref{indices-a}-\eqref{Aprimej}, and where we have used the notation \eqref{Q-Am+1}.
\end{cor}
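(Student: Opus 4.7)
The plan is to apply Proposition~\ref{prop-vanish-sp} directly to each scalar product appearing in the finite-volume representation \eqref{sum-el-bl2}. For a given admissible choice of subsets $\bar\alpha^\pm\subset\alpha^\pm_{\boldsymbol\epsilon}$ and indices $\{a_j,a'_j\}$, I would first recognise that the polynomial $\bar{Q}_{\mathbf{A}_{m+1}}$ is of the form \eqref{form-barQ} with $R' = R - r$ roots of $Q_\tau$ and $m' = m - s_\mathrm{inhom}$ inhomogeneity parameters, where $r$ (respectively $s_\mathrm{inhom}$) is the number of indices among the $a_j,a'_j$ lying in $\{1,\ldots,R\}$ (respectively $\{R+1,\ldots,R+m\}$). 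Since $r + s_\mathrm{inhom} = \#\bar\alpha^- + \#\bar\alpha^+$, one has $R' + m' = R + m - \#\bar\alpha^- - \#\bar\alpha^+$, so Proposition~\ref{prop-vanish-sp} tells us that the ratio $\moy{\bar{Q}_{\mathbf{A}_{m+1}}|Q_\tau}/\moy{Q_\tau|Q_\tau}$ vanishes identically when $\#\bar\alpha^- + \#\bar\alpha^+ > m$, admits a finite and generically non-vanishing thermodynamic limit precisely when $\#\bar\alpha^- + \#\bar\alpha^+ = m$, and otherwise decays as $o(N^{-r})$.

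With this trichotomy in hand, I would treat the case $s_{\boldsymbol\epsilon} + s'_{\boldsymbol\epsilon} = m$ first: the only choice realising $\#\bar\alpha^- + \#\bar\alpha^+ = m$ is the maximal one $\bar\alpha^\pm = \alpha^\pm_{\boldsymbol\epsilon}$, every other choice yields a decaying scalar product, and isolating this maximal contribution in \eqref{sum-el-bl2} directly reproduces \eqref{sum-el-bl3}. Next, for the case $s_{\boldsymbol\epsilon} + s'_{\boldsymbol\epsilon} < m$, every admissible $(\bar\alpha^\pm, \{a_j,a'_j\})$ has $\#\bar\alpha^- + \#\bar\alpha^+ \le s_{\boldsymbol\epsilon} + s'_{\boldsymbol\epsilon} < m$, so each summand of \eqref{sum-el-bl2} is $o(N^{-r})$; to conclude it suffices to bound the accompanying rational prefactors uniformly, which I would do using the bulk spacing $1/N$ of ground-state Bethe roots that follows from \eqref{lim-der-counting} together with the cancellations between the Vandermonde-type products and the factors $d(q_{a_j})$, $a(q_{a'_j})$, so that each prefactor is at most $O(N^r)$, each summand is $o(1)$, and the sum, which has an $N$-independent number of terms, vanishes in the limit.

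Finally, I would reduce the remaining case $s_{\boldsymbol\epsilon} + s'_{\boldsymbol\epsilon} > m$ to the previous one by invoking the $\Gamma^x$-symmetry identity \eqref{sym-Gamma^x}: since $\boldsymbol\epsilon\mapsto 3 -\boldsymbol\epsilon$ complements each $\alpha^\pm$ in $\{1,\ldots,m\}$, it sends $s + s'$ to $2m - s - s' < m$, placing the matrix element in the already-treated regime. The hard part will be the uniform bound on the rational prefactors in the case $s_{\boldsymbol\epsilon} + s'_{\boldsymbol\epsilon} < m$, which requires a careful bookkeeping of powers of $N$ arising from the numerator factors $d(q_{a_j})$, $a(q_{a'_j})$ and $\prod_k(q_{a_j}-q_k\pm\eta)$ against the Vandermonde-like denominators evaluated at Bethe roots; this control is analogous to what is done for the periodic chain and should follow from the ground-state density estimates of Section~\ref{sec-GS}.
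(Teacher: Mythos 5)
Your proposal is correct and follows essentially the same route as the paper: the degree count $R'+m'=R+m-\#\bar\alpha^--\#\bar\alpha^+$ feeding into the trichotomy of Proposition~\ref{prop-vanish-sp}, the balance of the $O(N^{r})$ terms in the sums over indices in $\{1,\ldots,R\}$ against the $O(N^{-r})$ (or $o(N^{-r})$) decay of the scalar-product ratios, and the reduction of the case $s_{\boldsymbol{\epsilon}}+s'_{\boldsymbol{\epsilon}}>m$ to $s_{\boldsymbol{\epsilon}}+s'_{\boldsymbol{\epsilon}}<m$ via the $\Gamma^x$ symmetry \eqref{sym-Gamma^x}. One small imprecision: when $\#\bar\alpha^-+\#\bar\alpha^+=m$ the ratio of scalar products itself is $O(N^{-r})$ by \eqref{sp-th} rather than having a finite non-vanishing limit --- it is only the combination with the $N^{r}$ terms of the multiple sums that survives --- but your subsequent bookkeeping already uses exactly this balance, so the argument stands.
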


In other words, it means that, in the thermodynamic limit, we recover the same selection rules \eqref{vanishing-corr} for the elementary blocks as in the periodic case. Moreover, the only non-vanishing terms in the series \eqref{sum-el-bl2} corresponds to $\bar\alpha^+_{\boldsymbol{\epsilon}}=\alpha^+_{\boldsymbol{\epsilon}}$ and $\bar\alpha^-_{\boldsymbol{\epsilon}}=\alpha^-_{\boldsymbol{\epsilon}}$, i.e. to $n_\infty=0$. This means that  the residues of the poles at infinity that appeared when moving the integration contours in the computation of the action of Section~\ref{sec-act} (see Proposition~\ref{prop-act-2} and Corollary~\ref{prop-act-3}) do not contribute to the thermodynamic limit of the correlation functions.

\begin{proof}
Let us consider the expansion \eqref{sum-el-bl2} for $F_{m}(\tau,\boldsymbol{\epsilon})$, which involves multiple sums over indices $\{a_j,a'_j\}$. 

For a given term of the sum, the polynomial $\bar{Q}_{\mathbf{A}_{m+1}}$ is of the form \eqref{form-barQ} with $R-R'$ equal to the number of indices $a_j$ or $a'_j$ in the multiple sums which are taken between $1$ and $R$. On the other hand, each of the sums over an index $a_j$ or $a'_j$ from $1$ to $R$ leads to an integral in the thermodynamic limit provided it is balanced by a factor $1/N$, the other terms of the sums (for $a_j$ or $a'_j$ from $R+1$ to $R+m$) contributing to order 1 to the thermodynamic limit. Hence, the non-vanishing contributions in the thermodynamic limits correspond to the configurations in the expansion \eqref{sum-el-bl2} for which the ratio of determinants is exactly of order $O(1/N^{R-R'})$, which, from Proposition~\ref{prop-vanish-sp}, happens only when the two polynomials $\bar{Q}_{\mathbf{A}_{m+1}}$ and $Q_\tau$ are of the same degree $R$, i.e. when $\#\bar\alpha^-_{\boldsymbol{\epsilon}} +\#\bar\alpha^+_{\boldsymbol{\epsilon}}=m$.

Since $\#\bar\alpha^-_{\boldsymbol{\epsilon}} +\#\bar\alpha^+_{\boldsymbol{\epsilon}}\le \#\alpha^-_{\boldsymbol{\epsilon}} +\#\alpha^+_{\boldsymbol{\epsilon}}=s_{\boldsymbol{\epsilon}}+s'_{\boldsymbol{\epsilon}}$, the whole sum \eqref{sum-el-bl2} is vanishing in the thermodynamic limit  if $s_{\boldsymbol{\epsilon}}+s'_{\boldsymbol{\epsilon}}<m$, so that 
\begin{equation}\label{vanishing-corr1}
   \lim_{N\to\infty} F_{m}(\tau,\boldsymbol{\epsilon})= 0 
   \qquad \text{if}\quad
   s_{\boldsymbol{\epsilon}}+s'_{\boldsymbol{\epsilon}}<m.
\end{equation}
If instead $s_{\boldsymbol{\epsilon}}+s'_{\boldsymbol{\epsilon}}>m$ we use the symmetry \eqref{sym-Gamma^x} and the fact that $s_{3-\boldsymbol{\epsilon}}+s'_{3-\boldsymbol{\epsilon}}<m$ to conclude that
\begin{equation}\label{vanishing-corr2}
   \lim_{N\to\infty} F_{m}(\tau,\boldsymbol{\epsilon})
   =\lim_{N\to\infty} F_{m}(\tau,3-\boldsymbol{\epsilon})= 0 
   \qquad \text{if}\quad
   s_{\boldsymbol{\epsilon}}+s'_{\boldsymbol{\epsilon}}>m.
\end{equation}
This proves \eqref{vanishing-corr}.

If now $s_{\boldsymbol{\epsilon}}+s'_{\boldsymbol{\epsilon}}=m$,  the only terms contributing to the thermodynamic limit of $F_{m}(\tau,\boldsymbol{\epsilon})$ in the sum \eqref{sum-el-bl2} are those for which $\#\bar\alpha^-_{\boldsymbol{\epsilon}} +\#\bar\alpha^+_{\boldsymbol{\epsilon}} = \#\alpha^-_{\boldsymbol{\epsilon}} +\#\alpha^+_{\boldsymbol{\epsilon}}$, i.e. $\bar\alpha^\pm_{\boldsymbol{\epsilon}}=\alpha^\pm_{\boldsymbol{\epsilon}}$, which also proves \eqref{sum-el-bl3}.
\end{proof}

Note that, by using the explicit expression for the transfer matrix eigenvalue evaluated at $\xi_k$, $k=1,\ldots,m$,
\begin{equation}\label{tr-expr-xi}
    \tau (\xi _{k})
    =-a(\xi _{k})\,\frac{Q_\tau(\xi_{k}-\eta)}{Q_\tau(\xi _{k})},
\end{equation}
together with the Bethe equations
\begin{equation}\label{Bethe-anti-aj}
        d(q_{a_{j}})
        = a(q_{a_{j}})\, \frac{Q_\tau(q_{a_{j}}-\eta )}{Q_\tau(q_{a_{j}}+\eta )},
        \qquad \forall\, a_{j}\leq R,
\end{equation}
and the observation that $d(q_{a_{j}})=0$ for any $a_{j}>R$,  one can rewrite \eqref{sum-el-bl3} in the following way
\begin{multline}\label{sum-el-bl4}
   \lim_{N\to\infty}F_{m}(\tau,\boldsymbol{\epsilon})
   =\lim_{N\to\infty} \prod_{k=1}^{m}\frac{Q_\tau(\xi_k)}{a(\xi_k)\, Q_\tau(\xi_k-\eta)}\,  
   \\
   \times 
   \sum_{\{a_j,a'_j\} }
       \prod_{j\in\alpha^-_{\boldsymbol{\epsilon}}}  \left(
       - a(q_{a_{j}})\, \frac{Q_\tau(q_{a_{j}}-\eta )}{Q_\tau(q_{a_{j}}+\eta )}
  \frac{\prod_{\substack{k=1\\ k\in\mathbf{A}_j}}^{R+j-1}(q_{a_j}-q_k+\eta)}
         {\prod_{\substack{k=1 \\ k\in\mathbf{A}'_j}}^{R+j}(q_{a_j}-q_k)}
         \right)
  \\
  \times
     \prod_{j\in\alpha^+_{\boldsymbol{\epsilon}}}           \left (a(q_{a'_j})\,
  \frac{  \prod_{\substack{k=1\\ k\in\mathbf{A}'_j}}^{R+j-1}(q_k-q_{a'_j}+\eta)}
         {\prod_{\substack{k=1 \\ k\in\mathbf{A}_{j+1} }}^{R+j}(q_k-q_{a'_j})}
         \right)
   \frac{\moy{\bar{Q}_{\mathbf{A}_{m+1}}\, |\, Q_\tau}}{\moy{Q_\tau\, |\,  Q_\tau}}.
\end{multline}
where the summation is taken here over  the indices $a_j$ for $j\in\alpha_{\boldsymbol{\epsilon}}^-$ and $a'_j$ for $j\in\alpha^+_{\boldsymbol{\epsilon}}$ such that
\begin{equation}\label{indices-a-modified}
   1\le a_j \le R, \quad a_j\in\mathbf{A}_j,
   \qquad
   1\le a'_j\le R+j, \quad a'_j\in\mathbf{A}'_j.
\end{equation}
%

\subsection{Multiple integral representation for the correlation functions in the thermodynamic limit}

Let us now now consider, for any $2m$-tuple $\boldsymbol{\epsilon}\equiv(\epsilon_1,\ldots,\epsilon_{2m})$, the matrix elements
\begin{equation}\label{el-block-th}
     \mathcal{F}_{m}(\boldsymbol{\epsilon})
     =\lim_{N\to\infty}
     \frac{\bra{Q_\tau}\, \prod_{j=1}^{m} E^{\epsilon_{2j-1},\epsilon_{2j}}_{j}\, \ket{Q_\tau}}{\moy{Q_\tau\, |\, Q_\tau}},
\end{equation}
for $\ket{Q_\tau}$ being an eigenstate of the transfer matrix \eqref{transfer} described in the thermodynamic limit by the density of roots $\rho_\text{tot}$, and which tends to one of the ground states of the anti-periodic XXX chain \eqref{Ham} in the homogeneous limit.
It follows from Corollary~\ref{cor-vanish}, \eqref{sp-th} and  \eqref{sum-el-bl4} that the terms contributing to the thermodynamic limit in the anti-periodic model are exactly of the same form that the terms contributing to the thermodynamic limit in the periodic case, see formulas  (4.6)-(4.7) and (5.3)-(5.4) (in which we use the periodic analog of \eqref{tr-expr-xi} and \eqref{Bethe-anti-aj}) of \cite{KitMT99}. Hence their thermodynamic limit coincide.

Therefore we obtain the following multiple integral representation for the correlation functions \eqref{el-block-th} in the thermodynamic limit, which coincides with the results of \cite{KitMT99,JimM95L}:
%
%
\begin{multline}\label{Corr-EBB-1}
     \mathcal{F}_{m}(\boldsymbol{\epsilon})
     =\delta _{s_{\boldsymbol{\epsilon}}+s^{\prime }_{\boldsymbol{\epsilon}},m}\,
     \prod_{k<l}\frac{\sinh\pi(\xi_k-\xi_l)}{\xi_k-\xi_l}\,
         \prod_{j=1}^{s_{\boldsymbol{\epsilon}}'}\int\limits_{\, -\infty-i}^{ \infty-i}\!\! \frac{d\lambda_j}{2i}
         \prod_{j=s'_{\boldsymbol{\epsilon}}+1}^{m}\int\limits_{\, -\infty}^{\infty}\! i\frac{d\lambda_j}{2}
         \prod_{a>b}\frac{\sinh\pi(\lambda_a-\lambda_b)}{\lambda_a-\lambda_b-i}
         \\
         \times
          \prod_{a=1}^m\prod_{k=1}^m\frac 1{\sinh\pi(\lambda_a-\xi_k)}
          \prod_{j\in\alpha^-_{\boldsymbol{\epsilon}}}\left[\prod_{k=1}^{j-1}
        (\mu_j-\xi_k-i)\prod_{k=j+1}^m(\mu_j-\xi_k)\right]
        \\
        \times
         \prod_{j\in\alpha^+_{\boldsymbol{\epsilon}}}\left[\prod_{k=1}^{j-1}
        (\mu'_j-\xi_k+i)\prod_{k=j+1}^m (\mu'_j-\xi_k)\right],
\end{multline}
in which  the sets $\alpha^-_{\boldsymbol{\epsilon}}$ and $\alpha^+_{\boldsymbol{\epsilon}}$ are defined as in \eqref{alpha-}-\eqref{alpha+}, and the integration parameters are ordered as
\begin{equation}
    (\lambda _{1},\ldots,\lambda _{m})
    =(\mu _{j_\text{max}^{\prime }}^{\prime},\ldots ,\mu _{j_\text{min}^{\prime }}^{\prime },
    \mu _{j_\text{min}},\ldots,\mu _{j_\text{max}}),
\end{equation}
with
\begin{alignat}{2}
    j'_\text{min}=\min\{j\, |\, j\in\alpha_{\boldsymbol{\epsilon}}^+\}, & \qquad
    &j'_\text{max}=\max\{j\, |\, j\in\alpha_{\boldsymbol{\epsilon}}^+\},
    \\
    j_\text{min}=\min\{j\, |\, j\in\alpha_{\boldsymbol{\epsilon}}^-\}, & \qquad
    &j_\text{max}=\max\{j\, |\, j\in\alpha_{\boldsymbol{\epsilon}}^-\}.
\end{alignat}
In the homogeneous limit ($\xi_j=-i/2,\, \forall j $) the correlation function $\mathcal{F}_{m}(\boldsymbol{\epsilon}) $ 
has the following form:
\begin{align}\label{Corr-EBB-2}
     \mathcal{F}_{m}(\boldsymbol{\epsilon})
     &=\delta _{s_{\boldsymbol{\epsilon}}+s^{\prime }_{\boldsymbol{\epsilon}},m}\,
          (-1)^{s_{\boldsymbol{\epsilon}}}\,(-\pi)^{\frac{m(m+1)}2}
          \prod_{j=1}^{s'_{\boldsymbol{\epsilon}}}\int\limits_{-\infty-i}^{\infty-i}  \frac{d\lambda_j}{2\pi}
          \prod_{j=s'_{\boldsymbol{\epsilon}}+1}^{m}\int\limits_{-\infty}^{\infty}\frac{d\lambda_j}{2\pi}\,
          \prod_{a>b}\frac{\sinh\pi(\lambda_a-\lambda_b)}{\lambda_a-\lambda_b-i}
          \nonumber\\
      &\times
           \prod_{j\in\mathbf{\alpha^-_{\boldsymbol{\epsilon}}}}
           \frac {(\mu_j-\frac i 2)^{j-1}\,(\mu_j+\frac i 2)^{m-j}}{\cosh^m(\pi\mu_j)}\,
           \prod_{j\in\mathbf{\alpha^+_{\boldsymbol{\epsilon}}}}
           \frac{(\mu'_j+\frac {3i} 2)^{j-1}\,(\mu'_j+\frac i 2)^{m-j}}{\cosh^m(\pi\mu'_j)}.
\end{align}   


\section{Correlation functions of the XXX chain with a non-diagonal twist}
\label{app-twist}

In the previous sections, we have shown how to compute the (elementary
building blocks of the) correlation functions in the XXX chain with
anti-periodic boundary conditions. It is interesting to see how the method
and results presented above are modified in the case of a chain with a more
general non-diagonal twist. This is the purpose of this section.

Let us consider a generic invertible $2\times 2$ matrix, 
\begin{equation}  \label{twist-K}
K=
\begin{pmatrix}
\mathsf{a} & \mathsf{b} \\ 
\mathsf{c} & \mathsf{d}
\end{pmatrix},
\end{equation}
and let us define the monodromy matrix with twist $K$ as 
\begin{align}
T_{0}^{(K)}(\lambda ) & =K_{0}\, T_{0}(\lambda )  \notag \\
& =
\begin{pmatrix}
A^{(K)}(\lambda )=\mathsf{a}A(\lambda )+\mathsf{b}C(\lambda ) & 
B^{(K)}(\lambda )=\mathsf{a}B(\lambda )+\mathsf{b}D(\lambda ) \\ 
C^{(K)}(\lambda )=\mathsf{c}A(\lambda )+\mathsf{d}C(\lambda ) & 
D^{(K)}(\lambda )=\mathsf{c}B(\lambda )+\mathsf{d}D(\lambda )%
\end{pmatrix},
\end{align}
to which is associated the one-parameter family of commuting transfer
matrices: 
\begin{equation}  \label{transfer-K}
\mathcal{T}^{(K)}(\lambda )=\text{tr}_{0} \, T_{0}^{(K)}(\lambda ).
\end{equation}
%

\subsection{Diagonalisation of the transfer matrix by the SoV method}

Under the condition $\mathsf{b}\neq 0$, one can apply Sklyanin's SoV
approach \cite{Skl90,Skl92}, see also \cite{JiaKKS16}. Here, we follow the
presentation given in section 2 of \cite{MaiN18} and in \cite{MaiNV20b} for the
diagonalization of the transfer matrix in this general twisted case. 

The separate variables are generated by the operator zeros of $B^{(K)}(\lambda )$. The latter is diagonalizable with simple spectrum,
\begin{align}
 {}_{_{K}\!}\bra{\mathbf{h}}\,B^{(K)}(\lambda )
 & =\mathsf{b} \prod_{n=1}^{N}(\lambda -\xi _{n}+h_{n}\eta )\,{}_{_{K}\!}\bra{\mathbf{h}},  \notag \\
 & =\mathsf{b}\,d_{\mathbf{h}}(\lambda )\,{}_{_{K}\!}\bra{\mathbf{h}},
 \hspace{1.5cm}
 \forall \mathbf{h}\equiv (h_{1},\ldots ,h_{N})\in\{0,1\}^{\otimes N},
\end{align}
and the elements ${}_{_{K}\!}\langle \,\mathbf{h}\,|$ of the corresponding
SoV eigenbasis can be constructed as 
\begin{equation}
{}_{_{K}\!} \bra{\mathbf{h}}
\equiv \bra{0}
\prod_{n=1}^{N}\left( \frac{A^{(K)}(\xi _{n})}{\mathsf{k}_{1}\,d(\xi _{n}-\eta )}\right)^{\!h_{n}},
\qquad 
\forall \mathbf{h}\equiv (h_{1},\ldots ,h_{N})\in\{0,1\}^{\otimes N}, 
\label{left-basis-K}
\end{equation}
where we have defined $\bra{0}=\bigotimes_{n=1}^{N}(1,0)_{n}$. 
For convenience, we choose the normalization coefficient $\mathsf{k_{1}}$ in 
\eqref{left-basis-K} such that 
\begin{equation}
\mathsf{k}_{1}^{2}-\mathsf{k}_{1}\,\text{tr}K+\text{det}K=0,  \label{eq-k}
\end{equation}
i.e. $\mathsf{k}_{1}$ is an eigenvalue of the matrix $K$. Setting 
\begin{equation}
\mathsf{k}_{2}=\frac{\text{det}K}{\mathsf{k}_{1}},  \label{k2}
\end{equation}
i.e. $\mathsf{k}_{2}$ is the second eigenvalue of $K$, we can compute the
SoV action of the remaining Yang-Baxter generators on the basis \eqref{left-basis-K} as 
\begin{align}
{}_{_{K}\!} \bra{\mathbf{h}}\,A^{(K)}(\lambda )
    & =\mathsf{a}\,d_{\mathbf{h}}(\lambda )\,{}_{_{K}\!}\bra{\mathbf{h}}
       +\mathsf{k}_{1}\sum_{n=1}^{N}\delta _{h_{a},0}\,d(\xi _{a}^{(1)})
       \prod_{b\neq a}\frac{\lambda -\xi _{b}^{(h_{b})}}{\xi _{a}^{(h_{a})}-\xi _{b}^{(h_{b})}}\,
       {}_{_{K}\!}\bra{\mathrm{T}_{a}^{+}\mathbf{h}},
       \label{Dec_A} \\
{}_{_{K}\!}\bra{\mathbf{h}}\,D^{(K)}(\lambda )
    & =\mathsf{d}\,d_{\mathbf{h}}(\lambda )\,{}_{_{K}\!}\bra{\mathbf{h}}
    +\mathsf{k}_{2}\sum_{a=1}^{N}\delta _{h_{a},1}\,a(\xi _{a}^{(0)})
    \prod_{b\neq a}\frac{\lambda -\xi _{b}^{(h_{b})}}{\xi _{a}^{(h_{a})}-\xi _{b}^{(h_{b})}}\,
    {}_{_{K}\!}\bra{\mathrm{T}_{a}^{-}\mathbf{h}}, 
     \label{Dec_D}
\end{align}
while the SoV representation of $C^{(K)}(\lambda )$ follows from the above
ones and the quantum determinant condition.

Similarly, following Corollary B.2 of \cite{MaiNV20b}, the right SoV basis of 
$\mathcal{H}$ can be constructed as
\begin{equation}\label{right-basis}
 \ket{\mathbf{h}} _{_{\!K}}
 \equiv \frac{1}{\mathsf{n}}
 \prod_{n=1}^{N}\!\left( \frac{A^{(K)}(\xi _{n}-\eta )}{\mathsf{k}_{1}\,d(\xi_{n}-\eta )}\right) ^{\!1-h_{n}}\!
 \ket{\underline{0}},
 \qquad \forall \mathbf{h}\equiv (h_{1},\ldots ,h_{N})\in \{0,1\}^{\otimes N},
\end{equation}
where $\ket{\underline{0}} =\bigotimes_{n=1}^{N}  \left( \genfrac{}{}{0pt}{}{ 0 }{1} \right)_{\! n}$ and $\mathsf{n}$ is a normalization coefficient. Then, the SoV action of the Yang-Baxter generators on \eqref{right-basis} is
\begin{align}
   & B^{(K)}(\lambda )\,\ket{\mathbf{h}}_{_{\!K}}
   =\mathsf{b}\,d_{\mathbf{h}}(\lambda )\,\ket{\mathbf{h}}_{_{\!K}}, 
    \label{B-K-right} \\
   & A^{(K)}(\lambda )\,\ket{\mathbf{h}}_{_{\!K}}
   =\mathsf{a}\,d_{\mathbf{h}\,}(\lambda )\, \ket{\mathbf{h}}_{_{\!K}}
   +\mathsf{k}_{1} \sum_{a=1}^{N}\delta _{h_{a},1}\,d(\xi _{a}^{(1)})
   \prod_{b\neq a}\frac{\lambda -\xi _{b}^{(h_{b})}}{\xi _{a}^{(h_{a})}-\xi _{b}^{(h_{b})}}\,
   \ket{\mathrm{T}_{a}^{-}\mathbf{h}}_{_{\!K}},  
   \label{A-K-right} \\
   & D^{(K)}(\lambda )\,\ket{\mathbf{h}}_{_{\!K}}
   =\mathsf{d}\,d_{\mathbf{h}\,}(\lambda )\,\ket{\mathbf{h}}_{_{\!K}}
   +\mathsf{k}_{2}\sum_{a=1}^{N}\delta _{h_{a},0}\,a(\xi _{a}^{(0)})
   \prod_{b\neq a}\frac{\lambda -\xi _{b}^{(h_{b})}}{\xi _{a}^{(h_{a})}-\xi _{b}^{(h_{b})}}\,
   \ket{\mathrm{T}_{a}^{+}\mathbf{h}}_{_{\!K}}, 
   \label{D-K-right}
\end{align}
and, with an adequate choice of the normalization coefficient $\mathsf{n}$, it holds: 
\begin{equation}
{}_{_{K}\!}\moy{\mathbf{h}\, |\, \mathbf{k} }_{_{\!K}}
   =\frac{\delta _{\mathbf{h},\mathbf{k}}}{
   V(\xi_{1}^{(h_{1})},\ldots ,\xi _{N}^{(h_{N})})}.  
   \label{sc-hk-K}
\end{equation}

Note moreover that, as proven in \cite{MaiN18}, the transfer matrix is diagonalizable and with simple spectrum as soon as
the same properties holds for the twist matrix $K$. In the SoV bases, the
eigencovector of the transfer matrix can be written in the form 
\begin{equation}
{}_{_{K}\!}\bra{Q_{\tau }}
      =\sum_{\mathbf{h}\in\{0,1\}^{N}}\prod_{n=1}^{N}Q_{\tau }(\xi _{n}^{(h_{n})})\ 
      V(\xi_{1}^{(h_{1})},\ldots ,\xi _{N}^{(h_{N})})\,{}_{_{K}\!} \bra{\mathbf{h}}, 
      \label{eigen-l-K}
\end{equation}
and the eigenvector has the form 
\begin{equation}
\ket{Q_{\tau }}_{_{\!K}}
     =\sum_{\mathbf{h}\in\{0,1\}^{N}}\prod_{n=1}^{N}
     \left\{ \left( -\frac{\mathsf{k}_{2}}{\mathsf{k}_{1}}\right) ^{\!h_{n}}Q_{\tau }(\xi _{n}^{(h_{n})})\right\} \ 
     V(\xi_{1}^{(1-h_{1})},\ldots ,\xi _{N}^{(1-h_{N})})\,
     \ket{\mathbf{h}}_{_{\!K}}, 
     \label{eigen-r-K}
\end{equation}
where $Q_{\tau }(\lambda )$ is a polynomial of
degree $R\leq N$ satisfying with the corresponding transfer matrix
eigenvalue $\tau (\lambda )$ the following TQ-equation (see Theorem 3.2 of \cite{MaiN18}): 
\begin{equation}
\tau (\lambda )\,Q_{\tau }(\lambda )
     =\mathsf{k}_{2}\,a(\lambda )\,Q_{\tau}(\lambda -\eta )
     +\mathsf{k}_{1}\,d(\lambda )\,Q_{\tau }(\lambda +\eta ).
\label{T-Q-K}
\end{equation}

The same construction \eqref{left-basis-K}-\eqref{T-Q-K} can be done by exchanging the role of $\mathsf{k}_1$ and $\mathsf{k}_2$, and the eigenstates of \eqref{transfer-K} can alternatively be constructed in terms of a polynomial $\widehat{Q}_\tau(\lambda)$ of degree $S\le N$ solving with $\tau(\lambda)$ the second TQ-equation
\begin{equation}
\tau (\lambda )\,\widehat Q_{\tau }(\lambda )
     =\mathsf{k}_{1}\,a(\lambda )\,\widehat Q_{\tau}(\lambda -\eta )
     +\mathsf{k}_{2}\,d(\lambda )\,\widehat Q_{\tau }(\lambda +\eta ).
\label{T-Q-K-2}
\end{equation}
The two polynomials $Q_\tau$ and $\widehat Q_\tau$ then satisfy the quantum Wronskian relation
\begin{equation}\label{W-K-rel}
    \mathsf{k}_{2}\,\widehat{Q}_{\tau }(\lambda )\, Q_{\tau}(\lambda -\eta )
    -\mathsf{k}_{1}\,Q_{\tau }(\lambda )\,\widehat{Q}_{\tau}(\lambda -\eta )
    =(\mathsf{k}_{2}-\,\mathsf{k}_{1})\, d(\lambda),
\end{equation}
implying in particular that $R+S=N$.

As in the anti-periodic case \eqref{eigenl-bethe}, \eqref{eigenr-bethe}, \eqref{eigenl-bethe-2}, \eqref{eigenr-bethe-2}, the transfer matrix eigenstates can be written in the form of generalized Bethe states in terms of the roots $q_1,\ldots,q_R$ of $Q_\tau(\lambda)$,
\begin{align}
   &{}_{_{K}\!}\bra{Q_{\tau }} \propto 
   {}_{_{K}\!}\bra{\mathsf{1}}\prod_{k=1}^R B^{(K)}(q_k), 
   \qquad
   \ket{Q_{\tau }}_{_{\!K}} \propto \prod_{k=1}^R B^{(K)}(q_k)\, \ket{\mathsf{1}}_{_{\!K}}
\end{align}
where
\begin{align}
   & {}_{_{K}\!}\bra{\mathsf{1}}=\sum_{\mathbf{h}} V(\xi_1^{(h_1)},\ldots,\xi_N^{(h_N)})\, {}_{_{K}\!}\bra{\mathbf{h} },
   \\
   &\ket{\mathsf{1}}_{_{\!K}}= \sum_{\mathbf{h}}
   \prod_{n=1}^{N}\! \left( -\frac{\mathsf{k}_{2}}{\mathsf{k}_{1}}\right) ^{\!h_{n}}
   V(\xi_{1}^{(1-h_{1})},\ldots ,\xi _{N}^{(1-h_{N})})\,     \ket{\mathbf{h}}_{_{\!K}},
\end{align}
are eigenstates of the transfer matrix with eigenvalue $\mathsf{k}_2\, a(\lambda)+\mathsf{k}_1\, d(\lambda)$,
or in terms of the roots $\widehat q_1,\ldots,\widehat q_{N-R}$ of $\widehat Q_\tau(\lambda)$,
\begin{align}
   &{}_{_{K}\!}\bra{Q_{\tau }} \propto 
   {}_{_{K}\!}\bra{\mathsf{1}_\mathrm{alt}}\prod_{k=1}^{N-R}\! B^{(K)}(\widehat q_k), 
   \qquad
   \ket{Q_{\tau }}_{_{\!K}} \propto \prod_{k=1}^{N-R}\! B^{(K)}(\widehat q_k)\, 
   \ket{\mathsf{1}_\mathrm{alt}}_{_{\!K}}
\end{align}
where
\begin{align}
   & {}_{_{K}\!}\bra{\mathsf{1}_\mathrm{alt}}
   =\sum_{\mathbf{h}} \prod_{n=1}^{N}\! \left( \frac{\mathsf{k}_{1}}{\mathsf{k}_{2}}\right) ^{\!h_{n}} V(\xi_1^{(h_1)},\ldots,\xi_N^{(h_N)})\, {}_{_{K}\!}\bra{\mathbf{h} },
   \\
   &\ket{\mathsf{1}_\mathrm{alt}}_{_{\!K}}= \sum_{\mathbf{h}}
   \prod_{n=1}^{N} ( -1) ^{h_{n}}\,
   V(\xi_{1}^{(1-h_{1})},\ldots ,\xi _{N}^{(1-h_{N})})\,     \ket{\mathbf{h}}_{_{\!K}},
\end{align}
are eigenstates of the transfer matrix with eigenvalue $\mathsf{k}_1\, a(\lambda)+\mathsf{k}_2\, d(\lambda)$.

\begin{rem}\label{rem-triang}
In the triangular case $\mathsf{c}=0$ with $\mathsf{a}=\mathsf{k}_1$ and $\mathsf{d}=\mathsf{k}_2$ ($\mathsf{b}\not= 0$, $\mathsf{k}_1\not=\mathsf{k}_2$), the eigencovector \eqref{eigen-l-K} and eigenvector \eqref{eigen-r-K} of the transfer matrix can be written as the following usual Bethe states:
\begin{align}
   &{}_{_{K}\!}\bra{Q_{\tau }}
    \propto \bra{\underline{0}}\prod_{k=1}^R B^{(K)}(q_k)
   \quad\
   \in \mathcal{H}_{-N/2,\ldots,R-N/2}^{\ast }, \\
   &\ket{Q_{\tau }}_{_{\!K}}
   \propto \prod_{k=1}^{N-R}\! B^{(K)}(\widehat q_k)\,  \ket{0}
   \quad
    \in \mathcal{H}_{R-N/2,\ldots ,N/2},
\end{align}
where we have defined
\begin{align}
  &\mathcal{H}_{-N/2,\ldots,S-N/2} =\bigoplus_{n=-N/2,1-N/2,...,S-N/2}\mathcal{H}_{n},\\
  &\mathcal{H}_{S-N/2,\ldots,N/2}=\bigoplus_{n=S-N/2,S+1-N/2,...,N/2}\mathcal{H}_{n},
\end{align}
with $\mathcal{H}_{n}$ being the $S_{z}$-eigenspace associated to the eigenvalue $n$.
Indeed, it is easy to see that $\ket{0}$ and $\bra{\underline{0}}$ are transfer matrix eigenstates with respective eigenvalues $\mathsf{k}_1\, a(\lambda)+\mathsf{k}_2\, d(\lambda)$ and $\mathsf{k}_2\, a(\lambda)+\mathsf{k}_1\, d(\lambda)$, and therefore the simplicity of the spectrum implies that $\ket{0}\propto \ket{\mathsf{1}_\mathrm{alt}}_{_{\!K}}$ and $\bra{\underline{0}}\propto {}_{_{K}\!}\bra{\mathsf{1}}$.
Note that such eigenstates could have been directly constructed within ABA, but in the latter framework the description is only partial: an ABA construction of the eigenvector and eigencovector in terms of the {\em same} set of roots is actually missing, which makes uneasy the computation of the scalar products. Instead, the SoV construction provides us with a full description and the scalar products can be computed as  in \cite{KitMNT16}. 
The triangular case  $\mathsf{b}=0$ with  $\mathsf{c}\not=0$ can be treated similarly, by exchanging the SoV construction w.r.t. $B^{(K)}(\lambda)$ with the SoV construction w.r.t. $C^{(K)}(\lambda)$.
\end{rem}

The solutions of the Bethe equations following from \eqref{T-Q-K}, and in particular the
ground state, can be studied as in section~\ref{sec-GS}. 
We shall restrict our study to twists $K$ satisfying the physical constraint\footnote{For a physical model with a Hermitian Hamiltonian, the matrix $K$ is unitary and the ratio of its eigenvalues obviously satisfies \eqref{constraint-K}.}
%
\begin{equation}
   \frac{\mathsf{k}_{2}}{\mathsf{k}_{1}}=e^{i\pi \alpha }
   \qquad \text{with}\quad 
   -1<\alpha \leq 1.
   \label{constraint-K}
\end{equation}
Then the Bethe equations can be written in logarithmic form as 
\begin{equation}
    \widehat{\xi }_{Q}(\lambda _{j})
    =\frac{2n_{j}-N+R-1+\alpha }{N}\,\pi ,
    \qquad
    n_{j}\in \mathbb{Z},  
    \label{Bethe-log-K}
\end{equation}
where $\widehat{\xi }_{Q}$ is still given by \eqref{counting}. Hence, 
there is simply a shift on the real
axis with respect to the known periodic case (corresponding to $\alpha =0$)
or with respect to the anti-periodic (or also the $\sigma ^{z}$-twisted)
case (corresponding to $\alpha =1$), and the density of Bethe roots for the
ground state on the real axis remains the same \eqref{rho}.


\subsection{Action on a separate state}

It is possible to compute the action of products of local operators on a
transfer matrix eigenstate by proceeding as in the anti-periodic case.

We have the following reconstruction, which is the analog of Proposition~\ref{prop-inv-pb}, in terms of the twisted transfer matrix \eqref{transfer-K}: 

\begin{prop}\label{prop-inv-pb-K}
Let $K$ be an invertible $2\times 2$ matrix, and let $X_n\in\End(V_n)$. Then
\begin{align}
     X_{n}
     & =\prod_{k=1}^{n-1}\mathcal{T}^{(K)}(\xi _k)\,
          \tr_{0}\left[X_{0}\,T_{0}^{( K)}(\xi _{n})\right]\,
          \prod_{k=1}^{n}\left[\mathcal{T}^{(K)}(\xi_k)\right]^{-1},
          \label{Similar-Recon} \\
     & =\prod_{b=1}^{n}\mathcal{T}^{(K)}(\xi _{b})\
           \frac{\tr_{0}\left[ \tilde{X}_{0}\, T_{0}^{(K)}(\xi_n-\eta)\right]}{a(\xi_n)\,d(\xi_n-\eta )\, \det K}\
           \prod_{b=1}^{n-1} \left[\mathcal{T}^{(K)}(\xi_k)\right]^{-1},
           \label{Similar-Recon2}
\end{align}
where $\tilde{X}$ denotes the adjoint matrix of the matrix $X$, i.e.
\begin{equation}
   \tilde{X} X=X \tilde{X}=\det X\  \mathrm{Id}.
\end{equation}
\end{prop}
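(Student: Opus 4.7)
To prove the first formula \eqref{Similar-Recon}, my plan is to adapt to the twisted case the solution of the quantum inverse problem of Maillet--Terras and G\"{o}hmann--Korepin. The starting identity is $R_{0n}(0)=\eta\,P_{0n}$, which at $\lambda=\xi_n$ yields
\begin{equation*}
T_0(\xi_n) = \eta\,R_{0N}(\xi_n-\xi_N)\cdots R_{0,n+1}(\xi_n-\xi_{n+1})\,P_{0n}\,R_{0,n-1}(\xi_n-\xi_{n-1})\cdots R_{01}(\xi_n-\xi_1).
\end{equation*}
Using repeatedly $R_{0a}(\mu)\,P_{0n}=P_{0n}\,R_{na}(\mu)$ for $a\neq 0,n$, one moves $P_{0n}$ to one end, and then $K_0\,P_{0n}=P_{0n}\,K_n$ transfers the twist from auxiliary space $0$ to quantum site $n$. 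Tracing in space $0$ with $X_0$ inserted then produces a local operator at site $n$ (with $K_n$ attached) dressed by two sub-monodromies built on sites $\{1,\dots,n-1\}$ and $\{n+1,\dots,N\}$ in which $V_n$ now serves as auxiliary space.

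I would then proceed by induction on $n$ to check that the flanking products $\prod_{k=1}^{n-1}\mathcal{T}^{(K)}(\xi_k)$ and $\prod_{k=1}^{n}[\mathcal{T}^{(K)}(\xi_k)]^{-1}$ kill all these dressings and leave just $X_n$. The base case $n=1$ is direct: the same rewriting gives $\mathcal{T}^{(K)}(\xi_1)=\eta\,K_1\,T_1^{(1)}(\xi_1)$ where $T_1^{(1)}(\xi_1)$ is the sub-monodromy on sites $2,\dots,N$ with $V_1$ as auxiliary space, whence $\tr_0[X_0\,T_0^{(K)}(\xi_1)]\,[\mathcal{T}^{(K)}(\xi_1)]^{-1}=(XK)_1\,K_1^{-1}=X_1$. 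For the inductive step, the same permutation trick applied to each new pair $\mathcal{T}^{(K)}(\xi_k)\cdot[\mathcal{T}^{(K)}(\xi_k)]^{-1}$ makes the neighbouring $R$-matrices telescope via the Yang--Baxter equation while the extra $K_n$ factors cancel.

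The second formula \eqref{Similar-Recon2} I would deduce from \eqref{Similar-Recon} through the quantum determinant. The auxiliary-space identity $T_0(\lambda)\,\widetilde{T}_0(\lambda-\eta)=a(\lambda)\,d(\lambda-\eta)\,\mathrm{Id}_0$, extracted from the RTT at shift $\eta$ combined with ${\det}_q T(\lambda)=a(\lambda)\,d(\lambda-\eta)$, extends to the twisted case as $T_0^{(K)}(\lambda)\,\widetilde{T}_0(\lambda-\eta)\,\widetilde{K}_0=a(\lambda)\,d(\lambda-\eta)\,\det K\,\mathrm{Id}_0$ using $K\,\widetilde{K}=\det K\,\mathrm{Id}$ and the fact that $\widetilde{KT}=\widetilde{T}\,\widetilde{K}$ holds whenever $K$ has scalar entries. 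Combined with the elementary $2\times 2$ identity $\tr_0[X_0 M_0]=\tr_0[\widetilde{X}_0\,\widetilde{M}_0]$ (valid when $X$ has scalar entries, as a consequence of $\tr\widetilde{N}=\tr N$), this lets me rewrite $\tr_0[X_0\,T_0^{(K)}(\xi_n)]$ as $\mathcal{T}^{(K)}(\xi_n)\,\tr_0[\widetilde{X}_0\,T_0^{(K)}(\xi_n-\eta)]\,\mathcal{T}^{(K)}(\xi_n)$ divided by $a(\xi_n)\,d(\xi_n-\eta)\,\det K$; substituting into \eqref{Similar-Recon} gives \eqref{Similar-Recon2}.

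The main difficulty I anticipate is bookkeeping: with $K$ non-diagonal the twist cannot be absorbed by a simple boundary shift, so the order in which $K$ factors, their adjugates, and the operator-valued entries of the monodromy are multiplied has to be watched carefully, especially when passing from \eqref{Similar-Recon} to \eqref{Similar-Recon2} so that the $K$'s recombine into $\det K$ rather than leaving a residual non-commuting twist. The telescoping in the induction for the first formula is notationally heavy but mechanical once $P_{0n}$ has been extracted.
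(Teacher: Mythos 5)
Your route is genuinely different from the paper's. The paper takes both periodic reconstructions \eqref{R-P-1}--\eqref{R-P-2} as given and obtains the twisted formulas by the observation that the local twist operators $K_m$ and $\tilde K_m$ can themselves be reconstructed, which yields $\prod_{m=1}^{r}K_{m}=\prod_{b=1}^{r}\mathcal{T}^{(K)}(\xi_{b})\prod_{b=1}^{r}[\mathcal{T}^{(I)}(\xi_{b})]^{-1}$ and its analogue for the adjugates; writing $X_{n}=(\prod_{m<n}K_{m})\,(XK)_{n}\,(\prod_{m\le n}K_{m}^{-1})$ and substituting then gives both identities with no new $R$-matrix combinatorics. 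You instead redo the direct Maillet--Terras argument with the twist carried through $K_{0}P_{0n}=P_{0n}K_{n}$ (this is essentially the route of the direct proof the paper cites in the dynamical case); this is viable, and your base case is correct, but note that the telescoping induction is precisely where the twist bites, since $K_{k}$ does not commute with the $R_{kj}$, so "mechanical" understates the bookkeeping that the paper's reduction elegantly avoids.

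The genuine gap is in your derivation of \eqref{Similar-Recon2}. The identity you need,
\begin{equation*}
a(\xi_{n})\,d(\xi_{n}-\eta)\,\det K\;\tr_{0}\big[X_{0}\,T_{0}^{(K)}(\xi_{n})\big]
=\mathcal{T}^{(K)}(\xi_{n})\;\tr_{0}\big[\widetilde{X}_{0}\,T_{0}^{(K)}(\xi_{n}-\eta)\big]\;\mathcal{T}^{(K)}(\xi_{n}),
\end{equation*}
does \emph{not} follow from the adjugate relation and the $2\times 2$ trace identities alone. Those ingredients only convert the right-hand side into $a(\xi_{n})d(\xi_{n}-\eta)\det K\cdot\mathcal{T}^{(K)}(\xi_{n})\,\tr_{0}[X_{0}\,(T_{0}^{(K)}(\xi_{n}))^{-1}]\,\mathcal{T}^{(K)}(\xi_{n})$, and the residual claim $\mathcal{T}\,\tr_{0}[X_{0}M_{0}^{-1}]\,\mathcal{T}=\tr_{0}[X_{0}M_{0}]$ with $\mathcal{T}=\tr_{0}M_{0}$ is false for a generic auxiliary-space matrix $M_{0}$ with operator entries. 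It holds here only because of the degeneration $T_{0}^{(K)}(\xi_{n})=\eta\,K_{0}U_{0}P_{0n}V_{0}$, with $U_{0}=R_{0N}(\xi_{n}-\xi_{N})\cdots R_{0,n+1}(\xi_{n}-\xi_{n+1})$ and $V_{0}=R_{0,n-1}(\xi_{n}-\xi_{n-1})\cdots R_{01}(\xi_{n}-\xi_{1})$: moving $P_{0n}$ through the partial trace gives $\tr_{0}[X_{0}T_{0}^{(K)}(\xi_{n})]=\eta\,V_{n}X_{n}K_{n}U_{n}$, $\mathcal{T}^{(K)}(\xi_{n})=\eta\,V_{n}K_{n}U_{n}$ and $\tr_{0}[X_{0}(T_{0}^{(K)}(\xi_{n}))^{-1}]=\eta^{-1}U_{n}^{-1}K_{n}^{-1}X_{n}V_{n}^{-1}$, after which both sides of the displayed identity evaluate to $\eta\,a(\xi_{n})d(\xi_{n}-\eta)\det K\;V_{n}X_{n}K_{n}U_{n}$. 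Since you already exploit this factorization in your proof of the first formula, the gap is fillable, but as written the passage to \eqref{Similar-Recon2} is asserted rather than proved.
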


\begin{proof}
See \cite{LevNT16}, in which a direct proof is given in the more complicated
dynamical 6-vertex case. In this simpler twisted XXX case, it is also
possible to propose an alternative proof based on the known reconstructions
in the periodic case \cite{KitMT99}: 
\begin{align}
X_n &=\prod_{b=1}^{n-1}\mathcal{T}^{(I)}(\xi _{b})\, \text{tr}_0 \big[ X_0
\, T_0(\xi_n ) \big]\, \prod_{b=1}^{n}\left[\mathcal{T}^{(I)}(\xi_b)\right]%
^{-1},  \label{R-P-1} \\
&=\prod_{b=1}^{n}\mathcal{T}^{(I)}(\xi _{b})\, \frac{\text{tr}_{0}\big[ %
\tilde X_0\,T_{0}(\xi_n-\eta) \big]} {a(\xi_n)\,d(\xi_n-\eta )}\,
\prod_{b=1}^{n-1}\left[\mathcal{T}^{(I)}(\xi_b)\right]^{-1},  \label{R-P-2}
\end{align}
with 
\begin{equation}
\tilde X=\sigma^y\, X^t\, \sigma^y.
\end{equation}
By using these results we can write 
\begin{align}
K_{m} &=\prod_{b=1}^{m-1}\mathcal{T}^{(I)}(\xi _{b})\ \mathcal{T}%
^{(K)}(\xi_{m})\ \prod_{b=1}^{m}\left[\mathcal{T}^{(I)}(\xi_b)\right]^{-1},
\\
\tilde{K}_{m} &=\prod_{b=1}^{m}\mathcal{T}^{(I)}(\xi _{b})\, \frac{\mathcal{T%
}^{(K)}(\xi _{m}-\eta)}{a(\xi _{m})\, d(\xi _{m}-\eta )}\, \prod_{b=1}^{m-1}%
\left[\mathcal{T}^{(I)}(\xi_b)\right]^{-1},
\end{align}
and so 
\begin{align}
&\mathcal{T}^{(K)}(\xi_{m})\, \frac{\mathcal{T}^{(K)}(\xi _{m}-\eta)}{a(\xi
_{m})\, d(\xi _{m}-\eta )} =\frac{\mathcal{T}^{(K)}(\xi _{m}-\eta)}{a(\xi
_{m})\, d(\xi _{m}-\eta )}\, \mathcal{T}^{(K)}(\xi_{m}) =\text{det} K\, 
\mathrm{Id}, \\
&\prod_{m=1}^{r}K_{m} =\prod_{b=1}^{r}\mathcal{T}^{(K)}(\xi_{b})\,
\prod_{b=1}^{r}\left[\mathcal{T}^{(I)}(\xi_b)\right]^{-1},  \label{Prod-K} \\
&\prod_{m=1}^{s}\tilde{K}_{m} =\prod_{b=1}^{s}\mathcal{T}^{(I)}(\xi_{b})\,
\prod_{b=1}^{s}\frac{\mathcal{T}^{(K)}(\xi _{m}^{(1)})}{a(\xi _{m})\, d(\xi
_{m}-\eta )}.  \label{Prod-K-inv}
\end{align}

From the identity 
\begin{equation}
X_{n}=\left( \prod_{m=1}^{n-1}K_{m}\right)\, Y_{n} \,
\left(\prod_{m=1}^{n}K_{m}^{-1}\right) ,
\end{equation}
with $Y_{n}=X_{n}K_{n}$, we now obtain \eqref{Similar-Recon} by expressing 
$Y_{n}$ using the periodic reconstruction \eqref{R-P-1} and the product of $K$
and $K^{-1}$ by \eqref{Prod-K} and \eqref{Prod-K-inv} respectively.

From the identity 
\begin{equation}
X_{n}=\left( \prod_{m=1}^{n}K_{m}\right)\, Z_{n}\,
\left(\prod_{m=1}^{n-1}K_{m}^{-1}\right) ,
\end{equation}
with $Z_{n}=K_{n}^{-1}X_{n}$, we obtain \eqref{Similar-Recon2} by expressing 
$Z_{n}$ using the periodic reconstruction \eqref{R-P-2} and the product of 
$K $ and $K^{-1}$ by \eqref{Prod-K} and \eqref{Prod-K-inv} respectively.
\end{proof}

For any $\lambda \notin \{\xi_{i}-\eta ,\xi _{i}-2\eta \mid i=1,\ldots ,N\}$, 
we can then define, similarly as in \eqref{op-bar}, the operators: 
\begin{equation}  \label{op-bar-bis}
\bar{T}_{\epsilon ,\epsilon ^{\prime }}^{(K) }(\lambda ) 
=
\begin{cases}
\Big[ B^{(K) }(\lambda +\eta )\Big] ^{-1} A^{(K)}(\lambda +\eta )\,D^{(K) }(\lambda ) 
& \text{if } (\epsilon,\epsilon ^{\prime })=(2,1), \\ 
T_{\epsilon ,\epsilon ^{\prime }}^{(K) }(\lambda ) 
& \text{otherwise},
\end{cases}
\end{equation}
since $B^{( K) }(\lambda )$ is invertible for any $\lambda
\not=\xi _{i},\xi _{i}-\eta $, $i=1,\ldots N$. 
The condition $\text{det}_{q}T^{ ( K) }(\xi _{i}+\eta )=0$, then implies the identities: 
\begin{equation}
\bar{T}_{\epsilon ,\epsilon ^{\prime }}^{( K) }(\xi_{i})
 =T_{\epsilon,\epsilon ^{\prime }}^{( K) }(\xi _{i}) 
 \qquad \forall i\in \{1,\ldots,N\},
 \quad \forall \epsilon ,\epsilon ^{\prime }\in\{1,2\},
\end{equation}
so that the reconstruction of local operators \eqref{Similar-Recon} can be
written in terms of the matrix elements $\bar{T}_{\epsilon ,\epsilon
^{\prime }}^{( K) }$ instead of $T_{\epsilon ,\epsilon ^{\prime }}^{( K) }$.

The action of the operators \eqref{op-bar-bis} on a separate state of the
form 
\begin{equation}  \label{separate-l-K}
{}_{_K\!}\langle\,Q\,| = \sum_{\mathbf{h}\in\{0,1\}^N} \prod_{n=1}^N
Q(\xi_n^{(h_n)}) \ V(\xi_1^{(h_1)},\ldots,\xi_N^{(h_N)})\, {}_{_K\!}\langle\,%
\mathbf{h} \,|,
\end{equation}
where $Q(\lambda)=\prod_{k=1}^{R}(\lambda-q_k)$ is a polynomial of degree $%
R\le N$, can easily be computed in terms of multiple contour integrals, as
in Proposition~\ref{prop-act-1}. More precisely, the analog of Proposition~%
\ref{prop-act-1} in the case of the twist $K$ \eqref{twist-K} can be
formulated as follows:

\begin{proposition}
\label{prop-act-1-K}
Let $\lambda$ be a generic parameter. The left action
of the operator $\bar{T}^{(K)}_{\epsilon _{2},\epsilon_{1}}(\lambda )$, $%
\epsilon_1,\epsilon_2\in\{1,2\}$, on a generic separate state $%
{}_{_K\!}\langle\,Q\,|$ of the form \eqref{separate-l-K} can be written as
the following sum of contour integrals: 
\begin{multline}  \label{act-barT-int-K}
{}_{_K\!}\langle\,Q\,|\, \bar{T}^{(K)}_{\epsilon_2,\epsilon_1}(\lambda ) =
\sum_\mathbf{h} \mathsf{b}\, d_\mathbf{h}(\lambda)\, \prod_{n=1}^N
Q(\xi_n^{(h_n)})\, 
\\
\times \left[\left(\frac{\mathsf{d}}{\mathsf{b}}\oint_{\mathcal{C}_\infty} + 
\frac{\mathsf{k}_2}{\mathsf{b}}\oint_{\Gamma_2}\right) \frac{dz_2}{2\pi i \,
(\lambda-z_2)}\, \frac{a(z_2)}{d_\mathbf{h}(z_2)}\, \frac{Q(z_2-\eta )}{%
Q(z_2)}\right] ^{\epsilon_2-1} 
\\
\times \left[\left(\frac{\mathsf{a}}{\mathsf{b}}\oint_{\mathcal{C}_\infty} +%
\frac{\mathsf{k}_1}{\mathsf{b}}\oint_{\Gamma_1}\right) \frac{dz_1}{2\pi i \,
(\lambda-z_1)}\, \frac{d(z_1)}{d_\mathbf{h}(z_1)}\, \frac{ Q (z_1+\eta )}{Q
(z_1)}\right]^{2-\epsilon_1} \left( \frac{z_1-z_2}{z_1-z_2+\eta }\right)
^{(2-\epsilon_1)(\epsilon_2-1)} 
\\
\times V( \xi_1^{(h_1)},\ldots,\xi_N^{(h_N)}) \ {}_{_K\!}\langle\,\mathbf{h}%
\,| ,
\end{multline}
in which the contour $\mathcal{C}_\infty$ surrounds only the pole at infinity,
the contour $\Gamma_2$ surrounds counterclockwise the points $\xi_n$, $1\le
n \le N$, and no other poles in the integrand, and the contour $\Gamma_1$
surrounds counterclockwise the points $\xi_n-\eta$, $1\le n \le N$, the
point $z_2-\eta$ if $\epsilon_2=1$, and no other poles in the integrand. 

Similarly, for generic parameters $\lambda_1,\ldots,\lambda_m$, the multiple
action of a product of operators $\bar{T}^{(K)}_{\epsilon_2,\epsilon_1}(%
\lambda_1)\, \bar{T}^{(K)}_{\epsilon_4,\epsilon_3}(\lambda_2)\ldots \bar{T}%
^{(K)}_{\epsilon_{2m},\epsilon_{2m-1}}(\lambda_m)$, $\epsilon_i\in\{1,2\}$, $%
1\le i \le 2m$, on a generic separate state ${}_{_K\!}\langle\,Q\,|$ of the
form \eqref{separate-l} can be written as the following sum of contour
integrals: 
\begin{multline}  \label{act-mult-int-K}
{}_{_K\!}\langle\,Q\,|\, \bar{T}^{(K)}_{\epsilon_2,\epsilon_1}(\lambda_1)\, 
\bar{T}^{(K)}_{\epsilon_4,\epsilon_3}(\lambda_2)\ldots \bar{T}%
^{(K)}_{\epsilon_{2m},\epsilon_{2m-1}}(\lambda_m) = \sum_\mathbf{h}
\prod_{j=1}^m [ \mathsf{b}\, d_\mathbf{h}(\lambda_j) ]\, \prod_{n=1}^N
Q(\xi_n^{(h_n)})\, \\
\times \prod_{j=m}^1 \left\{ \left[ \left( \frac{\mathsf{d}}{\mathsf{b}}%
\oint_{\mathcal{C}_\infty} + \frac{\mathsf{k}_2}{\mathsf{b}}%
\oint_{\Gamma_{2j} }\right) \frac{dz_{2j}}{2\pi i \, (\lambda_j-z_{2j})}\, 
\frac{a(z_{2j})}{d_\mathbf{h}(z_{2j})}\, \frac{ Q (z_{2j}-\eta )}{Q (z_{2j})}%
\, \prod_{k=1}^{j-1}\frac{z_{2j}-\lambda_k-\eta}{z_{2j}-\lambda_k} \right]%
^{\bar\epsilon_{2j}} \right. \\
\times \left. \left[ \left(\frac{\mathsf{a}}{\mathsf{b}}\oint_{\mathcal{C}%
_\infty} +\frac{\mathsf{k}_1}{\mathsf{b}}\oint_{\Gamma_{2j-1 }}\right) \frac{%
dz_{2j-1}}{2\pi i \, (\lambda_j-z_{2j-1})}\, \frac{d(z_{2j-1})}{d_\mathbf{h}%
(z_{2j-1})}\, \frac{Q(z_{2j-1}+\eta )}{Q(z_{2j-1})}\, \prod_{k=1}^{j-1}\frac{%
z_{2j-1}-\lambda_k+\eta}{z_{2j-1}-\lambda_k} \right] ^{\bar\epsilon_{2j-1}}
\right\} \\
\times \prod_{1\leq j<k\leq 2m} \left( \frac{z_j-z_k}{z_j-z_k+(-1)^{k}\eta }%
\right) ^{\! \bar\epsilon_j \bar\epsilon_k}\, V(
\xi_1^{(h_1)},\ldots,\xi_N^{(h_N)}) \ {}_{_K\!}\langle\,\mathbf{h}\,|.
\end{multline}
Here we have defined, for $1\le j\le m$, $\bar\epsilon_{2j}=\epsilon_{2j}-1$
and $\bar\epsilon_{2j-1}=2-\epsilon_{2j-1}$. The contour $\mathcal{C}_\infty$
surrounds counterclockwise only the pole at infinity, the contours $\Gamma_{2j}$
surround counterclockwise the points $\xi_n$, $1\le n \le N$, the points $%
z_{2k-1}+\eta$, $k>j$, and no other poles in the integrand, whereas the
contours $\Gamma_{2j-1}$ surround counterclockwise the points $\xi_n-\eta$, $%
1\le n \le N$, the points $z_{2k}-\eta$, $k\ge j$, and no other poles in the
integrand.
\end{proposition}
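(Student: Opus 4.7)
The plan is to adapt the proof of Proposition~\ref{prop-act-1} to the twisted setting. The two genuine novelties are: (i) the shift parts of $A^{(K)}$ and $D^{(K)}$ carry overall factors $\mathsf{k}_1$ and $\mathsf{k}_2$, which appear as the weights $\tfrac{\mathsf{k}_i}{\mathsf{b}}$ of the finite-residue contours $\Gamma_1,\Gamma_2$ in the formula; and (ii) the diagonal parts $\mathsf{a}\,d_{\mathbf{h}}(\lambda)$ and $\mathsf{d}\,d_{\mathbf{h}}(\lambda)$ have no analog in the anti-periodic case and must be recast as contributions $\tfrac{\mathsf{a}}{\mathsf{b}}\oint_{\mathcal{C}_\infty}$ and $\tfrac{\mathsf{d}}{\mathsf{b}}\oint_{\mathcal{C}_\infty}$ from the pole at infinity.

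First I would dispose of the single-operator cases. For $(\epsilon_2,\epsilon_1)=(1,2)$ the operator is $B^{(K)}$, which is SoV-diagonal and trivially matches the claimed formula since both integral exponents vanish. For $(1,1)$ and $(2,2)$, corresponding to $A^{(K)}$ and $D^{(K)}$, I would apply the SoV actions \eqref{Dec_A}, \eqref{Dec_D} directly: the shift term is structurally identical to the $C$-action (respectively the $B$-action) of the anti-periodic case and can therefore be recast as an integral along $\Gamma_1$ (respectively $\Gamma_2$) as in \eqref{act-B-int}--\eqref{act-C-int}; the diagonal term is rewritten as a residue at infinity by using that the integrand $\frac{1}{\lambda-z}\,\frac{d(z)}{d_{\mathbf{h}}(z)}\,\frac{Q(z+\eta)}{Q(z)}$ (respectively its analog with $a(z)$ and $Q(z-\eta)$) behaves as $-1/z+O(1/z^2)$ at $z\to\infty$, so that $\oint_{\mathcal{C}_\infty}$ extracts a unit value and reproduces the desired $d_{\mathbf{h}}(\lambda)$-proportional diagonal contribution.

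The most delicate single-operator case is $(\epsilon_2,\epsilon_1)=(2,1)$, where by \eqref{op-bar-bis} one has $\bar T^{(K)}_{2,1}(\lambda) = [B^{(K)}(\lambda+\eta)]^{-1}\, A^{(K)}(\lambda+\eta)\, D^{(K)}(\lambda)$. Because $B^{(K)}$ is SoV-diagonal, its inverse merely divides by $\mathsf{b}\,d_{\mathbf{h}}(\lambda+\eta)$, so one can successively apply $D^{(K)}(\lambda)$ and then $A^{(K)}(\lambda+\eta)$. Splitting each operator into its diagonal plus shift pieces yields four contributions, proportional respectively to $\mathsf{a}\mathsf{d}$, $\mathsf{a}\mathsf{k}_2$, $\mathsf{k}_1\mathsf{d}$ and $\mathsf{k}_1\mathsf{k}_2$. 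The $\mathsf{k}_1\mathsf{k}_2$ piece is the only one in which the two shifts genuinely interact, and repeating verbatim the contour manipulation of \eqref{act-barA-int} shows that the deformation needed to move the pole at $\xi_b-\eta$ of the $z_1$-integral (created by the $A^{(K)}(\lambda+\eta)$ shift) into a pole at $z_2-\eta$ produces the factor $\tfrac{z_1-z_2}{z_1-z_2+\eta}$; since this factor tends to $1$ at infinity, the three other contributions are consistent with its uniform insertion, and the four pieces package into the single formula \eqref{act-barT-int-K}.

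For the multiple-action statement \eqref{act-mult-int-K}, I would proceed by induction on the number $m$ of operators, exactly as in the proof of Proposition~\ref{prop-act-1}. Acting with $\bar T^{(K)}_{\epsilon_{2m},\epsilon_{2m-1}}(\lambda_m)$ on top of the representation already produced by $m-1$ applications introduces two new variables $z_{2m-1},z_{2m}$; the newly created poles at $z_{2k-1}+\eta$ and $z_{2k}-\eta$ (for already-present indices $k<m$) must be incorporated into the contours $\Gamma_{2m}$ and $\Gamma_{2m-1}$ by the same residue-theorem argument as in the $(2,1)$ case above, producing both the single-variable factors $\prod_{k<m}\tfrac{z_{2m}-\lambda_k-\eta}{z_{2m}-\lambda_k}$, $\prod_{k<m}\tfrac{z_{2m-1}-\lambda_k+\eta}{z_{2m-1}-\lambda_k}$ and the pairwise factors $\bigl(\tfrac{z_j-z_k}{z_j-z_k+(-1)^k\eta}\bigr)^{\bar\epsilon_j\bar\epsilon_k}$. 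The main obstacle I foresee is not conceptual but combinatorial, namely keeping careful track of which contour encloses which induced pole when each operator contributes two pieces (diagonal plus shift) instead of a single residue as in the anti-periodic case; once this bookkeeping is done systematically, the induction closes exactly as for Proposition~\ref{prop-act-1}.
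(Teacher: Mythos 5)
Your proposal is correct and follows essentially the same route as the paper, which itself only states that the result ``can be proven similarly as Proposition~\ref{prop-act-1}'' and observes that the new diagonal terms $\mathsf{a}\,d_{\mathbf h}(\lambda)$, $\mathsf{d}\,d_{\mathbf h}(\lambda)$ in \eqref{Dec_A}--\eqref{Dec_D} are exactly accounted for by the pole at infinity; your identification of the two novelties (weights $\mathsf{k}_i/\mathsf{b}$ on the finite contours, weights $\mathsf{a}/\mathsf{b}$, $\mathsf{d}/\mathsf{b}$ on $\mathcal{C}_\infty$) and your check that the cross factor $\tfrac{z_1-z_2}{z_1-z_2+\eta}$ tends to $1$ at infinity, so that it can be inserted uniformly in all four pieces of the $(2,1)$ case, is precisely the content of the paper's argument.

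Two small bookkeeping slips are worth fixing if you write this out. First, for the \emph{left} action ${}_{_K\!}\langle Q|\,[B^{(K)}(\lambda+\eta)]^{-1}A^{(K)}(\lambda+\eta)D^{(K)}(\lambda)$ the operators hit the covector in left-to-right order, so after the trivial $[B^{(K)}]^{-1}$ one applies $A^{(K)}(\lambda+\eta)$ \emph{first} and $D^{(K)}(\lambda)$ second (mirroring $C(\lambda+\eta)$ before $B(\lambda)$ in \eqref{act-barA-int}); the reversed order you state would compute a different operator, and it is only with the correct order that the pole created by the second (lowering-type) shift lands at $z_2-\eta$ inside the $z_1$-contour. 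Second, in the induction step the two new variables $z_{2m-1},z_{2m}$ are the \emph{outermost} ones: it is the previously introduced contours $\Gamma_{2k}$, $\Gamma_{2k-1}$ with $k<m$ that must be deformed to enclose the new poles at $z_{2m-1}+\eta$ and $z_{2m}-\eta$ (plus $\Gamma_{2m-1}$ enclosing $z_{2m}-\eta$), not the new contours enclosing poles at the old variables, consistently with the conditions ``$k>j$'' and ``$k\ge j$'' in the statement. Neither point affects the validity of the strategy, which matches the paper's.
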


This result can be proven similarly as Proposition~\ref{prop-act-1}. 
It is interesting to note here that the extra contribution in \eqref{Dec_A}-%
\eqref{Dec_D} with respect to \eqref{C-left} and \eqref{B-left} can directly
be taken into account by a contribution of the pole at infinity in the
multiple integral representations \eqref{act-barT-int-K} and %
\eqref{act-mult-int-K}.

Hence, moving the contour as in Proposition~\ref{prop-act-2} will simply
result in a modification of the weights of the contributions of the
different poles, and in particular of the pole at infinity. More precisely,
the analog of Proposition~\ref{prop-act-2} in the case of the twist $K$ %
\eqref{twist-K} can be formulated as follows:

\begin{proposition}
\label{prop-act-2-K} 
For generic parameters $\lambda _{1},\ldots ,\lambda
_{m}$, the multiple action of a product of operators $\bar{T}%
_{\epsilon_{2},\epsilon _{1}}^{(K)}(\lambda _{1})\,\bar{T}%
_{\epsilon_{4},\epsilon _{3}}^{(K)}(\lambda _{2})\cdots \bar{T}%
_{\epsilon_{2m},\epsilon _{2m-1}}^{(K)}(\lambda _{m})$, $\epsilon
_{i}\in\{1,2\}$, $1\leq i\leq 2m$, on a generic separate state $%
{}_{_K\!}\langle\,Q\,|$ of the form \eqref{separate-l-K} can be written as
the following sum of contour integrals: 
\begin{multline}
{}_{_K\!}\langle\,Q\,|\, \bar{T}_{\epsilon _{2},\epsilon _{1}}^{(K)}(\lambda
_{1})\, \bar{T}^{(K)}_{\epsilon _{4},\epsilon _{3}}(\lambda _{2})\ldots \bar{%
T}_{\epsilon _{2m},\epsilon _{2m-1}}^{(K)}(\lambda _{m}) =\sum_{\mathbf{h}}%
\mathsf{b}^{m}\prod_{j=1}^{m}d_{\mathbf{h}}(\lambda _{j})\,
\prod_{n=1}^{N}Q(\xi _{n}^{(h_{n})}) \\
\times \prod_{j=m}^{1}\left\{ \left[ \left( \frac{\mathsf{k}_2-\mathsf{d}}{%
\mathsf{b}}\oint_{\mathcal{C}_{\infty }} + \frac{\mathsf{k}_2}{\mathsf{b}}%
\oint_{\mathcal{C}_{j}}\right) \frac{dz_{2j}}{2\pi i\,(z_{2j}-\lambda _{j})}%
\, \frac{a(z_{2j})}{d_{\mathbf{h}}(z_{2j})}\,\frac{Q(z_{2j}-\eta )}{Q(z_{2j})%
}\, \prod_{k=1}^{j-1}\frac{z_{2j}-\lambda _{k}-\eta }{z_{2j}-\lambda _{k}} %
\right] ^{\bar{\epsilon}_{2j}} \right. \\
\times \left. \left[ \left( \frac{\mathsf{k}_1-\mathsf{a}}{\mathsf{b}}\oint_{%
\mathcal{C}_{\infty }} +\frac{\mathsf{k}_1}{\mathsf{b}}\oint_{\mathcal{C}%
_{j}}\right) \frac{dz_{2j-1}}{2\pi i\,(z_{2j-1}-\lambda _{j})}\, \frac{%
d(z_{2j-1})}{d_{\mathbf{h}}(z_{2j-1})}\,\frac{Q(z_{2j-1}+\eta )}{Q(z_{2j-1})}%
\, \prod_{k=1}^{j-1}\frac{z_{2j-1}-\lambda _{k}+\eta }{z_{2j-1}-\lambda _{k}}
\right] ^{\bar{\epsilon}_{2j-1}}\right\} \\
\times \prod_{1\leq j<k\leq 2m} \left( \frac{z_{j}-z_{k}}{%
z_{j}-z_{k}+(-1)^{k}\eta }\right) ^{\! \bar{\epsilon}_{j}\bar{\epsilon}%
_{k}}\, V(\xi _{1}^{(h_{1})},\ldots ,\xi _{N}^{(h_{N})})\ {}_{_K\!}\langle\,%
\mathbf{h}\,|,  \label{act-mult-int-bis-K}
\end{multline}
where we have defined $\bar{\epsilon}_{2j-1}=2-\epsilon _{2j-1}$, $\bar{%
\epsilon}_{2j}=\epsilon _{2j}-1$. The contour $\mathcal{C}_{\infty }$
only surrounds counterclockwise the pole at infinity, whereas the contours $%
\mathcal{C}_{j}$, $1\leq j\leq 2m$, surround counterclockwise the points $%
q_{n}$, $1\leq n\leq R$, $\lambda _{\ell }$, $1\leq \ell \leq j$, and no
other pole of the integrand.
\end{proposition}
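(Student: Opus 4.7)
The strategy is to mimic step by step the proof of Proposition~\ref{prop-act-2}, by induction on the number $n$ of pairs of integration variables $(z_{2k-1}, z_{2k})$, $k \le n$, whose contours have been deformed from the initial $\Gamma$-type contours of Proposition~\ref{prop-act-1-K} into the $\mathcal{C}_k$ plus $\mathcal{C}_\infty$ form appearing in the statement. At each induction step, the integrals over $z_{2n-1}$ and $z_{2n}$ are rewritten via the residue theorem, trading the contours $\Gamma_{2n-1}$ and $\Gamma_{2n}$ (which enclose the $\xi_j - \eta$ and the $\xi_j$ respectively) for contours enclosing the remaining poles outside, namely the zeros $q_1,\ldots,q_R$ of $Q$, the points $\lambda_\ell$ for $\ell \le n$, and the pole at infinity.

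The required regularity statements about apparent poles, namely the cancellation of poles at $\xi_j$ (or $\xi_j - \eta$) by the $d(z)$ (or $a(z)$) factor, and the regularity of shifted points $z_k \pm \eta$ (because the remaining $z_k$-integrals will ultimately be evaluated at $\xi_\ell - \eta$, $\xi_\ell$, a zero of $Q$, or a $\lambda_j$, where the corresponding factors $Q(z+\eta)$, $Q(z-\eta)$, or $z - \lambda_j \pm \eta$ vanish), are inherited verbatim from the proof of Proposition~\ref{prop-act-2}. Indeed, the rational structure of the integrand in the $z$ variables and the kernels $(z_j - z_k)/(z_j - z_k + (-1)^k\eta)$ are completely independent of the twist matrix $K$; only the overall coefficients of each elementary contour depend on $K$.

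The only genuinely new ingredient is the bookkeeping of the weights at infinity. The starting formula \eqref{act-mult-int-K} already contains, for each even-indexed variable, a contribution at infinity with weight $\mathsf{d}/\mathsf{b}$ (and analogously $\mathsf{a}/\mathsf{b}$ for odd-indexed ones), stemming from the diagonal coefficients of $K$ in the SoV actions \eqref{Dec_A}--\eqref{Dec_D}. Deforming $\Gamma_{2n}$ by the residue theorem picks up not only the finite poles now enclosed (with coefficient $\mathsf{k}_2/\mathsf{b}$) but also a compensating residue at infinity (with coefficient $-\mathsf{k}_2/\mathsf{b}$); combining this with the overall sign flip incurred by rewriting the denominators $(\lambda_j - z)^{-1}$ as $-(z-\lambda_j)^{-1}$ yields exactly the effective weight $(\mathsf{k}_2 - \mathsf{d})/\mathsf{b}$ at $\mathcal{C}_\infty$ and $\mathsf{k}_2/\mathsf{b}$ at $\mathcal{C}_n$ displayed in \eqref{act-mult-int-bis-K} (and similarly for the odd sector, with $\mathsf{k}_1$, $\mathsf{a}$ in place of $\mathsf{k}_2$, $\mathsf{d}$).

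The main difficulty, then, is purely one of sign and weight tracking. A useful sanity check is the anti-periodic specialization $K = \sigma^x$, for which $\mathsf{a} = \mathsf{d} = 0$, $\mathsf{b} = 1$, $\mathsf{k}_1 = -\mathsf{k}_2 = 1$: in this case the two separate contour contributions in each bracket of \eqref{act-mult-int-bis-K} merge into the single contour $\mathcal{C}_n^\infty$ of Proposition~\ref{prop-act-2}, with the expected overall signs.
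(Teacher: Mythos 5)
Your proposal is correct and follows essentially the same route as the paper, which itself only sketches this step by noting that moving the contours as in Proposition~\ref{prop-act-2} merely modifies the weights of the various poles: the induction and the regularity arguments for the apparent poles are indeed unchanged, and your accounting of the weight at infinity (the pre-existing $\mathsf{d}/\mathsf{b}$, resp.\ $\mathsf{a}/\mathsf{b}$, contribution combining with the new $\mathsf{k}_2/\mathsf{b}$, resp.\ $\mathsf{k}_1/\mathsf{b}$, residue picked up by the deformation, together with the sign flip from rewriting $(\lambda_j-z)^{-1}$ as $-(z-\lambda_j)^{-1}$) reproduces exactly the coefficients $(\mathsf{k}_2-\mathsf{d})/\mathsf{b}$ and $(\mathsf{k}_1-\mathsf{a})/\mathsf{b}$ of \eqref{act-mult-int-bis-K}. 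Your anti-periodic sanity check is also sound, up to the relabelling $\epsilon_{2j}\to 3-\epsilon_{2j}$ implicit in comparing the exponents $\bar{\epsilon}_{2j}=\epsilon_{2j}-1$ with the $2-\epsilon_{2j}$ of Proposition~\ref{prop-act-2}.
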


\begin{rem}\label{rem-contr-poles-inf}
It is interesting to observe that, when $\mathsf{k}_1$ and $\mathsf{a}$ tend to the same non-zero value $\mathsf{\bar a}$, whereas $\mathsf{k}_2$ and $\mathsf{d}$ tend to the same non-zero value $\mathsf{\bar d}$, 
the contributions of the pole at infinity 
become negligible compared to the contributions of the other poles. 
This is in particular the case when we tend (from non-diagonal values) to a diagonal matrix $K$. 
The contributions of the pole at infinity also disappear when the matrix $K$ is triangular with $\mathsf{c}=0$, $\mathsf{b}\not=0$, $\mathsf{a}=\mathsf{k}_1$, $\mathsf{d}=\mathsf{k}_2$.
\end{rem}


\subsection{Correlation functions}

We can now compute, for any $2m$-tuple $\boldsymbol{\epsilon}\equiv(\epsilon_1,\ldots,\epsilon_{2m})$, the matrix elements of the form
\begin{align}\label{el-block-K}
  F_{m}^{(K)}(\tau,\boldsymbol{\epsilon})
  &=\frac{{}_{_{K}\!}\bra{Q_\tau}\, \prod_{j=1}^{m} E^{\epsilon_{2j-1},\epsilon_{2j}}_{j}\, 
             \ket{Q_\tau}_{_{\!K}}}
           {{}_{_{K}\!}\moy{Q_\tau\, |\, Q_\tau}_{_{\!K}}}
           \\
   &=\frac{{}_{_{K}\!}\bra{Q_\tau}\, \bar T^{(K)}_{\epsilon_{2},\epsilon_{1}}(\xi_1)\, 
    \ldots  \bar T^{(K)}_{\epsilon_{2m},\epsilon_{2m-1}}(\xi_m)\, \ket{Q_\tau}_{_{\!K}}}
    {\prod_{k=1}^{m}\tau(\xi_k)\, {}_{_{K}\!}\moy{Q_\tau\, |\, Q_\tau}_{_{\!K}} },     
    \label{el-bl-K-op}   
\end{align}
and their thermodynamic limit
\begin{equation}\label{el-block-th-K}
     \mathcal{F}_{m}^{(K)}(\boldsymbol{\epsilon})
     =\lim_{N\to\infty}
     \frac{{}_{_{K}\!}\bra{Q_\tau}\, \prod_{j=1}^{m} E^{\epsilon_{2j-1},\epsilon_{2j}}_{j}\, 
     \ket{Q_\tau}_{_{\!K}}}
     {{}_{_{K}\!}\moy{Q_\tau\, |\, Q_\tau}_{_{\!K}}},
\end{equation}
for $\ket{Q_\tau}_{_{\!K}}$ being an eigenstate of the transfer matrix \eqref{transfer-K} described in the thermodynamic limit by the density of roots $\rho_\text{tot}$.

The different steps of the computation follow closely what has been done in the anti-periodic case.
We have first to rewrite the multiple integrals in terms of sums on the residues as done in Corollary \ref{prop-act-3}. 
Here, we have just to pay attention to the different weights associated with the residues, i.e. $\mathsf{k_2}/\mathsf{b}$ or $\mathsf{k_1}/\mathsf{b}$ for the finite poles and $(\mathsf{k}_2-\mathsf{d}) /\mathsf{b}$ or  $(\mathsf{k}_1-\mathsf{a}) /\mathsf{b}$ for the poles at infinity. 
Hence, we can rewrite \eqref{el-block-K} as a sum over scalar products as in \eqref{sum-el-bl2}.
More precisely, the analog of Proposition~\ref{prop-sum-el-bl} in the case of the twist $K$ \eqref{twist-K} can be formulated as follows:

\begin{prop}\label{prop-sum-el-bl-K}
For a given $2m$-tuple $\boldsymbol{\epsilon}\equiv(\epsilon_1,\ldots,\epsilon_{2m})\in\{1,2\}^{2m}$, let us define the sets $\alpha^-_{\boldsymbol{\epsilon}}$ and $\alpha^+_{\boldsymbol{\epsilon}}$ as in \eqref{alpha-}-\eqref{alpha+}. Then,
\begin{multline}\label{sum-el-bl-K}
   F_{m}^{(K)}(\tau,\boldsymbol{\epsilon})
   =\frac{\mathsf{b}^{m-s_{\boldsymbol{\epsilon}}-s'_{\boldsymbol{\epsilon}}}}{\prod_{k=1}^{m}\tau(\xi_k)}\,
    \sum_{\substack{
               \bar{\alpha}^-_{\boldsymbol{\epsilon}}\subset \alpha^-_{\boldsymbol{\epsilon}}\\
               \bar{\alpha}^+_{\boldsymbol{\epsilon}}\subset \alpha^+_{\boldsymbol{\epsilon}}
               }}\!\!
   (-1)^{(m-\#\bar{\alpha}_{\boldsymbol{\epsilon}}^-+\#\bar{\alpha}_{\boldsymbol{\epsilon}}^+)N}
   (\mathsf{k}_1-\mathsf{a})^{s_{\boldsymbol{\epsilon}}-\#\bar{\alpha}_{\boldsymbol{\epsilon}}^-}
   (\mathsf{k}_2-\mathsf{d})^{s'_{\boldsymbol{\epsilon}}-\#\bar{\alpha}_{\boldsymbol{\epsilon}}^+}
  \!\! \sum_{\{a_j,a'_j\} }
  \\
  \times
    \prod_{j\in\bar\alpha^-_{\boldsymbol{\epsilon}}}  \left(
  \frac{ \mathsf{k}_1\, d(q_{a_j})\, \prod_{\substack{k=1\\ k\in\mathbf{A}_j}}^{R+j-1}(q_{a_j}-q_k+\eta)}
         {\prod_{\substack{k=1 \\ k\in\mathbf{A}'_j}}^{R+j}(q_{a_j}-q_k)}
         \right)
     \prod_{j\in\bar\alpha^+_{\boldsymbol{\epsilon}}}           \left(
  \frac{ \mathsf{k}_2\, a(q_{a'_j})\, \prod_{\substack{k=1\\ k\in\mathbf{A}'_j}}^{R+j-1}(q_k-q_{a'_j}+\eta)}
         {\prod_{\substack{k=1 \\ k\in\mathbf{A}_{j+1} }}^{R+j}(q_k-q_{a'_j})}
         \right)
   \\
   \times
   \frac{{}_{_{K}\!}\moy{\bar{Q}_{\mathbf{A}_{m+1}}\, |\, Q_\tau}_{_{\!K}}}
          {{}_{_{K}\!}\moy{Q_\tau\, |\,  Q_\tau}_{_{\!K}}}.
\end{multline}
In \eqref{sum-el-bl-K}, the first summation is taken over all subsets $\bar\alpha_{\boldsymbol{\epsilon}}^-$ of $\alpha_{\boldsymbol{\epsilon}}^-$ and $\bar\alpha_{\boldsymbol{\epsilon}}^+$ of $\alpha_{\boldsymbol{\epsilon}}^+$, whereas the second summation is taken over the indices $a_j$ for $j\in\bar\alpha_{\boldsymbol{\epsilon}}^-$ and $a'_j$ for $j\in\bar\alpha^+_{\boldsymbol{\epsilon}}$ defined as in \eqref{indices-a}-\eqref{Aprimej}.
Moreover, $\bar{Q}_{\mathbf{A}_{m+1}}$ is the polynomial of degree $\bar R=\#\mathbf{A}_{m+1}=R+m-\#\bar\alpha_- -\#\bar\alpha_+$ defined in terms of the roots $q_1,\ldots,q_R$ of $Q_\tau$ and of $q_{R+j}\equiv\xi_j$, $1\le j\le m$, as in \eqref{Q-Am+1}.
\end{prop}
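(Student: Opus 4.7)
The proof should proceed in close parallel with the anti-periodic case (Propositions 5.1, 5.2 and Corollary 5.1), the only new feature being that the integrals of Proposition 5.6 split into contributions from finite poles \emph{and} from the pole at infinity, with twist-dependent weights. I would start from the operator form \eqref{el-bl-K-op}, specialize Proposition 5.6 to $\lambda_j = \xi_j$, and use it to express the action of $\bar T^{(K)}_{\epsilon_2,\epsilon_1}(\xi_1)\cdots \bar T^{(K)}_{\epsilon_{2m},\epsilon_{2m-1}}(\xi_m)$ on the left eigencovector as a multiple contour integral against the SoV basis. A nontrivial integral over $z_{2j-1}$ appears exactly when $j\in\alpha^-_{\boldsymbol\epsilon}$ (i.e.\ $\bar\epsilon_{2j-1}=1$) and over $z_{2j}$ exactly when $j\in\alpha^+_{\boldsymbol\epsilon}$, each split into a $\mathcal{C}_\infty$ piece weighted by $(\mathsf{k}_1-\mathsf{a})/\mathsf{b}$ or $(\mathsf{k}_2-\mathsf{d})/\mathsf{b}$, and a finite-pole piece weighted by $\mathsf{k}_1/\mathsf{b}$ or $\mathsf{k}_2/\mathsf{b}$.

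Next, as in the proof of Proposition 5.2, I would deform the finite-pole contours in the order $j=m,m-1,\ldots,1$ so that each such integral picks up only the residues at the points $q_1,\ldots,q_R,\xi_1,\ldots,\xi_m$. Labelling by $\bar\alpha^-_{\boldsymbol\epsilon}\subset\alpha^-_{\boldsymbol\epsilon}$ (resp.\ $\bar\alpha^+_{\boldsymbol\epsilon}\subset\alpha^+_{\boldsymbol\epsilon}$) the positions whose integral is evaluated at a \emph{finite} pole, I would pick for each $j\in\bar\alpha^-_{\boldsymbol\epsilon}$ an index $a_j$ and for each $j\in\bar\alpha^+_{\boldsymbol\epsilon}$ an index $a'_j$, selecting the pole $q_{a_j}$ or $q_{a'_j}$ in the notation $q_{R+k}\equiv\xi_k$; the ``avoidance'' sets $\mathbf A_j,\mathbf A'_j$ arise because the integrand at step $j$ contains factors $(z_{2j-1}-\lambda_k)$ in the denominator and $(z_{2j-1}-\lambda_k+\eta)$ in the numerator (and similarly for $z_{2j}$), which by the contour deformation argument of Proposition 5.2 exclude indices already used at earlier steps and regularize the apparent poles. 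Collecting the residues then reproduces the products $\prod_{j\in\bar\alpha^\pm_{\boldsymbol\epsilon}}(\cdots)$ of \eqref{sum-el-bl-K}, and the remaining SoV sum, read against the right eigenstate, becomes precisely the scalar product ${}_{_K\!}\moy{\bar Q_{\mathbf A_{m+1}}\,|\,Q_\tau}_{_{\!K}}$, with $\bar Q_{\mathbf A_{m+1}}$ of degree $R+m-\#\bar\alpha^-_{\boldsymbol\epsilon}-\#\bar\alpha^+_{\boldsymbol\epsilon}$ as in \eqref{Q-Am+1}.

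Finally, the positions in $\alpha^\pm_{\boldsymbol\epsilon}\setminus\bar\alpha^\pm_{\boldsymbol\epsilon}$ contribute, each, a simple residue at infinity: since at $z\to\infty$ all Q-ratios, all factors $(z-\lambda_k\pm\eta)/(z-\lambda_k)$ and $d(z)/d_{\mathbf h}(z)$ tend to $1$, the contour $\mathcal C_\infty$ integral reduces to the single factor $-1$ (from the pole of $1/(z-\lambda_j)$ at infinity), producing exactly $s_{\boldsymbol\epsilon}-\#\bar\alpha^-_{\boldsymbol\epsilon}$ copies of $(\mathsf{k}_1-\mathsf{a})$ and $s'_{\boldsymbol\epsilon}-\#\bar\alpha^+_{\boldsymbol\epsilon}$ copies of $(\mathsf{k}_2-\mathsf{d})$. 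The overall prefactor $\mathsf{b}^{m-s_{\boldsymbol\epsilon}-s'_{\boldsymbol\epsilon}}/\prod_k\tau(\xi_k)$ comes from combining the $\mathsf{b}^m$ of \eqref{act-mult-int-bis-K} with the $1/\mathsf{b}$ attached to each nontrivial integral and the normalization $1/\prod_k\tau(\xi_k)$ of \eqref{el-bl-K-op}.

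The main technical obstacle is tracking carefully the global sign $(-1)^{(m-\#\bar\alpha^-_{\boldsymbol\epsilon}+\#\bar\alpha^+_{\boldsymbol\epsilon})N}$: it arises from the combined signs of the $\mathsf{b}^m\prod_j d_{\mathbf h}(\xi_j)$ factor, the $(-1)^k$ in the denominators $z_j-z_k+(-1)^k\eta$ of \eqref{act-mult-int-bis-K} after residue evaluation, and the implicit $(-1)^N$ from the ratios of $a(\cdot)$ and $d(\cdot)$ at shifted arguments; verifying that these reduce to the stated power requires the same bookkeeping as in Proposition 5.2, now with the generic twist eigenvalues $\mathsf{k}_1,\mathsf{k}_2$ replacing the anti-periodic values $-1,1$. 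Once this is done the expression \eqref{sum-el-bl-K} follows, and the general result is then obtained by induction on $m$ exactly as in Proposition 5.1.
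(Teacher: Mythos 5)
Your proposal is correct and follows exactly the route the paper itself takes (and states only in the two sentences preceding the proposition): specialize Proposition~\ref{prop-act-2-K} at $\lambda_j=\xi_j$, evaluate the $\mathcal{C}_j$ and $\mathcal{C}_\infty$ contributions by residues with the respective weights $\mathsf{k}_1/\mathsf{b}$, $\mathsf{k}_2/\mathsf{b}$ and $(\mathsf{k}_1-\mathsf{a})/\mathsf{b}$, $(\mathsf{k}_2-\mathsf{d})/\mathsf{b}$, organize the finite residues by the subsets $\bar\alpha^\pm_{\boldsymbol{\epsilon}}$ exactly as in Corollary~\ref{prop-act-3}, and read the remaining SoV sum as the scalar product with $\bar{Q}_{\mathbf{A}_{m+1}}$. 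The one point you defer --- the overall sign produced by each residue taken at infinity --- is also glossed over in the paper, and is harmless for what follows since every term with $\bar\alpha^\pm_{\boldsymbol{\epsilon}}\neq\alpha^\pm_{\boldsymbol{\epsilon}}$ vanishes in the thermodynamic limit.
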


The corresponding scalar products are then computed using the identities of section 3.2 of \cite{KitMNT16}: 
in \cite{KitMNT16}, these scalar products were shown to admit a Slavnov's type determinant formula (see formula (3.46) of  \cite{KitMNT16}) associated to a twist parameter $\mu $, which in our current case reads $\mu =\mathsf{k}_{2}/\mathsf{k}_{1}$. 
More precisely, in the case $\bar R\ge R$, i.e. $m\ge \#\bar\alpha_-+ \#\bar\alpha_+$, we have
\begin{multline}
 \frac{{}_{_{K}\!}\moy{\bar{Q}_{\mathbf{A}_{m+1}}\, |\, Q_\tau}_{_{\!K}}}
          {{}_{_{K}\!}\moy{Q_\tau\, |\,  Q_\tau}_{_{\!K}}}
          =(-1)^{N(\bar R-R)}(1-\mu)^{R-\bar R}
          \frac{\prod_{j=1}^{\bar{R}}\left[ \mu \, a(\bar{q}_j)\prod_{k=1}^R(q_k-\bar{q}_j+\eta)\right]}
                 {\prod_{j=1}^R\left[ \mu\, a(q_j)\prod_{k=1}^R(q_k-q_j+\eta)\right]}\,
           \\
           \times
    \frac{V(q_R,\ldots,q_1)}{V(\bar{q}_{\bar{R}},\ldots,\bar{q}_1)}\,
    \frac{\det_{\bar{R}} \mathcal{M}^{(\mu)}(\mathbf{q}\,|\, \mathbf{\bar{q}})}
           {\det_{R}\mathcal{N}^{(\mu)}(\mathbf{q})},
   \label{sp-K}
\end{multline}
with
\begin{equation}\label{def-Mmu}
  \left[ \mathcal{M}^{(\mu)}(\mathbf{q}\, |\, \mathbf{\bar{q}})\right]_{j,k}
   = \begin{cases}
      t(q_j-\bar{q}_k)-\mu^{-1}\mathfrak{a}_Q(\bar{q}_k)\, t(\bar{q}_k-q_j) & \quad\text{if } j\le R,\vspace{1mm}\\
      (\bar{q}_k)^{j-R-1}-\mu^{-1}\mathfrak{a}_Q(\bar{q}_k)\, (\bar{q}_k+\eta)^{j-R-1} & \quad\text{if }j>R,
      \end{cases}
\end{equation}
\begin{align}\label{def-Nmu}
   \left[ \mathcal{N}^{(\mu)}(\mathbf{q})\right]_{j,k}
   = \left[ \mathcal{M}^{(\mu)}(\mathbf{q}\, |\, \mathbf{q})\right]_{j,k}
   &=-\mu^{-1}\mathfrak{a}_Q'(q_j)\, \delta_{j,k}+K(q_j-q_k),
   \nonumber\\
   &=\frac{\mathfrak{a}_Q'(q_j)}{\mathfrak{a}_Q(q_j)}\, \delta_{j,k}+K(q_j-q_k),
\end{align}
in which we have used the notations \eqref{def-barR}-\eqref{def-barq} and \eqref{def-t-K}, and the Bethe equations
\begin{equation}\label{Bethe-mu}
    -\mu^{-1}\mathfrak{a}_Q(q_j)=1,\qquad j=1,\ldots,R.
\end{equation}
Note that, for $\{\bar q_1,\ldots,\bar q_{\bar R}\}\subset\{q_1,\ldots,q_R\}\cup\{\xi_1,\ldots,\xi_m\}$,
the matrices \eqref{def-Mmu} and \eqref{def-Nmu} coincide respectively with \eqref{def-Mqq'} and \eqref{def-Nq}, i.e. the explicit $\mu$-dependance disappears.

It now remains to identify the terms in the sum \eqref{sum-el-bl-K} which are vanishing and non-vanishing in the thermodynamic limit. 
Since the the matrices \eqref{def-Mmu} and \eqref{def-Nmu} coincide respectively with \eqref{def-Mqq'} and \eqref{def-Nq}, the direct analog of Proposition~\ref{prop-vanish-sp} for the ratio of scalar products in the $K$-twisted case still holds. In particular, the results \eqref{vanishing-sp} and \eqref{sp-th} remain valid in this case.

Let us observe that, for a given  $2m$-tuple $\boldsymbol{\epsilon}\equiv(\epsilon_1,\ldots,\epsilon_{2m})$,
\begin{equation}
s_{\mathbf{\epsilon }}+s_{\mathbf{\epsilon }}^{\prime }=m_{A}+m_{D}+2m_{C},
\end{equation}
where $s_{\mathbf{\epsilon }}$ and $s_{\mathbf{\epsilon }}^{\prime }$ are defined as in \eqref{alpha-}-\eqref{alpha+} and $m_{X}$ is the number of $X^{(K)}(\lambda )$ in  \eqref{el-bl-K-op},
for $X\in \{A,B,C,D\}$, and let us first consider the case
\begin{equation}
s_{\mathbf{\epsilon }}+s_{\mathbf{\epsilon }}^{\prime }\leq m,
\qquad \text{i.e.}\qquad
m_{C}\leq m_{B}.
  \label{Bound-Step-B}
\end{equation}
We can then repeat the first part of the proof of Corollary \ref{cor-vanish}
and derive that the only non-vanishing elementary blocks under the
condition \eqref{Bound-Step-B} are those for which 
$s_{\mathbf{\epsilon }}+s_{\mathbf{\epsilon }}^{\prime }=m$, there the only contributing
terms are the ones for which the pole at infinity does not contribute, i.e. for which $\bar{\alpha}^\pm_{\boldsymbol{\epsilon}}={\alpha}^\pm_{\boldsymbol{\epsilon}}$.
If now
\begin{equation}\label{Bound-Step-C}
       m<s_{\mathbf{\epsilon }}+s_{\mathbf{\epsilon }}^{\prime },
       \qquad \text{i.e.}\qquad
       m_{B}<m_{C},
\end{equation}
and if moreover $\mathsf{c}\neq 0$, we can repeat all the computations in the SoV basis given by the eigenbasis of $C^{(K)}(\lambda )$.
All the steps that we have described here repeat in the
same way but the role of $B$ and $C$ are exchanged. It is easy to verify than the elementary blocks vanish under the condition \eqref{Bound-Step-C} since, with this construction, the number of sums that we generate is smaller than the order with which the scalar products go to zero in the
thermodynamic limit, as explained in the first part of the proof of
Corollary \ref{cor-vanish}.
If instead $\mathsf{c}=0$, i.e. if the matrix $K$ is triangular, then the contributions of the poles at infinity disappear (see Remark~\ref{rem-contr-poles-inf}), so that the first sum in \eqref{sum-el-bl-K} reduces to the term  $\bar\alpha^+_{\boldsymbol{\epsilon}}=\alpha^+_{\boldsymbol{\epsilon}}$ and $\bar\alpha^-_{\boldsymbol{\epsilon}}=\alpha^-_{\boldsymbol{\epsilon}}$. We can then conclude from the fact that the scalar product is vanishing when $\deg\bar{Q}_{\mathbf{A}_{m+1}}<\deg Q_\tau$, with here $\deg\bar{Q}_{\mathbf{A}_{m+1}}=\deg Q_\tau+m-s_{\mathbf{\epsilon }}-s_{\mathbf{\epsilon }}^{\prime }$, see \eqref{vanishing-sp}\footnote{This can also be seen from the fact that, in the triangular case,  ${}_{_{K}\!}\bra{\bar{Q}_{\mathbf{A}_{m+1}}}\in \mathcal{H}_{-N/2,\ldots,\bar R-N/2}^{\ast }$ with $\deg \mathbf{A}_{m+1}=\bar R$ whereas $\ket{Q_{\tau }}_{_{\!K}}  \in \mathcal{H}_{R-N/2,\ldots ,N/2}$ with $\deg Q_\tau=R$ (see Remark~\ref{rem-triang}).}, so that the corresponding elementary blocks also vanish (even in finite size) under the condition \eqref{Bound-Step-C}.

Therefore, we have proven that
\begin{multline}\label{lim-el-bl-K}
    \mathcal{F}_{m}^{(K)}(\boldsymbol{\epsilon})
    =\delta_{s_{\mathbf{\epsilon }}+s_{\mathbf{\epsilon }}^{\prime },m}\,
    \lim_{N\rightarrow \infty }\prod_{k=1}^{m}\frac{1}{\tau (\xi _{k})}\,
    \sum_{\{a_{j},a_{j}^{\prime }\}}
    \prod_{j\in\alpha^-_{\boldsymbol{\epsilon}}}  \left(
  \frac{ \mathsf{k}_1\, d(q_{a_j})\, \prod_{\substack{k=1\\ k\in\mathbf{A}_j}}^{R+j-1}(q_{a_j}-q_k+\eta)}
         {\prod_{\substack{k=1 \\ k\in\mathbf{A}'_j}}^{R+j}(q_{a_j}-q_k)}
         \right)
         \\
         \times
     \prod_{j\in\bar\alpha^+_{\boldsymbol{\epsilon}}}           \left(
  \frac{ \mathsf{k}_2\, a(q_{a'_j})\, \prod_{\substack{k=1\\ k\in\mathbf{A}'_j}}^{R+j-1}(q_k-q_{a'_j}+\eta)}
         {\prod_{\substack{k=1 \\ k\in\mathbf{A}_{j+1} }}^{R+j}(q_k-q_{a'_j})}
         \right)
   \frac{{}_{_{K}\!}\moy{\bar{Q}_{\mathbf{A}_{m+1}}\, |\, Q_\tau}_{_{\!K}}}
          {{}_{_{K}\!}\moy{Q_\tau\, |\,  Q_\tau}_{_{\!K}}},
\end{multline}
where, as already mentioned, the ratio of scalar products is computed in the thermodynamic limit by the formula \eqref{sp-th} of Proposition~\ref{prop-vanish-sp}. 
We have to use now the analog of formulas \eqref{tr-expr-xi} and \eqref{Bethe-anti-aj} in the $K$-twisted case, i.e. 
\begin{align}
    &\tau (\xi _{k})
    =\mathsf{k}_{2}\, a(\xi _{k})\,\frac{Q_\tau(\xi_{k}-\eta)}{Q_\tau(\xi _{k})},
    \qquad
    k=1,\ldots,N,
    \\
     &\mathsf{k}_{1}\, d(q_{a_{j}})
        =-\mathsf{k}_{2}\, a(q_{a_{j}})\, \frac{Q_\tau(q_{a_{j}}-\eta )}{Q_\tau(q_{a_{j}}+\eta )},
        \qquad \forall\, a_{j}\leq R,
\end{align}
and the observation that $d(q_{a_{j}})=0$ for any $a_{j}>R$,
to rewrite the non-zero terms of \eqref{lim-el-bl-K} in a form that coincides with \eqref{sum-el-bl4}-\eqref{indices-a-modified}.

Hence, we have shown the following result:

\begin{prop}
For any $2m$-tuple $\boldsymbol{\epsilon}\equiv(\epsilon_1,\ldots,\epsilon_{2m})$, the matrix element of the form \eqref{el-block-th-K} in the $K$-twisted chain coincides, in the thermodynamic limit, with its counterpart \eqref{el-block-th} in the anti-periodic or periodic chain, i.e.
\begin{equation}
   \mathcal{F}_{m}^{(K)}(\boldsymbol{\epsilon }) =\mathcal{F}_{m}(\boldsymbol{\epsilon })  ,\label{Thermody-Sim}
\end{equation}
and is given by the multiple integral representations \eqref{Corr-EBB-1}, \eqref{Corr-EBB-2}.
In particular, this matrix elements vanishes when $s_{\boldsymbol{\epsilon}}+s_{\boldsymbol{\epsilon}}^{\prime }\not=m$, where $s_{\boldsymbol{\epsilon}}$, $s_{\boldsymbol{\epsilon}}'$ are defined as in \eqref{alpha-}-\eqref{alpha+}.
\end{prop}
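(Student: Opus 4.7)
The plan is to build directly on Proposition~\ref{prop-sum-el-bl-K}, which already expresses $F_m^{(K)}(\tau,\boldsymbol{\epsilon})$ as a double sum: an outer sum over subsets $\bar\alpha^\pm_{\boldsymbol{\epsilon}}\subset\alpha^\pm_{\boldsymbol{\epsilon}}$ (parametrizing how many poles at infinity contribute) and an inner sum over index choices, weighted by ratios of scalar products ${}_{_{K}\!}\moy{\bar{Q}_{\mathbf{A}_{m+1}}\,|\,Q_\tau}_{_{\!K}}/{}_{_{K}\!}\moy{Q_\tau\,|\,Q_\tau}_{_{\!K}}$. The first step is to insert the Slavnov-type determinant formula \eqref{sp-K}--\eqref{def-Nmu} with $\mu=\mathsf{k}_2/\mathsf{k}_1$ and exploit the crucial observation already noted after \eqref{Bethe-mu}: whenever $\bar{\mathbf{q}}$ is drawn from $\{q_1,\ldots,q_R\}\cup\{\xi_1,\ldots,\xi_m\}$, the Bethe equations \eqref{Bethe-mu} eliminate the explicit $\mu$-dependence and the matrices \eqref{def-Mmu}--\eqref{def-Nmu} reduce to \eqref{def-Mqq'}--\eqref{def-Nq}. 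Consequently Proposition~\ref{prop-vanish-sp} carries over verbatim to the $K$-twisted setting, yielding the same estimates \eqref{vanishing-sp}--\eqref{sp-th}.

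Next I would perform the case split based on the value of $s_{\boldsymbol{\epsilon}}+s'_{\boldsymbol{\epsilon}}$. For $s_{\boldsymbol{\epsilon}}+s'_{\boldsymbol{\epsilon}}\le m$ (i.e.\ $m_C\le m_B$), the counting argument in the proof of Corollary~\ref{cor-vanish} can be repeated: in the thermodynamic limit the only non-vanishing contributions come from configurations where $\deg\bar{Q}_{\mathbf{A}_{m+1}}=R$, which forces $\bar\alpha^\pm_{\boldsymbol{\epsilon}}=\alpha^\pm_{\boldsymbol{\epsilon}}$ and $s_{\boldsymbol{\epsilon}}+s'_{\boldsymbol{\epsilon}}=m$. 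In particular, all the contributions of the poles at infinity (which are weighted by the $K$-dependent prefactors $(\mathsf{k}_1-\mathsf{a})^{s-\#\bar\alpha^-}(\mathsf{k}_2-\mathsf{d})^{s'-\#\bar\alpha^+}$) drop out. For the reverse regime $s_{\boldsymbol{\epsilon}}+s'_{\boldsymbol{\epsilon}}>m$ the argument is less symmetric than in the anti-periodic case, because the $\Gamma^x$ symmetry is no longer present; I would instead invoke the alternative SoV construction built from the eigenbasis of $C^{(K)}(\lambda)$ when $\mathsf{c}\neq 0$, repeating the whole derivation with the roles of $B^{(K)}$ and $C^{(K)}$ exchanged, so that the same counting shows vanishing. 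The residual triangular case $\mathsf{c}=0$ is handled using Remark~\ref{rem-contr-poles-inf}: the poles at infinity disappear, the outer sum collapses to $\bar\alpha^\pm_{\boldsymbol{\epsilon}}=\alpha^\pm_{\boldsymbol{\epsilon}}$, and the remaining scalar product vanishes by \eqref{vanishing-sp} since $\deg\bar{Q}_{\mathbf{A}_{m+1}}<\deg Q_\tau$.

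Having localized the surviving terms on $s_{\boldsymbol{\epsilon}}+s'_{\boldsymbol{\epsilon}}=m$ with $\bar\alpha^\pm_{\boldsymbol{\epsilon}}=\alpha^\pm_{\boldsymbol{\epsilon}}$, what remains is algebraic normalization. I would use the $K$-twisted analog of \eqref{tr-expr-xi},
\begin{equation*}
    \tau(\xi_k)=\mathsf{k}_2\,a(\xi_k)\,\frac{Q_\tau(\xi_k-\eta)}{Q_\tau(\xi_k)},
\end{equation*}
together with the Bethe equation $\mathsf{k}_1\,d(q_{a_j})=-\mathsf{k}_2\,a(q_{a_j})\,Q_\tau(q_{a_j}-\eta)/Q_\tau(q_{a_j}+\eta)$ for $a_j\le R$ and the vanishing $d(q_{a_j})=0$ for $a_j>R$, to absorb all factors of $\mathsf{k}_1,\mathsf{k}_2$ in \eqref{sum-el-bl-K}. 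A direct verification then shows that the surviving expression is identical, term by term, to \eqref{sum-el-bl4}--\eqref{indices-a-modified}. Combined with \eqref{sp-th} of Proposition~\ref{prop-vanish-sp}, which is identical in both cases, this yields $\mathcal{F}_m^{(K)}(\boldsymbol{\epsilon})=\mathcal{F}_m(\boldsymbol{\epsilon})$, and therefore the multiple integral representations \eqref{Corr-EBB-1}--\eqref{Corr-EBB-2} hold.

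The main obstacle, as I see it, is the regime $s_{\boldsymbol{\epsilon}}+s'_{\boldsymbol{\epsilon}}>m$: unlike in the anti-periodic case, one cannot appeal to a $\Gamma^x$-like symmetry to halve the work, so one has to either run the dual SoV construction based on $C^{(K)}$ and re-derive analogs of Propositions~\ref{prop-inv-pb-K}--\ref{prop-act-2-K}, or separately treat the triangular limit $\mathsf{c}=0$. A careful check is also needed to ensure that the $K$-dependent prefactors and the $\mu$-dependent Slavnov formula genuinely recombine into the same master expression as in the anti-periodic derivation, which is where the Bethe equation step does most of the work.
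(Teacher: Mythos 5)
Your proposal is correct and follows essentially the same route as the paper: the disappearance of the explicit $\mu$-dependence in the Slavnov-type determinants so that Proposition~\ref{prop-vanish-sp} carries over, the counting argument of Corollary~\ref{cor-vanish} for $s_{\boldsymbol{\epsilon}}+s'_{\boldsymbol{\epsilon}}\le m$, the dual SoV construction based on $C^{(K)}(\lambda)$ for $s_{\boldsymbol{\epsilon}}+s'_{\boldsymbol{\epsilon}}>m$ with $\mathsf{c}\neq 0$ together with the separate treatment of the triangular case via Remark~\ref{rem-contr-poles-inf}, and the final normalization via the $K$-twisted TQ-relation and Bethe equations reducing to \eqref{sum-el-bl4}. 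You also correctly identified the main subtlety, namely the absence of a $\Gamma^x$-type symmetry in the general twisted case, which is exactly what forces the paper into the $B\leftrightarrow C$ exchange argument.
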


In other worlds, we have here explicitly shown that --- as expected from physical arguments --- the ground state correlation functions of the XXX spin 1/2 chain with quasi-periodic boundary conditions do not depend, in the thermodynamic limit, on the particular boundary condition we consider, i.e. on the particular form of the twist matrix $K$, and coincide with the correlation functions of the
periodic chain in the thermodynamic limit, at least for non-diagonal twists.
Of course, the same statement can be proven for diagonal twist, by developing the same
computations in the algebraic Bethe Ansatz framework as done in the periodic
case in \cite{KitMT00}. 


\section{Conclusion}

In this paper, we have explained how to compute correlation functions in the quantum SoV framework, and shown that it is possible to obtain, in this framework, the same kind of results as in the algebraic Bethe Ansatz framework \cite{KitMT00}  or the $q$-vertex operator approach \cite{JimM95L}. To this aim, we have considered a very simple example, the twisted XXX spin chain. 

One of the difficulties of the SoV approach for its applicability to physical systems comes from the fact that all results are a priori obtained in terms of the non-physical inhomogeneity parameters that have to be introduced for the method to work. Getting rid of these inhomogeneity parameters, i.e. taking the homogeneous limit, may be a very non-trivial task: at the level of the spectrum, we naturally obtain a description in terms of a discrete Baxter TQ-equation that we need to reformulate into a more conventional one \cite{NicT16}; the determinant representations for the scalar products that we naturally obtain also need to be transformed into more tractable expressions \cite{KitMNT16}; finally, the action of local operators on separate states involves the inhomogeneity parameters in a very intricate way, and needs to be reformulated.

This last point is crucial if we want to use this approach for the direct computation of correlation functions, and bring the SoV approach to same level of achievement as the algebraic Bethe Ansatz \cite{KitMT00} or the $q$-vertex operator approach \cite{JimM95L}. 
In this paper, we have therefore explained how to transform the SoV action into a more conventional one, involving the roots of the Baxter Q-function (the "Bethe roots") rather than the inhomogeneity parameters. More precisely, we have expressed these actions using multiple contour integrals: taking the residues inside the contours, we recover the SoV action in terms of the inhomogeneity parameters; taking the residues outside the contours, we obtain an ABA-type action, in terms of "Bethe roots". Note that, doing this, we also obtain some extra contributions from the pole at infinity. In fact, the correlation functions of the (non-diagonally) twisted XXX chain in finite volume involve many additional contributions with respect to the periodic or diagonally-twisted one, since the spin $S^z$ is no longer conserved. We have explicitly shown here that all these extra contributions are vanishing in the thermodynamic limit, hence leading in this limit to the same result as in the periodic case.

We expect our approach to correlation functions in SoV to be generalizable to more complicated models. A natural question in this respect concerns the (anti-periodic) XXZ chain which, contrary to what happens in the periodic case with Bethe Ansatz, is not a trivial generalization of the XXX case: new difficulties appear due to the fact that the Baxter Q-function is no longer a usual trigonometric polynomial \cite{BatBOY95,NicT16}. We intend to solve this problem in a future publication (see \cite{PeiT20} for first results on scalar products and form factors). Another natural and interesting question concerns the correlation functions of open chains (XXX or XXZ) with non-diagonal boundaries, for which preliminary results have already been obtained concerning scalar products of separate states \cite{KitMNT17,KitMNT18}. 

The computation of correlation functions for quantum integrable models associated to the higher rank cases is then a next natural target of great relevance. Given the recent notable achievements \cite{Skl96,Smi01,MaiN18,RyaV19,MaiN19,MaiN19d,MaiN19b,MaiN19c,MaiNV20,RyaV20,MaiNV20b,MarS16,GroLS17,CavGL19,GroLRV20} in their SoV description, with first results available also on compact scalar product formulae, the task to compute correlation functions in higher rank seems very promising nowadays. This is, in particular, the case if one can compute the action of the local operators in the simplified SoV basis introduced in \cite{MaiNV20b} for the rank 2 models. Indeed, under this choice the SoV measure simplify considerably and rank 1 Slavnov’s type determinants appear for the scalar product of separate states with transfer matrix eigenvectors. Then, the manipulations described in the present article for these determinants in the thermodynamic limit should be extendible to the higher rank case, under the assumption of a suitable description of the density of the Bethe roots in this limit. Nevertheless, one should point out that at the current stage the main difficulty left is exactly the computation of the action of local operators on these higher rank SoV basis. This represents a new problem mainly for the complexity of the action of the monodromy matrix elements on the SoV basis. In fact, one should recall that up now only the action of the transfer matrix has been computed in these SoV basis and this was possible thanks to the use of the fusion relations satisfied by the same transfer matrices. How to generalize this to the other elements of the monodromy matrix is currently under analysis but still at an early stage to be here discussed.

\section*{Acknowledgements}
G.N. would like to thank J.M. Maillet for his interest and for several stimulating discussions on the present paper and related subjects. G.N. is supported by CNRS and ENS de Lyon.
V.T. is supported by CNRS.


\appendix

\section{On elementary blocks for similar transfer matrices}
\label{app-gauge}

In this paper, as in \cite{KitMT00,JimM95L}, we have not computed the more general correlation functions but their elementary buildings blocks, i.e. quantities of the form \eqref{el-block-def}.
Here we make some comments on the role of the $GL(2)$ transformations on
such elementary building blocks for the quasi-periodic XXX chains, and on the consequences for the
computation of such elementary building blocks for similar transfer matrices.

Due to the $GL(2)$-invariance of the XXX R-matrix \eqref{mat-R}, the transfer matrix of the periodic chain,
\begin{equation}\label{period-transfer}
\mathcal{T}^{(I) }(\lambda )=\mathrm{tr}_{0}T_{0}(\lambda ),
\end{equation}
satisfies, for any invertible matrix $\gamma\in GL(2)$, the invariance property
\begin{equation}
  [ \mathcal{T}^{(I) }(\lambda ),\Gamma ]=0
  \qquad
  \Gamma =\bigotimes_{n=1}^{N}\gamma_n.
    \label{Gl2-invariance}
\end{equation}
As a consequence of this invariance, we obtain the following identity on the elementary blocks of correlation functions:
%
\begin{equation}
\frac{ \moy{\Psi _{I}\,|\,\prod_{j=1}^{m}E_{j}^{\epsilon_{2j-1},\epsilon _{2j}}\,|\Psi _{I} } }
       { \moy{\Psi _{I}\,|\Psi_{I}} }
       =\frac{ \moy{\Psi _{I}\,|\Gamma \left( \prod_{j=1}^{m}E_{j}^{\epsilon _{2j-1},\epsilon _{2j}}\right)
                           \Gamma^{-1}\,|\Psi _{I} } }
                 {\moy{\Psi _{I}\,|\Psi _{I} } },
\label{Gl2-Inva-EB-P}
\end{equation}
where $\ket{\Psi_I}$ denotes any eigenstate of \eqref{period-transfer}.

However, one should point out that, as soon as we consider quasi-periodic
boundary conditions with a non-identity twist $K$, the $GL(2)$ invariance of
the transfer matrix is lost. Hence, in general, for such a twisted chain in finite volume,
\begin{equation}
         \frac{ \moy{\Psi _{K}\,|\,\prod_{j=1}^{m}E_{j}^{\epsilon_{2j-1},\epsilon _{2j}}\,|\Psi _{K} } }
                { \moy{\Psi _{K}\,|\Psi_{K} } }
          \neq 
          \frac{ \moy{ \Psi _{K}\,|\Gamma \left(\prod_{j=1}^{m}E_{j}^{\epsilon _{2j-1},\epsilon _{2j}}\right) 
                   \Gamma^{-1}\,|\Psi _{K} } }{ \moy{\Psi _{K}\,|\Psi _{K} } },
\label{Dif-Simi-EB}
\end{equation}
for any $\gamma $ which does not commute with $K$, where $\ket{\Psi_K}$ denotes a given eigenstate of the $K$-twisted transfer matrix $\mathcal{T}^{ (K) }(\lambda )$.
Let use now consider the twist $K_\gamma=\gamma^{-1}\, K\, \gamma$. The $K_\gamma$-twisted transfer matrix is then given by
\begin{equation}\label{Similar-T}
   \mathcal{T}^{( K_{\gamma }) }(\lambda )=\Gamma ^{-1}\, \mathcal{T}^{( K) }(\lambda )\, \Gamma ,
\end{equation}
and admits the following eigenstates:
\begin{equation}
    \bra{\Psi_{K_\gamma}}=\bra{\Psi_K}\, \Gamma,
    \qquad
    \ket{\Psi_{K_\gamma}}=\Gamma^{-1}\,\ket{\Psi_K}.
\end{equation}
Hence,
\begin{equation}
    \frac{\moy{ \Psi _{K}\,|\Gamma \left( \prod_{j=1}^{m}E_{j}^{\epsilon_{2j-1},\epsilon _{2j}}\right) 
            \Gamma ^{-1}\,|\Psi _{K}} }{\moy{\Psi _{K}\,|\Psi _{K}} }
     =\frac{\moy{\Psi _{K_{\gamma}}\,|\,\prod_{j=1}^{m}E_{j}^{\epsilon _{2j-1},\epsilon _{2j}}\,
                        |\Psi_{K_{\gamma }}} }
              {\moy{\Psi _{K_{\gamma }}\,|\Psi _{K_{\gamma}} } }, 
 \label{Transformed-EB}
\end{equation}
and the inequality \eqref{Dif-Simi-EB} can be equivalently rewritten as
\begin{equation}\label{ineq-gauge}
       \frac{\moy{\Psi _{K}\,|\,\prod_{j=1}^{m}E_{j}^{\epsilon_{2j-1},\epsilon _{2j}}\,|\Psi _{K} } }
              {\moy{ \Psi _{K}\,|\Psi_{K} } }
       \neq 
       \frac{\moy{ \Psi _{K_{\gamma}}\,|\,\prod_{j=1}^{m}E_{j}^{\epsilon _{2j-1},\epsilon _{2j}}\,
                         |\Psi_{K_{\gamma }} } }
              {\moy{ \Psi _{K_{\gamma }}\,|\Psi _{K_{\gamma}} } },
\end{equation}
for any $\gamma $ which does not commute with $K$.
That is, the same elementary block associated to two similar transfer matrices (or
equivalently to two similar twist matrices) do not in general coincide in the finite
chain. 
Of course, the equality may be recovered in the thermodynamic limit, and we have indeed shown in section~\ref{app-twist} that
\begin{equation}
   \mathcal{F}_m^{(K)}(\boldsymbol{\epsilon})=\mathcal{F}_m^{(K_\gamma)}(\boldsymbol{\epsilon}),
\end{equation}
provided $\ket{\Psi _{K}}$ and $\ket{ \Psi _{K_{\gamma }}}$ are described in the thermodynamic limit by the density of Bethe roots \eqref{rho} on the real axis.

Finally, we want to point out that it is a priori not easy to deduce the expression of an elementary block in the gauge transform model from the ones that we can compute in the original model. In this respect, the exact computations of the elementary
blocks that we have developed in the SoV framework for the quasi-periodic
boundary conditions associated to non-diagonal twist matrices $K$ is an
interesting set of results in their own and not only for their ability to
describe our SoV approach to correlation functions.

In fact, let $K$ be non-diagonal but diagonalizable and $K_{\gamma }$
diagonal, then the similarity relations (\ref{Similar-T}) may suggest that,
in the XXX chain, one can compute the elementary blocks of the quasi-periodic
boundary condition associated to the twist $K$ in terms of those of the
twist $K_{\gamma }$, as it follows:
\begin{equation}
\frac{\langle \,\Psi _{K}\,|\prod_{j=1}^{m}E_{j}^{\epsilon _{2j-1},\epsilon
_{2j}}\,|\Psi _{K}\,\rangle }{\langle \,\Psi _{K}\,|\Psi _{K}\,\rangle }=%
\frac{\langle \,\Psi _{K_{\gamma }}\,|\left[ \Gamma ^{-1}\left(
\prod_{j=1}^{m}E_{j}^{\epsilon _{2j-1},\epsilon _{2j}}\right) \Gamma \right]
\,|\Psi _{K_{\gamma }}\,\rangle }{\langle \,\Psi _{K_{\gamma }}\,|\Psi
_{K_{\gamma }}\,\rangle }\text{.}
\end{equation}
The main problem with this approach is that the matrix element on the r.h.s.
of the above identity is not one simple elementary block but in general the
sum of $4^{m}$ different elementary blocks. Now by the symmetry satisfied by 
$\mathcal{T}^{\left( K_{\gamma }\right) }(\lambda )$ some of them can be
proven to be zero and we know how to compute all the others in the ABA
framework but we have still to sum all them up to get just one elementary
block associated to the transfer matrix $\mathcal{T}^{\left( K\right)
}(\lambda )$. Our previous discussion tells us that this sum has to reproduce
in the thermodynamic limit always the same elementary block. We have proven
it by our direct SoV approach, however to prove it only in the ABA framework
seems a complicate task as it is equivalent to prove that the large sum of
nonzero elementary blocks obtained expanding the difference:
\begin{equation}
\frac{\langle \,\Psi _{K_{\gamma }}\,|\left[ \Gamma ^{-1}\left(
\prod_{j=1}^{m}E_{j}^{\epsilon _{2j-1},\epsilon _{2j}}\right) \Gamma \right]
\,|\Psi _{K_{\gamma }}\,\rangle }{\langle \,\Psi _{K_{\gamma }}\,|\Psi
_{K_{\gamma }}\,\rangle }-\frac{\langle \,\Psi _{K_{\gamma
}}\,|\prod_{j=1}^{m}E_{j}^{\epsilon _{2j-1},\epsilon _{2j}}|\Psi _{K_{\gamma
}}\,\rangle }{\langle \,\Psi _{K_{\gamma }}\,|\Psi _{K_{\gamma }}\,\rangle }%
\text{,}
\end{equation}
has to be zero in the thermodynamic limit.


\end{document}